\documentclass[pra,notitlepage,twocolumn,nofootinbib,superscriptaddress]{revtex4-2}
\usepackage{mathtools}
\usepackage{amsmath}
\usepackage[shortlabels]{enumitem}
\usepackage{graphicx,epic,eepic,epsfig,amsmath,latexsym,amssymb,verbatim, color}
\usepackage{amsfonts}       
\usepackage{nicefrac}       
\usepackage{mathrsfs}
\usepackage{bbm}
\usepackage{float}
\usepackage{tikz}
\usetikzlibrary{chains}
\usetikzlibrary{fit}
\usetikzlibrary{arrows}
\usetikzlibrary{decorations}

\usepackage{epsfig}
\usetikzlibrary{shapes.symbols,patterns} 
\usepackage{pgfplots}

\usepackage[strict]{changepage}
\usepackage{hyperref}
\hypersetup{colorlinks=true,citecolor=blue,linkcolor=blue,filecolor=blue,urlcolor=blue,breaklinks=true}

\usepackage[marginal]{footmisc}
\usepackage{url}
\usepackage{theorem}

\newtheorem{definition}{Definition}
\newtheorem{proposition}{Proposition}
\newtheorem{lemma}[proposition]{Lemma}

\newtheorem{theorem}[proposition]{Theorem}


\def\squareforqed{\hbox{\rlap{$\sqcap$}$\sqcup$}}
\def\qed{\ifmmode\squareforqed\else{\unskip\nobreak\hfil
\penalty50\hskip1em\null\nobreak\hfil\squareforqed
\parfillskip=0pt\finalhyphendemerits=0\endgraf}\fi}
\def\endenv{\ifmmode\;\else{\unskip\nobreak\hfil
\penalty50\hskip1em\null\nobreak\hfil\;
\parfillskip=0pt\finalhyphendemerits=0\endgraf}\fi}
\newenvironment{proof}{\noindent \textbf{{Proof~} }}{\hfill $\blacksquare$}

\newcounter{remark}
\newenvironment{remark}[1][]{\refstepcounter{remark}\par\medskip\noindent%
\textbf{Remark~\theremark #1} }{\medskip}

\newcounter{example}

\mathchardef\ordinarycolon\mathcode`\:
\mathcode`\:=\string"8000
\def\vcentcolon{\mathrel{\mathop\ordinarycolon}}
\begingroup \catcode`\:=\active
  \lowercase{\endgroup
  \let :\vcentcolon
  }

\usepackage{cleveref}
\usepackage{graphicx}
\usepackage{xcolor}

\RequirePackage[framemethod=default]{mdframed}
\newmdenv[skipabove=7pt,
skipbelow=7pt,
backgroundcolor=darkblue!15,
innerleftmargin=5pt,
innerrightmargin=5pt,
innertopmargin=5pt,
leftmargin=0cm,
rightmargin=0cm,
innerbottommargin=5pt,
linewidth=1pt]{tBox}

\newmdenv[skipabove=7pt,
skipbelow=7pt,
backgroundcolor=red!15,
innerleftmargin=5pt,
innerrightmargin=5pt,
innertopmargin=5pt,
leftmargin=0cm,
rightmargin=0cm,
innerbottommargin=5pt,
linewidth=1pt]{rBox}

\newmdenv[skipabove=7pt,
skipbelow=7pt,
backgroundcolor=blue2!25,
innerleftmargin=5pt,
innerrightmargin=5pt,
innertopmargin=5pt,
leftmargin=0cm,
rightmargin=0cm,
innerbottommargin=5pt,
linewidth=1pt]{dBox}
\newmdenv[skipabove=7pt,
skipbelow=7pt,
backgroundcolor=darkkblue!15,
innerleftmargin=5pt,
innerrightmargin=5pt,
innertopmargin=5pt,
leftmargin=0cm,
rightmargin=0cm,
innerbottommargin=5pt,
linewidth=1pt]{sBox}
\definecolor{darkblue}{RGB}{0,76,156}
\definecolor{darkkblue}{RGB}{0,0,153}
\definecolor{blue2}{RGB}{102,178,255}
\definecolor{darkred}{RGB}{195,0,0}

\newcommand{\nc}{\newcommand}
\nc{\rnc}{\renewcommand}
\nc{\lbar}[1]{\overline{#1}}
\nc{\bra}[1]{\langle#1|}
\nc{\ket}[1]{|#1\rangle}

\nc{\ketbra}[2]{|#1\rangle\!\langle#2|}
\nc{\dketbra}[2]{|#1\rangle\!\rangle\!\langle\!\langle#2|}

\nc{\braket}[2]{\langle#1|#2\rangle}

\nc{\kett}[1]{|#1\rangle\!\rangle}
\nc{\braa}[1]{\langle\!\langle#1|}

\nc{\proj}[1]{| #1\rangle\!\langle #1 |}
\nc{\avg}[1]{\langle#1\rangle}
\nc{\rank}{\operatorname{Rank}}
\nc{\smfrac}[2]{\mbox{$\frac{#1}{#2}$}}
\nc{\tr}{\operatorname{Tr}}
\nc{\ox}{\otimes}
\nc{\dg}{\dagger}
\nc{\dn}{\downarrow}
\nc{\cA}{{\cal A}}
\nc{\cB}{{\cal B}}
\nc{\cC}{{\cal C}}
\nc{\cD}{{\cal D}}
\nc{\cE}{{\cal E}}
\nc{\cF}{{\cal F}}
\nc{\cG}{{\cal G}}
\nc{\cH}{{\cal H}}
\nc{\cI}{{\cal I}}
\nc{\cJ}{{\cal J}}
\nc{\cK}{{\cal K}}
\nc{\cL}{{\cal L}}
\nc{\cM}{{\cal M}}
\nc{\cN}{{\cal N}}
\nc{\cO}{{\cal O}}
\nc{\cP}{{\cal P}}
\nc{\cQ}{{\cal Q}}
\nc{\cR}{{\cal R}}
\nc{\cS}{{\cal S}}
\nc{\cT}{{\cal T}}
\nc{\cU}{{\cal U}}
\nc{\cV}{{\cal V}}
\nc{\cX}{{\cal X}}
\nc{\cY}{{\cal Y}}
\nc{\cZ}{{\cal Z}}
\nc{\cW}{{\cal W}}
\nc{\csupp}{{\operatorname{csupp}}}
\nc{\qsupp}{{\operatorname{qsupp}}}
\nc{\var}{{\operatorname{var}}}
\nc{\rar}{\rightarrow}
\nc{\lrar}{\longrightarrow}

\nc{\RR}{{{\mathbb R}}}
\nc{\CC}{{{\mathbb C}}}
\nc{\FF}{{{\mathbb F}}}
\nc{\NN}{{{\mathbb N}}}
\nc{\ZZ}{{{\mathbb Z}}}
\nc{\PP}{{{\mathbb P}}}
\nc{\QQ}{{{\mathbb Q}}}
\nc{\UU}{{{\mathbb U}}}
\nc{\EE}{{{\mathbb E}}}
\nc{\id}{{\operatorname{id}}}

\nc{\CHSH}{{\operatorname{CHSH}}}

\nc{\<}{\langle}
\rnc{\>}{\rangle}
\nc{\rU}{\mbox{U}}

\nc{\Sym}{{\operatorname{Sym}}}

\newcommand{\CPTP}{\text{\rm CPTP}}

\usepackage{hyperref}
\hypersetup{colorlinks=true,citecolor=blue,linkcolor=blue,filecolor=blue,urlcolor=blue,breaklinks=true}

\makeatletter
\def\grd@save@target#1{%
  \def\grd@target{#1}}
\def\grd@save@start#1{%
  \def\grd@start{#1}}
\tikzset{
  grid with coordinates/.style={
    to path={%
      \pgfextra{%
        \edef\grd@@target{(\tikztotarget)}%
        \tikz@scan@one@point\grd@save@target\grd@@target\relax
        \edef\grd@@start{(\tikztostart)}%
        \tikz@scan@one@point\grd@save@start\grd@@start\relax
        \draw[minor help lines,magenta] (\tikztostart) grid (\tikztotarget);
        \draw[major help lines] (\tikztostart) grid (\tikztotarget);
        \grd@start
        \pgfmathsetmacro{\grd@xa}{\the\pgf@x/1cm}
        \pgfmathsetmacro{\grd@ya}{\the\pgf@y/1cm}
        \grd@target
        \pgfmathsetmacro{\grd@xb}{\the\pgf@x/1cm}
        \pgfmathsetmacro{\grd@yb}{\the\pgf@y/1cm}
        \pgfmathsetmacro{\grd@xc}{\grd@xa + \pgfkeysvalueof{/tikz/grid with coordinates/major step}}
        \pgfmathsetmacro{\grd@yc}{\grd@ya + \pgfkeysvalueof{/tikz/grid with coordinates/major step}}
        \foreach \x in {\grd@xa,\grd@xc,...,\grd@xb}
        \node[anchor=north] at (\x,\grd@ya) {\pgfmathprintnumber{\x}};
        \foreach \y in {\grd@ya,\grd@yc,...,\grd@yb}
        \node[anchor=east] at (\grd@xa,\y) {\pgfmathprintnumber{\y}};
      }
    }
  },
  minor help lines/.style={
    help lines,
    step=\pgfkeysvalueof{/tikz/grid with coordinates/minor step}
  },
  major help lines/.style={
    help lines,
    line width=\pgfkeysvalueof{/tikz/grid with coordinates/major line width},
    step=\pgfkeysvalueof{/tikz/grid with coordinates/major step}
  },
  grid with coordinates/.cd,
  minor step/.initial=.2,
  major step/.initial=1,
  major line width/.initial=2pt,
}
\makeatother

\usepackage{thmtools}
\usepackage{thm-restate}
\usepackage{etoolbox}

\makeatletter
\def\problem@s{}
\newcounter{problems@cnt}

\newcommand{\allproblems}{\problem@s}
\makeatother

\usepackage{array,mathtools,multirow,times,bm,tcolorbox,relsize,booktabs}
\usepackage[utf8]{inputenc}
\usepackage[T1]{fontenc}

\definecolor{colortwo}{rgb}{0.4,0.77,0.17}
\definecolor{colorthree}{rgb}{0.01,0.51,0.93}

\allowdisplaybreaks

\begin{document}
\title{Quantum Coherence and Distinguishability as Complementary Resources: \\A Resource-Theoretic Perspective from Wave-Particle Duality}

\author{Zhiping Liu}
\thanks{Zhiping Liu and Chengkai Zhu contributed equally to this work.} 
\affiliation{National Laboratory of Solid State Microstructures and School of Physics, Collaborative Innovation Center of Advanced Microstructures, Nanjing University, Nanjing 210093, China}
 \affiliation{Thrust of Artificial Intelligence, Information Hub,\\
The Hong Kong University of Science and Technology (Guangzhou), Guangzhou 511453, China}
\author{Chengkai Zhu} 
\thanks{Zhiping Liu and Chengkai Zhu contributed equally to this work.}
\affiliation{Thrust of Artificial Intelligence, Information Hub,\\
The Hong Kong University of Science and Technology (Guangzhou), Guangzhou 511453, China}
\author{Hua-Lei Yin}
\affiliation{Department of Physics and Beijing Key Laboratory of Opto-electronic Functional Materials and Micro-nano Devices, \\
Key Laboratory of Quantum State Construction and Manipulation (Ministry of Education), \\
Renmin University of China, Beijing 100872, China}
\affiliation{National Laboratory of Solid State Microstructures and School of Physics, Collaborative Innovation Center of Advanced Microstructures, Nanjing University, Nanjing 210093, China}
\author{Xin Wang}
\email{felixxinwang@hkust-gz.edu.cn}
\affiliation{Thrust of Artificial Intelligence, Information Hub,\\
The Hong Kong University of Science and Technology (Guangzhou), Guangzhou 511453, China}

\begin{abstract}
Wave-particle duality, a fundamental principle of quantum mechanics, encapsulates the complementary relationship between the wave and particle behaviors of quantum systems. In this paper, we treat quantum coherence and classical distinguishability as complementary resources and uncover a novel duality relation, which is explored through quantum state discrimination under incoherent operations, extending beyond typical interference scenarios. We prove that in an ensemble of mutually orthogonal pure states, the sum of `co-bits', quantifying the coherence preserved under incoherent free operations, and classical bits, representing the distinguishability extracted via quantum state discrimination, is bounded. This coherence-distinguishability duality relation exposes an inherent trade-off between the simultaneous preservation of a system's quantum coherence (wave-like property) and the extraction of its classical distinguishability (particle-like property). Our findings provide a fresh perspective on wave-particle duality through quantum resource theories, offering complementary insights into manipulating quantum and classical resources, with implications for quantum foundations and quantum technologies.

\end{abstract}

\date{\today}
\maketitle


\section{Introduction}

Niels Bohr introduced the complementarity principle as an essential feature of quantum mechanics~\cite{bohr1928quantum}, revealing a fundamental trade-off between two conjugate properties of a quantum system, such as position and momentum~\cite{busch1985indeterminacy}. This principle extends beyond a qualitative concept, connecting deeply with fundamental phenomena like the uncertainty principle~\cite{robertson1929uncertainty, maassen1988generalized, Coles_2014}, quantum nonlocality~\cite{Banik_2013}, and wave-particle duality. Complementarity also serves as a potential resource in practical applications, such as quantum cryptography~\cite{koashi2009simple, zhang2023quantum}, quantum walks~\cite{Kendon_2005}, and mutually unbiased bases~\cite{Petz_2007, Tavakoli_2021}.

Among the various manifestations of complementarity, wave-particle duality stands out as a striking example, exemplifying the continuous trade-off between wave-like and particle-like behaviors in quantum systems. As the wave behavior strengthens, the particle behavior weakens, and vice versa, highlighting its central role in exploring the essence of quantum theory~\cite{zhang2022building, Catani_2023, Catani_Lorenzo_2023,basso2021uncertainty}. This duality is typically demonstrated in interference experiments~\cite{menzel2012wave,wang2021molecular, yoon2021quantitative, chen2022experimental}. In double-slit setups, particle behavior is characterized by path information acquired by a which-path detector, while wave behavior is determined by the visibility of the interference pattern. Quantitative statements are formulated as an inequality~\cite{jaeger1995two, englert1996fringe}, namely wave-particle duality relation~\cite{wootters1979complementarity, greenberger1988simultaneous}:
\begin{equation}
\label{eq: WPDR}
        D^2+ V^2 \leq 1,
\end{equation}
where $D$ represents path distinguishability and $V$ denotes visibility of the interference fringe. 

\begin{figure}[t]
    \centering
    \includegraphics[width=1.03\linewidth]{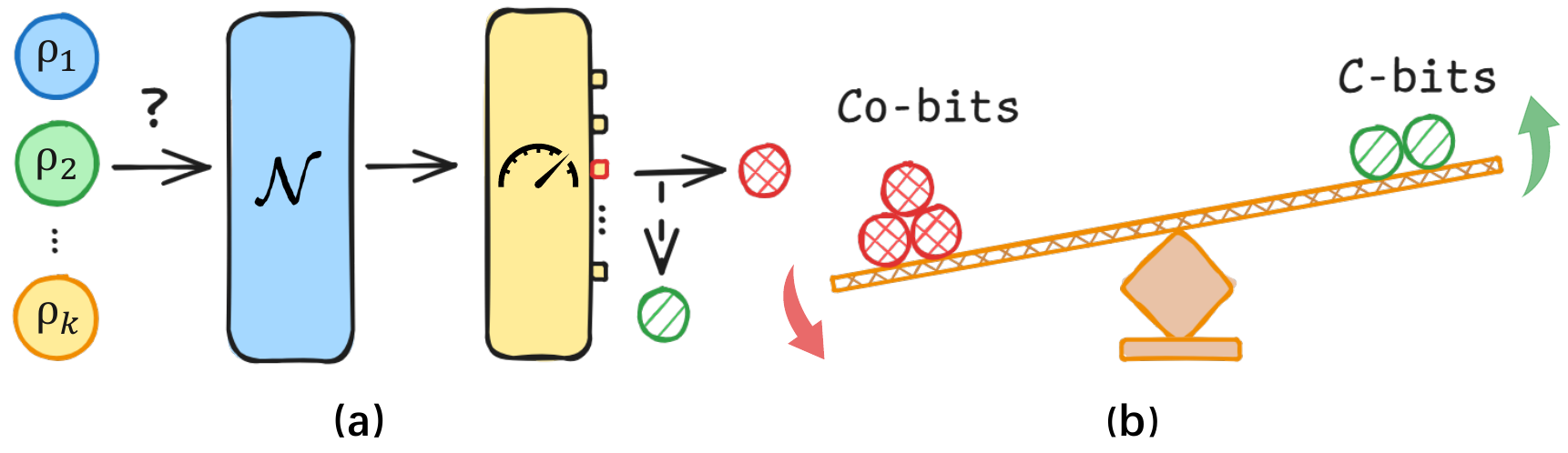}
    \caption{(a) Quantum state discrimination via free operations. The discrimination procedure consists of applying a free operation $\cN$ w.r.t. a quantum resource theory and measuring the state on a computational basis. 
    (b) Coherence-distinguishability duality relation. There is a trade-off between the quantum coherence resource (`co-bits') and the classical distinguishability resource (`c-bits').}
    \label{fig:duality} 
    \vspace{-0.3cm}
\end{figure}

Beyond double-slit setups, wave-particle duality has been extended to multipath interferometers~\cite{durr2001quantitative, bimonte2003comment} and reformulated within quantum information theory. These reformulations~\cite{Bera_2015,Bagan_2016, qureshi2017wave,Menon_2018, bagan2020wave, srivastava2021resource} leverage the resource theory of coherence~\cite{Winter_2016, Napoli_2016, Regula_2018,Fang_2018} and the fundamental quantum state discrimination (QSD) task~\cite{bae2015quantum}. In this context, coherence measures~\cite{Baumgratz_2014} replace visibility to capture wave properties effectively, while the distinguishability among detector states provides which-path information.  

In the resource-theoretic context, coherence is a fundamental quantum resource that generates superposition, which is essential for quantum algorithms~\cite{grover2000synthesis, Ahnefeld_2022} and intrinsic randomness, which is vital for quantum cryptography~\cite{Huttner_1995,Yuan_2015,ma2016quantum}. In contrast, distinguishability serves as a classical resource, enabling the extraction of deterministic classical information encoded in quantum systems through QSD~\cite{Jozsa_2000,hausladen1996classical,Terhal2001, Wang_2019_dis,salzmann2021symmetric}. 

From a broader perspective, independent of specific physical setups, understanding and characterizing the interplay between quantum resources and classical distinguishability has been a central focus of quantum information theory since its early days~\cite{Bennett1999b,Walgate2000,Bandyopadhyay2015,zhu2024limitations,zhu2025entanglement}. However, previous work has primarily focused on extending wave-particle duality into interference-based scenarios, such as many-particle systems~\cite{Dittel_2021, Menssen_2017, Brunner_2023} and discrimination games~\cite{Bagan_2018}, despite adapting a resource-theoretic perspective. A key question remains: What is the underlying relationship between quantum coherence and classical distinguishability in broader quantum information contexts, and does a quantitative complementarity exist beyond traditional interference experiments? Exploring this question could provide deeper physical insights into quantum resource theory and offer a fresh perspective on wave-particle duality. Moreover, it may offer a unified approach that enhances our understanding of the distinction between coherence and other quantum resources, such as entanglement and magic states, particularly in relation to their interaction with classical distinguishability.

In this work, we uncover a coherence-distinguishability duality relation within mutually orthogonal pure-state ensembles. Unlike previous works that assumed an interferometer scenario, our discovery stems from exploring coherence resource manipulation in discrimination tasks, an essential scenario offering operational characterization in general resource theory~\cite{Takagi_2019, takagi2019operational, Wu_2021,Regula_2021}, by introducing a framework for QSD via free operations (see Fig.~\ref{fig:duality}(a)). This duality relation shows that the more one extracts classical information from discriminating states, the less coherence can be preserved, and vice versa. It unveils an inherent trade-off between the coherence resource preserved after discrimination and the achievable perfect distinguishability within these ensembles (see Fig.~\ref{fig:duality}(b)). 

In particular, we present two intriguing cases. Firstly, we reveal that the duality relation is tight. When considering mutually orthogonal maximally coherent states, the sum of the maximum ``co-bits'' left and the classical bits extracted achieves the bound.  Secondly, when discriminating a complete orthonormal basis of a $d$-dimensional Hilbert space, no coherence can be preserved while extracting $\log_2 d$ classical bits. This case represents an extreme situation within our duality relation, highlighting the mutual exclusivity between coherence and distinguishability. Our work establishes a fundamental connection between coherence and distinguishability as complementary resources, extending the wave-particle duality relation into a new scenario within the realm of quantum resource theory.

\section{Quantum state discrimination via free operations}

We begin with an introduction to quantum resource theories and quantum state discrimination. Let $\cL(\cH_A, \cH_B)$ denote the set of linear operations from a $d_A$-dimensional Hilbert space $\cH_A$ to $\cH_B$. Let $\cD(\cH_A)$ be the set of density operators acting on $\cH_A$, and $\cN_{A\rightarrow B}$ be a quantum operation from system $A$ to $B$ which is a completely positive and trace-preserving map. A quantum resource theory of states is defined as a tuple $(\cF, \cO)$ where $\cF$ is the set of free states; $\cO$ is the set of free operations that preserve free states, i.e., $\cN(\rho) \in \cF,\, \forall \cN \in \cO, \forall \rho \in \cF$. For $(\cF, \cO)$, a reasonable resource measure $\cR(\rho) \in \mathbb{R}, \forall \rho \in \cD(\cH_A)$ satisfies monotonicity $\cR(\cN(\rho)) \leq \cR(\rho)$  and positivity $\cR(\rho) \geq 0, \forall \rho \in \cD(\cH_A), \forall \cN \in \cO$. Such a quantum resource theory intuitively arises when there is a restricted set of operations $\cO$ that are significantly easier to implement than the others, e.g., local operations and classical communication (LOCC) in entanglement theory~\cite{Chitambar_2014}, Clifford operations in the quantum resource theory of magic  states~\cite{Veitch_2014,Veitch2012}. 

For coherence, we fix the incoherent basis as the computational basis $\{\ket{i}\}_{i}$ and denote by $\cI$ the set of incoherent states, i.e., density matrices diagonal in $\{\ket{i}\}_{i}$. A $d$-dimensional maximally coherent state can be used to prepare any other state of the same dimension deterministically through incoherent free operations. One example is given by $\ket{\Psi_d}=\frac{1}{\sqrt{d}} \sum_{i=0}^{d-1}\ket{i}$, denoted as $\Psi_d = \ketbra{\Psi_d}{\Psi_d}$. The maximal set of operations that map incoherent states to incoherent states is called the maximally incoherent operations (MIO)~\cite{aberg2006quantifying}. The incoherent operations (IO)~\cite{Baumgratz_2014} require an incoherent Kraus decomposition $\{K_n\}_n$ such that $\frac{K_n\rho K_n^{\dagger}}{\tr[K_n\rho K_n^{\dagger}]} \in \mathcal{I}, \ \forall n, \rho \in \mathcal{I}$.
The dephasing-covariant incoherent operations (DIO)~\cite{chitambar2016critical, Marvian_2016} are those quantum operations $\mathcal{E}$ which commute with the dephasing operations $\Delta$ such that $[\Delta, \cE] = 0$. Finally, the strictly incoherent operations (SIO)~\cite{Winter_2016} are operations fulfilling both $\{K_n\}_n$ and $\{K^{\dagger}_n\}_n$ are sets of incoherent operators. There are hierarchies among these free operations: $\text{SIO} \subsetneq \textrm{IO} \subsetneq \textrm{MIO}$, $\text{SIO} \subsetneq \textrm{DIO} \subsetneq \textrm{MIO}$~\cite{chitambar2016critical}. More details about QRT of coherence can be found in Appendix~\ref{appendix: QRT_coherence}. 

Recall that for the minimum-error state discrimination, one aims to find a Positive Operator-Valued Measure (POVM) $\{E_j\}_{j=0}^{k-1}$ to maximize the average success probability of discriminating a state ensemble $\Omega = \{(p_j,\rho_j)\}_{j=0}^{k-1}$ with $\sum_j p_j =1, \rho_j\in \cD(\cH_A)$:
\begin{equation}
    P_{\rm suc}(\Omega) = \max_{\{E_j\}_j}\sum_{j} p_j\tr(E_j\rho_j).
\end{equation}
Numerous studies have been carried out to understand the resource quantification and limits of such a task when measurements are restricted to different classes, including POVMs with locality constraints~\cite{Childs2013,Bandyopadhyay2011a, chitambar2013local, chitambar2013revisiting} (i.e., LOCC, separable, PPT POVMs), incoherent~\cite{Oszmaniec_2019}, stabilizer measurement~\cite{zhu2024limitations}, all of which are considered within the context of different quantum resource theories.

To investigate the quantum resource manipulation in QSD, we reconsider a general framework for discriminating $\Omega = \{(p_j,\rho_j)\}_{j=0}^{k-1}$ as the following steps in order: 
(\uppercase\expandafter{\romannumeral1}) Receive an unknown state $\rho_j$ with prior probability $p_j$. (\uppercase\expandafter{\romannumeral2}) Apply a quantum channel $\cN_{A\rightarrow BA'}$ to $\rho_j$, yielding $\omega_{BA'}^{(j)} = \cN_{A\rightarrow BA'}(\rho_j)$ where $A'\cong A$ and $\dim\cH_B= k$. (\uppercase\expandafter{\romannumeral3}) Measure $\omega_{BA'}^{(j)}$ on the subsystem $B$ in computational basis. If the outcome is $i$, decide the received state is $\rho_i$.

We call the above quantum channel $\cN_{A\rightarrow BA'}$ a \textit{discrimination channel}. This framework provides an intuitive approach for assessing the `resourcefulness' of the discrimination process by restricting discrimination channels to free operations and incorporating the reference system $A'$, which helps track the degraded resources throughout the process. Specifically, for a given quantum state ensemble $\Omega = \{(p_j,\rho_j)\}_{j=0}^{k-1}$ and a quantum resource theory $(\mathcal{F},\mathcal{O})$, we introduce the optimal average success probability by free operations as follows.
\begin{equation*}\label{eq: free_QSD}
\begin{aligned}
    \widetilde{P}_{\rm suc, \cO}(\Omega) = &\;\sup \sum_{j=0}^{k-1} p_j \tr[\cN_{A\rightarrow BA'}(\rho_j)(\ketbra{j}{j}_B\ox I_{A'})]\\
    &\;{\rm s.t.}\;\; \cN_{A\rightarrow BA'} \in \cO,
\end{aligned}
\end{equation*}
where $\{\ket{j}\}_j$ is the computational basis.

Furthermore, we denote $\widetilde{P}_{\rm suc}(\Omega)$ as the maximal average success probability attainable, where $\cN_{A\rightarrow BA'}$ is optimized over all quantum channels. 
It is worth noting that $\widetilde{P}_{\rm suc,\cO}(\Omega)$ can be equivalently characterized by optimization over operations $\cN_{A\rightarrow B} \in \cO$, since discarding the subsystem $A'$ typically constitutes a free operation. The inclusion of the reference system $A'$, however, allows for a more complete quantification of the resources preserved during the discrimination process, as made clearer in Theorem~\ref{thm:roc_upper}. Note if there is an optimal $\cN_{A\rightarrow BA'} \in \cL(\cH_A, \cH_{BA'})$ such that $\widetilde{P}_{\rm suc}(\Omega) = P_{\rm suc}(\Omega)$, we say $\cN_{A\rightarrow BA'}$ can optimally discriminate $\Omega$.

\vspace{2mm}
\section{Coherence manipulation in state discrimination}
Now, in the context of quantum coherence manipulation in QSD, we present our first result that the process of optimally discriminating an ensemble $\Omega$ does not necessitate the consumption of additional quantum coherence. We establish this by demonstrating that the optimal average success probability, typically associated with unrestricted POVMs, can be realized through free operations within our discrimination framework.

\begin{proposition}
\label{prop:QSD_IO}
For a $d$-dimensional state ensemble $\Omega = \{(p_j,\rho_j)\}_{j=0}^{k-1}$,
the optimal discrimination probability $P_{\rm suc}(\Omega)$ can be achieved by incoherent operations. 
\end{proposition}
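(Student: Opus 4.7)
The plan is to prove the proposition constructively by exhibiting, for any POVM $\{E_j\}_{j=0}^{k-1}$ that achieves $P_{\rm suc}(\Omega)$ in the usual (unrestricted) sense, an explicit incoherent channel $\cN_{A\to BA'}$ that reproduces the same success probability inside the framework of Section~II. The key idea is that the reference system $A'$ can absorb whatever quantum information is left after the measurement; by forcing the post-measurement state on $A'$ to be a single fixed computational basis state, all coherence generated by the measurement can be stripped, while the outcome is still recorded coherently-freely in $B$.

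Concretely, I would first fix an optimal POVM $\{E_j\}$ and write each effect in its (subnormalized) spectral form, $E_j=\sum_{n}\ketbra{\phi_{j,n}}{\phi_{j,n}}$, so that $\sum_{j,n}\ketbra{\phi_{j,n}}{\phi_{j,n}}=I_A$. Then I would define the Kraus operators
\begin{equation*}
K_{j,n}\;=\;\ket{j}_B\otimes\ket{0}_{A'}\,\bra{\phi_{j,n}}_A ,\qquad j=0,\dots,k-1,\ n=1,\dots,d,
\end{equation*}
and set $\cN_{A\to BA'}(\rho)=\sum_{j,n}K_{j,n}\,\rho\,K_{j,n}^{\dagger}$. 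The first step is to check trace preservation, which follows immediately from $\sum_{j,n}K_{j,n}^{\dagger}K_{j,n}=\sum_j E_j=I_A$. The second step is to verify the discrimination identity
\begin{equation*}
\tr\!\bigl[\cN_{A\to BA'}(\rho_j)\,(\ketbra{j}{j}_B\otimes I_{A'})\bigr]=\tr(E_j\rho_j),
\end{equation*}
which is a direct computation from the rank-one form of $K_{j,n}$. Summing against $p_j$ then gives the claimed lower bound $\widetilde{P}_{\rm suc,\cO}(\Omega)\ge P_{\rm suc}(\Omega)$, and the reverse inequality is automatic since $P_{\rm suc}(\Omega)$ is the unconstrained optimum.

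The only substantive step is verifying that $\cN_{A\to BA'}$ is indeed incoherent, i.e.\ that each $K_{j,n}$ maps incoherent states to (subnormalized) incoherent states. This is what the construction was engineered to make transparent: on any computational basis state $\ket{m}_A$, one has $K_{j,n}\ket{m}_A=\overline{\phi_{j,n}(m)}\,\ket{j,0}_{BA'}$, so $K_{j,n}\ketbra{m}{m}K_{j,n}^{\dagger}=|\phi_{j,n}(m)|^2\,\ketbra{j,0}{j,0}_{BA'}$, which is a nonnegative multiple of a single incoherent basis element of $\cH_B\otimes\cH_{A'}$. Hence each Kraus operator is incoherent and $\cN_{A\to BA'}\in \text{IO}$.

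The main (mild) obstacle is purely conceptual: one must resist the temptation to use the ``natural'' Stinespring dilation $K_j=\ket{j}_B\otimes \sqrt{E_j}$, which leaves the post-measurement state on $A'$ coherent in general and therefore need not be incoherent. Replacing $\sqrt{E_j}$ by a rank-one collapse onto $\ket{0}_{A'}$ (at the price of enlarging the Kraus index set from $j$ to $(j,n)$) is exactly what renders every Kraus operator incoherent while preserving the outcome statistics on $B$. Once this substitution is made, the verification reduces to the short calculations above.
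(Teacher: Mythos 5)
Your construction is correct and is essentially the paper's own proof: the paper likewise takes the spectral decomposition of each optimal POVM effect and forms rank-one Kraus operators $K_i^q=\sqrt{\lambda_i^q}\,\ketbra{q}{\psi_i^q}$ (your $K_{j,n}$ with the $A'$ register discarded rather than set to $\ket{0}$), verifies incoherence on basis states exactly as you do, and obtains the same quantum--classical IO channel. Your ``automatic'' reverse inequality is also sound (any channel followed by a computational-basis measurement induces a POVM via the adjoint map), which is the content of the paper's Lemma on quantum--classical channels.
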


Proposition~\ref{prop:QSD_IO} is established by proving $P_{\rm suc}(\Omega) = \widetilde{P}_{\rm suc,IO}(\Omega) = \widetilde{P}_{\rm suc, MIO}(\Omega)$, with the detailed proof deferred to Appendix~\ref{appendix: coherence_qsd}. The wave-particle duality relation reminds us an intuition that increased visibility pattern always implies decreased path discrimination. This pivotal finding aligns with this intuition that extracted distinguishability cannot increase as coherence increases, indicating their complementary nature.

It is worth noting the relationship between quantum state discrimination via free operations and free POVMs. From the perspective of resource theories of measurements, the `free measurements' for coherence are previously considered as incoherent measurements~\cite{Oszmaniec_2019, Takagi_2019, takagi2019operational, Ducuara_2020}, which are diagonal in the fixed basis $\{\ket{i}\}_i$. In Appendix~\ref{appendix: coherence_qsd}, we show that the discrimination performed by incoherent measurements is equivalent to our aforementioned process with a discrimination channel belonging to DIO (SIO), a strict subset of MIO (IO). Combined with Proposition~\ref{prop:QSD_IO}, it reveals a strict hierarchy between MIO and DIO in terms of the optimal discrimination of coherent states. We can also see that perfect discrimination requires coherent POVMs but no coherence-generating processes.

When unrestricted POVMs alone cannot achieve optimal discrimination, one may introduce ancillary resource states by distinguishing $\Omega' =  \{(p_j,\rho_j \otimes \tau)\}_{j=0}^{k-1}$, where $\tau$ is some resource state. For example, the assistance of Bell states can enhance discrimination capabilities of PPT POVMs~\cite{Bandyopadhyay2009,Yu_2014,bandyopadhyay2015limitations, Bandyopadhyay_2021}. And a similar phenomenon appears in the resource theory of magic state~\cite{zhu2024limitations}. However, we note that even the maximally coherent state $\Psi_d$ is not helpful when using DIO (SIO), or equivalently, incoherent POVMs, for discrimination.

\begin{proposition}
\label{prop:QSD_cost}
For a $d$-dimensional state ensemble $\Omega = \{(p_j,\rho_j)\}_{j=0}^{k-1}$, the optimal discrimination probability via DIO cannot be improved with the assistance of any coherent state.
\end{proposition}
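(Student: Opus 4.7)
The plan is to reduce the statement to the fact that incoherent POVMs on the joint system $AC$ cannot exploit the coherence of the ancilla $\tau$. From the discussion after Proposition~\ref{prop:QSD_IO} (and its detailed formulation in Appendix~\ref{appendix: coherence_qsd}), DIO-based discrimination with computational-basis readout is exactly equivalent to discrimination by incoherent POVMs on the input system, so it suffices to show
\begin{equation*}
\widetilde{P}_{\rm suc, DIO}\bigl(\{(p_j, \rho_j \otimes \tau)\}_j\bigr) \leq \widetilde{P}_{\rm suc, DIO}(\Omega)
\end{equation*}
for every state $\tau$ on an ancilla system $C$; the reverse inequality is immediate by discarding $C$.

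First I would set up the POVM correspondence precisely on the joint system. For a DIO channel $\cN_{AC\to BA'C'}$, the Heisenberg POVM elements $E_j = \cN^\dagger(\ketbra{j}{j}_B \otimes I_{A'C'})$ must commute with dephasing, since $\Delta\circ\cN=\cN\circ\Delta$ transposes to $\cN^\dagger\circ\Delta = \Delta\circ\cN^\dagger$, and $\ketbra{j}{j}_B \otimes I_{A'C'}$ is already diagonal. Hence each $E_j$ is diagonal in the product computational basis $\{\ket{i}_A\ket{k}_C\}$. The success probability on the augmented ensemble then satisfies
\begin{equation*}
\sum_j p_j \tr[(\rho_j \otimes \tau) E_j] = \sum_j p_j \tr[(\Delta_A(\rho_j) \otimes \Delta_C(\tau)) E_j],
\end{equation*}
where the equality uses only diagonality of $E_j$. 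In particular, the coherence of $\tau$ is erased at the outset.

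Next I would exploit this diagonal structure to unpack the ancilla. Writing $\Delta_C(\tau) = \sum_k q_k \ketbra{k}{k}_C$ and decomposing each diagonal effect as $E_j = \sum_k E_{j,k} \otimes \ketbra{k}{k}_C$ with $E_{j,k}$ diagonal on $A$, the POVM completeness $\sum_j E_j = I_{AC}$ yields $\sum_j E_{j,k} = I_A$ for every $k$. So for each fixed $k$, $\{E_{j,k}\}_j$ is itself an incoherent POVM on $A$, and the total success probability reorganises as
\begin{equation*}
\sum_k q_k \sum_j p_j \tr[\rho_j E_{j,k}],
\end{equation*}
a convex combination of success probabilities achievable by unassisted incoherent POVMs on $A$. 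Each inner term is at most $\widetilde{P}_{\rm suc, DIO}(\Omega)$ by the DIO correspondence again, so the convex combination is also bounded by $\widetilde{P}_{\rm suc, DIO}(\Omega)$, which closes the inequality.

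The principal subtlety is concentrated in the first step: the DIO-to-incoherent-POVM correspondence must be invoked on the joint system $AC$ rather than on $A$ alone, so that all off-diagonal structure of the ancilla is rendered operationally invisible from the start. Once diagonality of $E_j$ is in hand everything downstream is convex bookkeeping over classical POVMs, and no explicit optimal channel needs to be constructed.
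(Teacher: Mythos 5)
Your proposal is correct and follows essentially the same route as the paper's proof: both hinge on the dephasing-covariance of the DIO channel forcing the effective POVM on the joint system $AC$ to be incoherent, so that only $\Delta_C(\tau)$ enters and the ancilla's coherence is erased, reducing everything to an unassisted incoherent POVM on $A$ (the paper's new POVM $E'_j$ is exactly your $\sum_k q_k E_{j,k}$). Your Heisenberg-picture formulation and explicit convex decomposition over the ancilla basis is a somewhat more careful write-up of the same argument, but not a different proof.
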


Proposition~\ref{prop:QSD_cost} implies a no-go case for reaching optimal discrimination in discriminating general coherent state ensembles via DIO (SIO), with the proof deferred to the appendix. When under more restricted incoherent operations, one can not extract more distinguishability even using more coherence resources.
Together with Proposition~\ref{prop:QSD_IO}, this result further emphasizes the unique complementarity between quantum coherence and classical distinguishability in resource manipulation, contrasting it with their interplay with resources like entanglement and non-stabilizerness.

\vspace{2mm}
\section{Coherence-distinguishability duality}
Given that optimal discrimination does not require additional coherence consumption, it is reasonable to postulate that the discrimination process may instead utilize the coherence inherent in the state ensemble. This prompts intriguing questions: how much coherence can be maximally preserved after a discrimination process? Is there a duality relation between the distinguishability and the maximally preserved coherence?

To address these questions, we start with the problem of discriminating an ensemble of mutually orthogonal pure states, which are foundational in quantum information theory~\cite{Bennett_1999, Ghosh_2001, Halder2019} and permit perfect discrimination. 
For coherence analysis, we consider the maximum relative entropy of coherence~\cite{Bu_2017}, denoted as $C_{\text{\rm max}}(\rho) = \log_2(1+C_{R}(\rho))$, where $C_{R}(\rho) = \min_{\sigma \in \cD(\cH_d)} \{s \geq 0 | \frac{\rho +s\sigma}{1+s} \coloneqq \tau \in \cI \}$ is the robustness of coherence. This measure is significant as it can be experimentally observed through a witness observable~\cite{Napoli_2016, zheng2018experimental}. The maximal value of $C_{\text{\rm max}}(\cdot)$ for a $d$-dimensional state is achieved by $\Psi_d$ with $C_{\text{\rm max}}(\Psi_d) = \log_2 d$~\cite{Piani_2016}, corresponding with $\log_2 d$ coherent bits (co-bits)~\cite{Chitambar_2016}. Then, we introduce the \textit{post-disrimination coherence} as the maximum average \textit{co-bits} that can be preserved after perfect discrimination.

\begin{definition}[Post-discrimination coherence]
\vspace{-0.2cm}
For a $d$-dimensional state ensemble $\Omega = \{(p_j,\rho_j)\}_{j=0}^{k-1}$, the post-discrimination coherence under maximally incoherent operations is defined as $\mathbf{C}_{\rm MIO}(\Omega):= \log_2 (1+ \eta)$, where
\begin{equation*}
\vspace{-0.3cm}
\begin{aligned}
    \eta :=\max_{\cN} &\; \sum_{j=0}^{k-1} p_j C_{R}(\sigma_j)\\
    \mathrm{s.t. }&\;\; \cN \in \mathrm{MIO},~\sigma_j = \tr_B[\cN_{A\rightarrow BA'}(\rho_j)],~\forall j,\\
    &\; \sum_{j=0}^{k-1} p_j \tr[\cN_{A\rightarrow BA'}(\rho_j)(\ketbra{j}{j}_B\ox I_{A'})] = P_{\rm suc}(\Omega).
\end{aligned}
\end{equation*}

\end{definition}

All constraints on $\cN$ ensure it is an MIO channel optimally discriminating $\Omega$, and $\{ \sigma_j \}_{j=0}^{k-1}$ act as degraded resource states that preserve coherence. This quantity evaluates the average resource retained in the quantum states after a discrimination process by maximally incoherent operations. It is worth noting that, by Proposition~\ref{prop:QSD_IO}, a feasible maximally incoherent operation $\cN$ always exists for $\mathbf{C}_{\rm MIO}(\Omega)$. Specifically, for a mutually orthogonal pure-state ensemble, the MIO channel $\cN$ can perfectly discriminate $\Omega$ and satisfies $\cN(\rho_j) = \ketbra{j}{j}\ox \sigma_j \, \forall j$. For such ensembles, we establish the connection between $\mathbf{C}_{\rm MIO}(\cdot)$ and $C_{\text{\rm max}}(\cdot)$ in Appendix~\ref{appendix: trade_off}, demonstrating the rationality of this definition. 

To characterize distinguishability, we recall that the distinguishability within a pure state ensemble $\Omega=\{(p_j,\ket{\psi_j})\}_{j=0}^{k-1}$ can be quantified by the von Neumann entropy of the ensemble's average state~\cite{Jozsa_2000}. Specifically, $\mathbf{S}(\Omega):=S(\hat{\omega})$ where $S(\cdot)$ is the von Neumann entropy of a state and $\hat{\omega} = \sum_j p_j \ketbra{\psi_j}{\psi_j}$ denotes the average state of $\Omega$. A mutually orthogonal pure-state ensemble $\Omega$ provides an ideal test bed for revealing fundamental quantum phenomena~\cite{yu2012four, halder2019strong, banik2021multicopy}. We now present our main result on the relationship between $\mathbf{S}(\Omega)$ and $\mathbf{C}_{\text{MIO}}(\Omega)$ for this ensemble.

\begin{theorem}[Coherence-distinguishability duality relation]\label{thm:roc_upper}
For a mutually orthogonal $d$-dimensional pure-state ensemble $\Omega = \{(1/k,\ket{\psi_j}) \}_{j=0}^{k-1}$,
\begin{equation}
\label{eq: roc_upper}  
    \mathbf{C}_{\text{MIO}}(\Omega) + \mathbf{S}(\Omega) \leq \log_2 d.
\end{equation}
\end{theorem}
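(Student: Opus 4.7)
The strategy is to rewrite $\mathbf{C}_{\text{MIO}}(\Omega) + \mathbf{S}(\Omega)$ in terms of the robustness of coherence of the ensemble's average state $\hat{\omega} = \frac{1}{k}\sum_j \ketbra{\psi_j}{\psi_j}$, and then invoke monotonicity of $C_R$ under MIO together with a simple operator bound on $\hat{\omega}$.

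First I would evaluate $\mathbf{S}(\Omega)$ directly. Since the $k$ states $\ket{\psi_j}$ are mutually orthogonal and carry equal prior $1/k$, the average state $\hat{\omega}$ has exactly $k$ nonzero eigenvalues, each equal to $1/k$, hence $\mathbf{S}(\Omega) = \log_2 k$. Next, the perfect-discrimination constraint in the definition of $\mathbf{C}_{\text{MIO}}(\Omega)$ is essentially rigid: because the states are orthogonal we have $P_{\rm suc}(\Omega)=1$, so every feasible $\cN$ must send $\rho_j$ to a state supported on the range of $\ketbra{j}{j}_B\ox I_{A'}$, forcing $\cN(\rho_j)=\ketbra{j}{j}_B\ox\sigma_j$ and hence $\cN(\hat{\omega})=\tfrac{1}{k}\sum_j \ketbra{j}{j}_B\ox\sigma_j$.

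The key technical lemma I would establish is the block-diagonal identity
\begin{equation*}
1 + C_R\!\left(\sum_j q_j \ketbra{j}{j}_B\ox\sigma_j\right) = \sum_j q_j\bigl(1+C_R(\sigma_j)\bigr).
\end{equation*}
To prove it, I would use the equivalent formulation $1+C_R(\rho) = \min\{\lambda\geq 1 : \rho\leq\lambda\delta,\ \delta\in\cI\}$. Any incoherent state on $BA'$ is diagonal in the product incoherent basis and can therefore be written as $\delta=\sum_j r_j\ketbra{j}{j}_B\ox\delta_j$ with $\delta_j\in\cI$ on $A'$ and $\sum_j r_j=1$; the operator inequality then decouples into the block constraints $q_j\sigma_j\leq\lambda r_j\delta_j$, which yield $\lambda r_j\geq q_j(1+C_R(\sigma_j))$, summing to $\lambda\geq\sum_j q_j(1+C_R(\sigma_j))$ with equality attained by $r_j\propto q_j(1+C_R(\sigma_j))$. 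Specialising to $q_j=1/k$ gives $1+C_R(\cN(\hat{\omega}))=1+\eta$.

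Finally, monotonicity of $C_R$ under MIO yields $C_R(\cN(\hat{\omega}))\leq C_R(\hat{\omega})$, and because every eigenvalue of $\hat{\omega}$ is at most $1/k$, the operator bound $\hat{\omega}\leq\tfrac{1}{k}I_d=\tfrac{d}{k}\cdot\tfrac{I_d}{d}$ with $I_d/d\in\cI$ gives $1+C_R(\hat{\omega})\leq d/k$. Chaining these estimates produces $1+\eta\leq d/k$, i.e.\ $\mathbf{C}_{\text{MIO}}(\Omega)=\log_2(1+\eta)\leq\log_2 d-\log_2 k=\log_2 d-\mathbf{S}(\Omega)$, which is the claimed duality. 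I expect the main obstacle to be the equality direction of the block-diagonal lemma for $C_R$: one must justify that restricting to block-structured incoherent dominators $\delta=\sum_j r_j\ketbra{j}{j}_B\ox\delta_j$ is without loss of generality, a fact that hinges on the product-basis definition of incoherence on $BA'$. Once the lemma is in hand, the remainder is a short chain of monotonicity and eigenvalue estimates.
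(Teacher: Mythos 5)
Your proof is correct, and it takes a genuinely different route from the paper's. The paper completes $\Omega$ to a full orthonormal basis $\{\ket{\psi_i}\}_{i=0}^{d-1}$, applies the optimal MIO channel to $I/d=\frac{1}{d}\sum_i\ketbra{\psi_i}{\psi_i}$, and uses the fact that $\cN(I/d)$ must be incoherent to extract the per-block bounds $C_R(\sigma_i)\leq\sum_{j\geq k}q_i^j$, which are then summed. You instead establish the block-diagonal identity $1+\eta=1+C_R(\cN(\hat{\omega}))$ (which the paper also proves, as a standalone lemma via multiplicativity and convex-linearity of $C_R$ on classical--quantum states, but does not actually invoke in its proof of the theorem) and then finish with two standard facts: monotonicity of $C_R$ under MIO and the operator bound $\hat{\omega}\leq\frac{1}{k}I=\frac{d}{k}\cdot\frac{I}{d}$. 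Your decoupling argument for the identity is sound --- both sides of $\rho\leq\lambda\delta$ are block-diagonal with respect to the computational basis of $B$, so positivity reduces to blockwise positivity, and the optimal weights $r_j\propto q_j(1+C_R(\sigma_j))$ give equality. What each approach buys: the paper's basis-completion yields finer per-state information that directly motivates the explicit saturating construction in its Proposition 3 (though it requires an ansatz for $\cN(\ketbra{\psi_j}{\psi_j})$ on the states outside the ensemble, which your route avoids entirely); your argument is more modular, relies only on generic properties of the robustness measure, and immediately generalizes to non-uniform priors, since $\hat{\omega}\leq p_{\max}I$ gives $1+C_R(\hat{\omega})\leq p_{\max}d$ and hence the paper's min-entropy bound $\mathbf{C}_{\text{MIO}}(\Omega)+\mathbf{S}_{\min}(\Omega)\leq\log_2 d$ with no extra work.
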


Notice that from the state ensemble $\Omega$, one can distill $\mathbf{S}(\Omega)$ bits of classical information by perfectly discriminating any given unknown state within the ensemble. Theorem~\ref{thm:roc_upper} reveals a crucial trade-off: the more classical bits (`c-bits') you want to decode, the fewer `co-bits' can be preserved after extracting all classical information, and vice versa. This theorem is not entirely unexpected, as discrimination inherently introduces decoherence. However, it remarkably establishes a novel wave-particle duality relation akin to the form of Eq.~\eqref{eq: WPDR}, quantifying wave-like and particle-like behaviors in the discrimination process at the bit-level. The proof of Theorem~\ref{thm:roc_upper} is deferred to the appendix. 

More generally, we extend the coherence-distinguishability duality relation to the state ensemble with non-uniform distribution. By characterizing the distinguishability with the min-entropy, defined as $\mathbf{S}_{\min}(\Omega):= -H_{\max}(\hat{\omega}|| \mathbb{I})=
\mathbf{S}_{\min}(\hat{\omega})$, where $\hat{\omega}$ denotes the average state of $\Omega$ and $\mathbf{S}_{\min}(\hat{\omega}) = -\log_2 p_{\rm max}$ is the min-entropy of $\hat{\omega}$~\cite{Konig_2009}, then for a mutually orthogonal $d$-dimensional pure-state ensemble $\Omega = \{(p_j,\ket{\psi_j}) \}_{j=0}^{k-1}$,
\begin{equation}
    \mathbf{C}_{\text{MIO}}(\Omega) + \mathbf{S}_{\min}(\Omega) \leq \log_2 d.
\end{equation} 
This duality relation reveals that the sum of `co-bits' maximally preserved and `c-bits' at least gained is also bounded, although this bound is not tight for non-uniform state ensembles. The detailed proof and discussion are provided in Appendix~\ref{appendix: trade_off}.

\section{Boundary cases of duality}

To deepen our understanding of the coherence-distinguishability duality relation, we explore two specific instances illuminating its fundamental aspects. Firstly, analogous to how squeezed coherent states in quantum optics reach the Heisenberg uncertainty limit~\cite{walls1983squeezed}, we identify a particular state ensemble that saturates Eq.~\eqref{eq: roc_upper}. Let $H$ denote the $d$-dimensional Hadamard gate, given by $H = \frac{1}{\sqrt{d}}\sum_{i,j=0}^{d-1}\omega^{kj}\ketbra{k}{j}$ where $\omega = e^{i2\pi/d}$, and $X$ denote the $d$-dimensional generalized Pauli $X$ gate, given by $X = \sum_{i=0}^{d-1}\ketbra{i+1}{i}$. Then we have the following. 
\begin{proposition}\label{prop:max_exam}
Let $\ket{\phi_j} = HX^{j}\ket{0}$ where $H, X$ are the $d$-dimensional Hadamard gate and generalized Pauli $X$ gate. For $\Omega = \{(1/k,\ket{\phi_j})\}_{j=0}^{k-1}, \, k\leq d$,
\begin{equation}
\mathbf{C}_{\text{MIO}}(\Omega) + \mathbf{S}(\Omega) = \log_2 d.
\end{equation}
\end{proposition}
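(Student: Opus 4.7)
The plan is to prove both directions. The upper bound follows immediately from Theorem~\ref{thm:roc_upper} applied to this ensemble, so the crux of the argument is constructing an explicit $\mathrm{MIO}$ channel that saturates the duality.

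First, since $\{\ket{\phi_j}\}_{j=0}^{d-1}$ is the Fourier orthonormal basis of $\cH_A$, the ensemble's average state $\hat{\omega} = \tfrac{1}{k}\sum_{j=0}^{k-1}\proj{\phi_j}$ is a rank-$k$ operator with $k$ equal nonzero eigenvalues $1/k$, giving $\mathbf{S}(\Omega)=\log_2 k$. Theorem~\ref{thm:roc_upper} then yields $\mathbf{C}_{\text{MIO}}(\Omega)\leq \log_2(d/k)$.

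For the matching lower bound, assume first that $k$ divides $d$ and set $m := d/k$. I would take $\cN := \Delta_B \circ \mathrm{Ad}_V$, where $\Delta_B$ dephases the $B$ register and $V:\cH_A\to \cH_B\otimes \cH_{A'}$ is the isometry defined on the Fourier basis by
\begin{equation*}
V\ket{\phi_{j+ks}} = \ket{j}_B \otimes H_m\ket{s}_{A'}, \quad j<k,\; s<m,
\end{equation*}
with $H_m\ket{s}$ supported on $\mathrm{span}\{\ket{0},\ldots,\ket{m-1}\}\subset \cH_{A'}$. The image vectors are orthonormal, so $V$ is genuinely an isometry. On ensemble states ($s=0$) one gets $\cN(\phi_j) = \ketbra{j}{j}_B\otimes \proj{\Psi_m}_{A'}$, hence $\cN$ is a perfect discriminator of $\Omega$ with post-discrimination states $\sigma_j = \proj{\Psi_m}$ satisfying $C_R(\sigma_j)=m-1=d/k-1$. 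This forces $\eta \geq d/k-1$ and thus $\mathbf{C}_{\text{MIO}}(\Omega)\geq \log_2(d/k)$.

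The main obstacle is verifying that $\cN$ really is MIO. I would expand an incoherent basis vector as $\ket{i} = \tfrac{1}{\sqrt{d}}\sum_{j'}\omega^{-ij'}\ket{\phi_{j'}}$ with $\omega = e^{2\pi i/d}$, apply $V$, and use $\omega^k = \omega_m := e^{2\pi i/m}$ together with the Fourier-squared identity $H_m^2\ket{s}=\ket{-s \bmod m}$ to resum over $s$. This collapses $V\ket{i}$ to the product form
\begin{equation*}
V\ket{i} = \tfrac{1}{\sqrt{k}}\Big(\sum_{j=0}^{k-1}\omega^{-ij}\ket{j}_B\Big)\otimes \ket{i \bmod m}_{A'},
\end{equation*}
so that $\cN(\proj{i}) = \tfrac{1}{k}\sum_j \ketbra{j}{j}_B \otimes \proj{i \bmod m}_{A'}$ is $BA'$-diagonal, hence incoherent. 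Linearity then extends the MIO property to all of $\cI$. For $k\nmid d$, an analogous block construction partitions $\{0,\ldots,d-1\}$ into $k$ subsets of sizes $\lceil d/k\rceil$ or $\lfloor d/k\rfloor$ in the proper counts; placing a maximally coherent state on each block gives average $C_R = d/k-1$, and the same Fourier-style computation confirms that the corresponding channel is MIO.
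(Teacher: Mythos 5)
Your upper bound and the treatment of the case $k\mid d$ are sound: the isometry $V\ket{\phi_{j+ks}}=\ket{j}_B\ox H_m\ket{s}_{A'}$ followed by $B$-dephasing is indeed MIO (the Fourier resummation $V\ket{i}=\tfrac{1}{\sqrt{k}}\big(\sum_j\omega^{-ij}\ket{j}\big)\ox\ket{i\bmod m}$ checks out), it discriminates $\Omega$ perfectly, and it leaves $\sigma_j=\proj{\Psi_{m}}$ with $C_R=d/k-1$. This is a genuinely different and arguably more transparent construction than the paper's, which instead uses a measure-and-prepare channel $\cN(\rho)=\sum_{i<k}\tr(\Psi_i\rho)\ketbra{i}{i}\ox\sigma+\sum_{j\ge k}\tr(\Psi_j\rho)\Pi\ox\sigma'$ with \emph{mixed} residual states $\sigma=I/d+\sum_{m\neq n}\tfrac{d-k}{k(d-1)d}\ketbra{m}{n}$ and $\sigma'=I/d-\sum_{m\neq n}\tfrac{1}{(d-1)d}\ketbra{m}{n}$, chosen so that $\tfrac{k}{d}\sigma+\tfrac{d-k}{d}\sigma'=I/d$ makes the MIO property automatic for \emph{every} $k\le d$.

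The gap is the case $k\nmid d$, which you dispose of in one sentence. Your MIO verification hinges on the fact that $\Delta_B\circ\mathrm{Ad}_V$ is MIO iff, writing $V\ket{i}=\sum_j\ket{j}_B\ox\ket{v_{i,j}}_{A'}$, every $\ket{v_{i,j}}$ is proportional to a single computational basis vector (since $\Delta_B(V\proj{i}V^\dagger)=\sum_j\ketbra{j}{j}\ox\ketbra{v_{i,j}}{v_{i,j}}$ and a rank-one positive operator is diagonal only if it sits on one coordinate). The resummation that produces this collapse uses precisely that each block is an arithmetic progression $\{j,j+k,\dots\}$ of the \emph{same} length $d/k$, so that $\omega^{-iks}$ runs over characters of $\ZZ_m$. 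With unequal blocks this fails, and not just for your particular partition: for $d=3$, $k=2$, writing $V\ket{\phi_2}=\sum_j\ket{j}\ox\ket{t_j}$, the requirement that $\ket{s_0}+\omega^{-2i}\ket{t_0}$ be proportional to a basis vector for all three values $\omega^{-2i}\in\{1,\omega,\omega^2\}$, together with $\braket{s_0}{t_0}=0$, forces $\ket{t_0}=0$ and $\ket{s_0}$ incoherent (and likewise for the $B=1$ component), so \emph{no} isometry-plus-$B$-dephasing channel can be MIO, perfectly discriminate, and retain any coherence in this case. To close the gap you must leave the isometric framework — e.g.\ adopt the paper's measure-and-prepare channel with mixed outputs, or otherwise exhibit a non-isometric MIO channel achieving average robustness $(d-k)/k$ when $k\nmid d$.
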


Proposition~\ref{prop:max_exam} shows that an ensemble of $k$ mutually orthogonal maximally coherent states exactly achieves the upper bound in Eq.~\eqref{eq: roc_upper}, which identifies the tightness of this trade-off relation. The proof is provided in Appendix~\ref{appendix: trade_off}. This finding underscores a novel role for maximally coherent states beyond their established status as golden resources within the quantum resource theory of coherence. It is also worth exploring other non-trivial ensemble cases saturating the coherence-distinguishability duality relation. A necessary condition for those ensembles to achieve the upper bound can be found in Appendix~\ref{appendix: trade_off}.

Furthermore, as another boundary case of Theorem~\ref{thm:roc_upper}, we note that when the cardinality of the set $\Omega$ is equal to the dimension $d$, $\mathbf{C}_{\text{MIO}}(\Omega)$ vanishes, indicating that no coherence could be preserved after perfect discrimination. Equivalently, this implies that if a quantum channel $\cN \in {\rm MIO}$ exists such that $\cN( \ketbra{\psi_j}{\psi_j}) = \ketbra{j}{j}\ox\sigma_j, \,\sigma_j\in\cH_d$ for each $j=0,1,...,d-1$, then each $\sigma_j$ must be an incoherent state. It illustrates an extreme case where it is inherently unfeasible to completely extract all c-bits encoded in a complete orthonormal basis of the Hilbert space through MIO while concurrently maintaining any co-bit. This scenario reveals a mutual exclusivity between coherence and distinguishability within the ensemble of mutually orthogonal pure states, analogous to the situation described by $D = 1$ and $V = 0$ in Eq.~\eqref{eq: WPDR}. Another extreme situation arises when the ensemble $\Omega$ contains only one state $\rho$, where no distinguishability can be obtained and the coherence of $\rho$ remains completely preserved, as $\mathbf{C}_{\text{MIO}}(\Omega) = C_{\text{\rm max}}(\rho)$. 
These observations hint again at an underlying wave-particle duality relationship within these ensembles. 

We note that previous work by Bera \textit{et al.}~\cite{Bera_2015} established a duality between coherence and path distinguishability via unambiguous state discrimination, extending the wave-particle duality to multipath interference. In their framework, \textit{particle states} and \textit{path detector states} are entangled within the interferometer, with coherence quantified by tracing out the detector states and path distinguishability derived from an ensemble of detector states for unambiguous discrimination. Orthogonal detector states indicate complete path distinguishability and the absence of coherence within the multipath interferometer.  
In interference scenarios, such as those explored in previous studies~\cite{Bera_2015, Bagan_2016}, non-orthogonal state ensembles explicitly demonstrate wave-particle duality with orthogonal ensembles serving as boundary cases. In contrast, our duality arises from arbitrary orthogonal pure-state ensembles. Similarly, in entanglement theory, orthogonal states, such as unextendible product bases, can still exhibit strong quantum nonlocality despite their mutual orthogonality~\cite{Bennett_1999}. This provides a broader, information-theoretical perspective beyond interference settings. 

\section{Conclusion}
In summary, we have explored the complementarity between quantum coherence and classical distinguishability
through resource manipulation in quantum state discrimination, revealing a coherence-distinguishability duality. Our duality highlights an inherent trade-off in the discrimination process: the more classical information (c-bits) one retrieves, the fewer coherence bits (co-bits) can be preserved, and vice versa. Notably, the coherence measure employed in our duality relation is directly experimentally observable, offering a clear physical setting that demonstrates wave-like behavior. Furthermore, with recent advances in state discrimination and coherence resource manipulation techniques in optical experiments~\cite{Sol_s_Prosser_2017, Wu_2020, Wu_2021_exp,sun2022activation, pan2012multiphoton}, our framework opens avenues for potential experimental realizations, offering deeper insights into this duality in practical contexts.

We anticipate that our framework for QSD could advance the understanding of the interplay between classical distinguishability and other quantum resources, such as entanglement, magic, and thermodynamics~\cite{Gour2015nonequilibrium, chiribella2022nonequilibrium, hsieh2025dynamical}. There is a possibility of exploring other duality relations between every two potential complementary resources.
Moreover, it would be interesting to investigate the potential generalization of this coherence-distinguishability duality to arbitrary state ensembles and explore whether an uncertainty relation can be formulated for this duality, considering that the wave-particle duality relation is a special case of the uncertainty relation. 

\vspace{6mm}
\section{Acknowledgments}
We thank Bartosz Regula for the suggestion of extending the duality relation into a pure state ensemble with non-uniform distribution and for other helpful discussions. We thank the anonymous
referees for their substantial and critical feedback, which has significantly enhanced the quality of our manuscript. We also thank Ranyiliu Chen, Kun Wang, and Benchi Zhao for their helpful comments. This work was partially supported by the National Key R\&D Program of China (Grant No.2024YFE0102500), the National Natural Science Foundation of China (Grant No. 12274223), the Guangdong Provincial Quantum Science Strategic Initiative (Grant No.~GDZX2403008, No.~GDZX2403001,
and No.~GDZX2303007), the Guangdong Provincial Key Lab of Integrated Communication, Sensing and Computation for Ubiquitous Internet of Things (Grant No.~2023B1212010007), the Quantum Science Center of Guangdong-Hong Kong-Macao Greater Bay Area, and the Education Bureau of Guangzhou Municipality.


\bibliography{ref}

\begin{thebibliography}{101}%
\makeatletter
\providecommand \@ifxundefined [1]{%
 \@ifx{#1\undefined}
}%
\providecommand \@ifnum [1]{%
 \ifnum #1\expandafter \@firstoftwo
 \else \expandafter \@secondoftwo
 \fi
}%
\providecommand \@ifx [1]{%
 \ifx #1\expandafter \@firstoftwo
 \else \expandafter \@secondoftwo
 \fi
}%
\providecommand \natexlab [1]{#1}%
\providecommand \enquote  [1]{``#1''}%
\providecommand \bibnamefont  [1]{#1}%
\providecommand \bibfnamefont [1]{#1}%
\providecommand \citenamefont [1]{#1}%
\providecommand \href@noop [0]{\@secondoftwo}%
\providecommand \href [0]{\begingroup \@sanitize@url \@href}%
\providecommand \@href[1]{\@@startlink{#1}\@@href}%
\providecommand \@@href[1]{\endgroup#1\@@endlink}%
\providecommand \@sanitize@url [0]{\catcode `\\12\catcode `\$12\catcode
  `\&12\catcode `\#12\catcode `\^12\catcode `\_12\catcode `\%12\relax}%
\providecommand \@@startlink[1]{}%
\providecommand \@@endlink[0]{}%
\providecommand \url  [0]{\begingroup\@sanitize@url \@url }%
\providecommand \@url [1]{\endgroup\@href {#1}{\urlprefix }}%
\providecommand \urlprefix  [0]{URL }%
\providecommand \Eprint [0]{\href }%
\providecommand \doibase [0]{https://doi.org/}%
\providecommand \selectlanguage [0]{\@gobble}%
\providecommand \bibinfo  [0]{\@secondoftwo}%
\providecommand \bibfield  [0]{\@secondoftwo}%
\providecommand \translation [1]{[#1]}%
\providecommand \BibitemOpen [0]{}%
\providecommand \bibitemStop [0]{}%
\providecommand \bibitemNoStop [0]{.\EOS\space}%
\providecommand \EOS [0]{\spacefactor3000\relax}%
\providecommand \BibitemShut  [1]{\csname bibitem#1\endcsname}%
\let\auto@bib@innerbib\@empty
\bibitem [{\citenamefont {Bohr}\ \emph {et~al.}(1928)\citenamefont {Bohr} \emph
  {et~al.}}]{bohr1928quantum}%
  \BibitemOpen
  \bibfield  {author} {\bibinfo {author} {\bibfnamefont {N.}~\bibnamefont
  {Bohr}} \emph {et~al.},\ }\href {https://www.nature.com/articles/121580a0}
  {\emph {\bibinfo {title} {The quantum postulate and the recent development of
  atomic theory}}},\ Vol.~\bibinfo {volume} {3}\ (\bibinfo  {publisher}
  {Printed in Great Britain by R. \& R. Clarke, Limited},\ \bibinfo {year}
  {1928})\BibitemShut {NoStop}%
\bibitem [{\citenamefont {Busch}(1985)}]{busch1985indeterminacy}%
  \BibitemOpen
  \bibfield  {author} {\bibinfo {author} {\bibfnamefont {P.}~\bibnamefont
  {Busch}},\ }\bibfield  {title} {\bibinfo {title} {Indeterminacy relations and
  simultaneous measurements in quantum theory},\ }\href@noop {} {\bibfield
  {journal} {\bibinfo  {journal} {Int. J. Theor. Phys.}\ }\textbf {\bibinfo
  {volume} {24}},\ \bibinfo {pages} {63} (\bibinfo {year} {1985})}\BibitemShut
  {NoStop}%
\bibitem [{\citenamefont {Robertson}(1929)}]{robertson1929uncertainty}%
  \BibitemOpen
  \bibfield  {author} {\bibinfo {author} {\bibfnamefont {H.~P.}\ \bibnamefont
  {Robertson}},\ }\bibfield  {title} {\bibinfo {title} {The uncertainty
  principle},\ }\href
  {https://journals.aps.org/pr/abstract/10.1103/PhysRev.34.163} {\bibfield
  {journal} {\bibinfo  {journal} {Phys. Rev.}\ }\textbf {\bibinfo {volume}
  {34}},\ \bibinfo {pages} {163} (\bibinfo {year} {1929})}\BibitemShut
  {NoStop}%
\bibitem [{\citenamefont {Maassen}\ and\ \citenamefont
  {Uffink}(1988)}]{maassen1988generalized}%
  \BibitemOpen
  \bibfield  {author} {\bibinfo {author} {\bibfnamefont {H.}~\bibnamefont
  {Maassen}}\ and\ \bibinfo {author} {\bibfnamefont {J.~B.}\ \bibnamefont
  {Uffink}},\ }\bibfield  {title} {\bibinfo {title} {Generalized entropic
  uncertainty relations},\ }\href
  {https://journals.aps.org/prl/abstract/10.1103/PhysRevLett.60.1103}
  {\bibfield  {journal} {\bibinfo  {journal} {Phys. Rev. Lett.}\ }\textbf
  {\bibinfo {volume} {60}},\ \bibinfo {pages} {1103} (\bibinfo {year}
  {1988})}\BibitemShut {NoStop}%
\bibitem [{\citenamefont {Coles}\ \emph {et~al.}(2014)\citenamefont {Coles},
  \citenamefont {Kaniewski},\ and\ \citenamefont {Wehner}}]{Coles_2014}%
  \BibitemOpen
  \bibfield  {author} {\bibinfo {author} {\bibfnamefont {P.~J.}\ \bibnamefont
  {Coles}}, \bibinfo {author} {\bibfnamefont {J.}~\bibnamefont {Kaniewski}},\
  and\ \bibinfo {author} {\bibfnamefont {S.}~\bibnamefont {Wehner}},\
  }\bibfield  {title} {\bibinfo {title} {Equivalence of wave–particle duality
  to entropic uncertainty},\ }\href {http://dx.doi.org/10.1038/ncomms6814}
  {\bibfield  {journal} {\bibinfo  {journal} {Nat. Commun.}\ }\textbf {\bibinfo
  {volume} {5}} (\bibinfo {year} {2014})}\BibitemShut {NoStop}%
\bibitem [{\citenamefont {Banik}\ \emph {et~al.}(2013)\citenamefont {Banik},
  \citenamefont {Gazi}, \citenamefont {Ghosh},\ and\ \citenamefont
  {Kar}}]{Banik_2013}%
  \BibitemOpen
  \bibfield  {author} {\bibinfo {author} {\bibfnamefont {M.}~\bibnamefont
  {Banik}}, \bibinfo {author} {\bibfnamefont {M.~R.}\ \bibnamefont {Gazi}},
  \bibinfo {author} {\bibfnamefont {S.}~\bibnamefont {Ghosh}},\ and\ \bibinfo
  {author} {\bibfnamefont {G.}~\bibnamefont {Kar}},\ }\bibfield  {title}
  {\bibinfo {title} {Degree of complementarity determines the nonlocality in
  quantum mechanics},\ }\href {http://dx.doi.org/10.1103/PhysRevA.87.052125}
  {\bibfield  {journal} {\bibinfo  {journal} {Phys. Rev. A}\ }\textbf {\bibinfo
  {volume} {87}} (\bibinfo {year} {2013})}\BibitemShut {NoStop}%
\bibitem [{\citenamefont {Koashi}(2009)}]{koashi2009simple}%
  \BibitemOpen
  \bibfield  {author} {\bibinfo {author} {\bibfnamefont {M.}~\bibnamefont
  {Koashi}},\ }\bibfield  {title} {\bibinfo {title} {Simple security proof of
  quantum key distribution based on complementarity},\ }\href
  {https://iopscience.iop.org/article/10.1088/1367-2630/11/4/045018} {\bibfield
   {journal} {\bibinfo  {journal} {New J. Phys.}\ }\textbf {\bibinfo {volume}
  {11}},\ \bibinfo {pages} {045018} (\bibinfo {year} {2009})}\BibitemShut
  {NoStop}%
\bibitem [{\citenamefont {Zhang}\ \emph {et~al.}(2023)\citenamefont {Zhang},
  \citenamefont {Zeng}, \citenamefont {Ye}, \citenamefont {Lo},\ and\
  \citenamefont {Ma}}]{zhang2023quantum}%
  \BibitemOpen
  \bibfield  {author} {\bibinfo {author} {\bibfnamefont {X.}~\bibnamefont
  {Zhang}}, \bibinfo {author} {\bibfnamefont {P.}~\bibnamefont {Zeng}},
  \bibinfo {author} {\bibfnamefont {T.}~\bibnamefont {Ye}}, \bibinfo {author}
  {\bibfnamefont {H.-K.}\ \bibnamefont {Lo}},\ and\ \bibinfo {author}
  {\bibfnamefont {X.}~\bibnamefont {Ma}},\ }\bibfield  {title} {\bibinfo
  {title} {Quantum complementarity approach to device-independent security},\
  }\href {https://arxiv.org/abs/2111.13855} {\bibfield  {journal} {\bibinfo
  {journal} {Phys. Rev. Lett.}\ }\textbf {\bibinfo {volume} {131}},\ \bibinfo
  {pages} {140801} (\bibinfo {year} {2023})}\BibitemShut {NoStop}%
\bibitem [{\citenamefont {Kendon}\ and\ \citenamefont
  {Sanders}(2005)}]{Kendon_2005}%
  \BibitemOpen
  \bibfield  {author} {\bibinfo {author} {\bibfnamefont {V.}~\bibnamefont
  {Kendon}}\ and\ \bibinfo {author} {\bibfnamefont {B.~C.}\ \bibnamefont
  {Sanders}},\ }\bibfield  {title} {\bibinfo {title} {Complementarity and
  quantum walks},\ }\href {http://dx.doi.org/10.1103/PhysRevA.71.022307}
  {\bibfield  {journal} {\bibinfo  {journal} {Phys. Rev. A}\ }\textbf {\bibinfo
  {volume} {71}} (\bibinfo {year} {2005})}\BibitemShut {NoStop}%
\bibitem [{\citenamefont {Petz}(2007)}]{Petz_2007}%
  \BibitemOpen
  \bibfield  {author} {\bibinfo {author} {\bibfnamefont {D.}~\bibnamefont
  {Petz}},\ }\bibfield  {title} {\bibinfo {title} {Complementarity in quantum
  systems},\ }\href {http://dx.doi.org/10.1016/S0034-4877(07)00010-9}
  {\bibfield  {journal} {\bibinfo  {journal} {Reports on Mathematical Physics}\
  }\textbf {\bibinfo {volume} {59}},\ \bibinfo {pages} {209–224} (\bibinfo
  {year} {2007})}\BibitemShut {NoStop}%
\bibitem [{\citenamefont {Tavakoli}\ \emph {et~al.}(2021)\citenamefont
  {Tavakoli}, \citenamefont {Farkas}, \citenamefont {Rosset}, \citenamefont
  {Bancal},\ and\ \citenamefont {Kaniewski}}]{Tavakoli_2021}%
  \BibitemOpen
  \bibfield  {author} {\bibinfo {author} {\bibfnamefont {A.}~\bibnamefont
  {Tavakoli}}, \bibinfo {author} {\bibfnamefont {M.}~\bibnamefont {Farkas}},
  \bibinfo {author} {\bibfnamefont {D.}~\bibnamefont {Rosset}}, \bibinfo
  {author} {\bibfnamefont {J.-D.}\ \bibnamefont {Bancal}},\ and\ \bibinfo
  {author} {\bibfnamefont {J.}~\bibnamefont {Kaniewski}},\ }\bibfield  {title}
  {\bibinfo {title} {Mutually unbiased bases and symmetric informationally
  complete measurements in bell experiments},\ }\href
  {http://dx.doi.org/10.1126/sciadv.abc3847} {\bibfield  {journal} {\bibinfo
  {journal} {Sci. Adv.}\ }\textbf {\bibinfo {volume} {7}} (\bibinfo {year}
  {2021})}\BibitemShut {NoStop}%
\bibitem [{\citenamefont {Zhang}\ \emph {et~al.}(2022)\citenamefont {Zhang},
  \citenamefont {Chen},\ and\ \citenamefont {Chitambar}}]{zhang2022building}%
  \BibitemOpen
  \bibfield  {author} {\bibinfo {author} {\bibfnamefont {Y.}~\bibnamefont
  {Zhang}}, \bibinfo {author} {\bibfnamefont {X.}~\bibnamefont {Chen}},\ and\
  \bibinfo {author} {\bibfnamefont {E.}~\bibnamefont {Chitambar}},\ }\bibfield
  {title} {\bibinfo {title} {Building multiple access channels with a single
  particle},\ }\href@noop {} {\bibfield  {journal} {\bibinfo  {journal}
  {Quantum}\ }\textbf {\bibinfo {volume} {6}},\ \bibinfo {pages} {653}
  (\bibinfo {year} {2022})}\BibitemShut {NoStop}%
\bibitem [{\citenamefont {Catani}\ \emph
  {et~al.}(2023{\natexlab{a}})\citenamefont {Catani}, \citenamefont {Leifer},
  \citenamefont {Schmid},\ and\ \citenamefont {Spekkens}}]{Catani_2023}%
  \BibitemOpen
  \bibfield  {author} {\bibinfo {author} {\bibfnamefont {L.}~\bibnamefont
  {Catani}}, \bibinfo {author} {\bibfnamefont {M.}~\bibnamefont {Leifer}},
  \bibinfo {author} {\bibfnamefont {D.}~\bibnamefont {Schmid}},\ and\ \bibinfo
  {author} {\bibfnamefont {R.~W.}\ \bibnamefont {Spekkens}},\ }\bibfield
  {title} {\bibinfo {title} {Why interference phenomena do not capture the
  essence of quantum theory},\ }\href
  {http://dx.doi.org/10.22331/q-2023-09-25-1119} {\bibfield  {journal}
  {\bibinfo  {journal} {Quantum}\ }\textbf {\bibinfo {volume} {7}},\ \bibinfo
  {pages} {1119} (\bibinfo {year} {2023}{\natexlab{a}})}\BibitemShut {NoStop}%
\bibitem [{\citenamefont {Catani}\ \emph
  {et~al.}(2023{\natexlab{b}})\citenamefont {Catani}, \citenamefont {Leifer},
  \citenamefont {Scala}, \citenamefont {Schmid},\ and\ \citenamefont
  {Spekkens}}]{Catani_Lorenzo_2023}%
  \BibitemOpen
  \bibfield  {author} {\bibinfo {author} {\bibfnamefont {L.}~\bibnamefont
  {Catani}}, \bibinfo {author} {\bibfnamefont {M.}~\bibnamefont {Leifer}},
  \bibinfo {author} {\bibfnamefont {G.}~\bibnamefont {Scala}}, \bibinfo
  {author} {\bibfnamefont {D.}~\bibnamefont {Schmid}},\ and\ \bibinfo {author}
  {\bibfnamefont {R.~W.}\ \bibnamefont {Spekkens}},\ }\bibfield  {title}
  {\bibinfo {title} {Aspects of the phenomenology of interference that are
  genuinely nonclassical},\ }\href
  {http://dx.doi.org/10.1103/PhysRevA.108.022207} {\bibfield  {journal}
  {\bibinfo  {journal} {Phys. Rev. A}\ }\textbf {\bibinfo {volume} {108}}
  (\bibinfo {year} {2023}{\natexlab{b}})}\BibitemShut {NoStop}%
\bibitem [{\citenamefont {Basso}\ and\ \citenamefont
  {Maziero}(2021)}]{basso2021uncertainty}%
  \BibitemOpen
  \bibfield  {author} {\bibinfo {author} {\bibfnamefont {M.~L.}\ \bibnamefont
  {Basso}}\ and\ \bibinfo {author} {\bibfnamefont {J.}~\bibnamefont
  {Maziero}},\ }\bibfield  {title} {\bibinfo {title} {An uncertainty view on
  complementarity and a complementarity view on uncertainty},\ }\href
  {https://link.springer.com/article/10.1007/s11128-021-03136-7} {\bibfield
  {journal} {\bibinfo  {journal} {Quantum Inf Process}\ }\textbf {\bibinfo
  {volume} {20}},\ \bibinfo {pages} {201} (\bibinfo {year} {2021})}\BibitemShut
  {NoStop}%
\bibitem [{\citenamefont {Menzel}\ \emph {et~al.}(2012)\citenamefont {Menzel},
  \citenamefont {Puhlmann}, \citenamefont {Heuer},\ and\ \citenamefont
  {Schleich}}]{menzel2012wave}%
  \BibitemOpen
  \bibfield  {author} {\bibinfo {author} {\bibfnamefont {R.}~\bibnamefont
  {Menzel}}, \bibinfo {author} {\bibfnamefont {D.}~\bibnamefont {Puhlmann}},
  \bibinfo {author} {\bibfnamefont {A.}~\bibnamefont {Heuer}},\ and\ \bibinfo
  {author} {\bibfnamefont {W.~P.}\ \bibnamefont {Schleich}},\ }\bibfield
  {title} {\bibinfo {title} {Wave-particle dualism and complementarity
  unraveled by a different mode},\ }\href
  {https://www.pnas.org/doi/abs/10.1073/pnas.1201271109} {\bibfield  {journal}
  {\bibinfo  {journal} {Proc. Natl. Acad. Sci. U.S.A.}\ }\textbf {\bibinfo
  {volume} {109}},\ \bibinfo {pages} {9314} (\bibinfo {year}
  {2012})}\BibitemShut {NoStop}%
\bibitem [{\citenamefont {Wang}\ and\ \citenamefont
  {Yang}(2021)}]{wang2021molecular}%
  \BibitemOpen
  \bibfield  {author} {\bibinfo {author} {\bibfnamefont {X.}~\bibnamefont
  {Wang}}\ and\ \bibinfo {author} {\bibfnamefont {X.}~\bibnamefont {Yang}},\
  }\bibfield  {title} {\bibinfo {title} {A molecular double-slit experiment},\
  }\href {https://www.science.org/doi/full/10.1126/science.abm5536} {\bibfield
  {journal} {\bibinfo  {journal} {Science}\ }\textbf {\bibinfo {volume}
  {374}},\ \bibinfo {pages} {938} (\bibinfo {year} {2021})}\BibitemShut
  {NoStop}%
\bibitem [{\citenamefont {Yoon}\ and\ \citenamefont
  {Cho}(2021)}]{yoon2021quantitative}%
  \BibitemOpen
  \bibfield  {author} {\bibinfo {author} {\bibfnamefont {T.~H.}\ \bibnamefont
  {Yoon}}\ and\ \bibinfo {author} {\bibfnamefont {M.}~\bibnamefont {Cho}},\
  }\bibfield  {title} {\bibinfo {title} {Quantitative complementarity of
  wave-particle duality},\ }\href
  {https://www.science.org/doi/full/10.1126/sciadv.abi9268} {\bibfield
  {journal} {\bibinfo  {journal} {Sci. Adv.}\ }\textbf {\bibinfo {volume}
  {7}},\ \bibinfo {pages} {eabi9268} (\bibinfo {year} {2021})}\BibitemShut
  {NoStop}%
\bibitem [{\citenamefont {Chen}\ \emph {et~al.}(2022)\citenamefont {Chen},
  \citenamefont {Zhang}, \citenamefont {Zhao}, \citenamefont {Wu},
  \citenamefont {Fang}, \citenamefont {Yang},\ and\ \citenamefont
  {Nori}}]{chen2022experimental}%
  \BibitemOpen
  \bibfield  {author} {\bibinfo {author} {\bibfnamefont {D.-X.}\ \bibnamefont
  {Chen}}, \bibinfo {author} {\bibfnamefont {Y.}~\bibnamefont {Zhang}},
  \bibinfo {author} {\bibfnamefont {J.-L.}\ \bibnamefont {Zhao}}, \bibinfo
  {author} {\bibfnamefont {Q.-C.}\ \bibnamefont {Wu}}, \bibinfo {author}
  {\bibfnamefont {Y.-L.}\ \bibnamefont {Fang}}, \bibinfo {author}
  {\bibfnamefont {C.-P.}\ \bibnamefont {Yang}},\ and\ \bibinfo {author}
  {\bibfnamefont {F.}~\bibnamefont {Nori}},\ }\bibfield  {title} {\bibinfo
  {title} {Experimental investigation of wave-particle duality relations in
  asymmetric beam interference},\ }\href
  {https://www.nature.com/articles/s41534-022-00610-7} {\bibfield  {journal}
  {\bibinfo  {journal} {npj Quantum Inf.}\ }\textbf {\bibinfo {volume} {8}},\
  \bibinfo {pages} {101} (\bibinfo {year} {2022})}\BibitemShut {NoStop}%
\bibitem [{\citenamefont {Jaeger}\ \emph {et~al.}(1995)\citenamefont {Jaeger},
  \citenamefont {Shimony},\ and\ \citenamefont {Vaidman}}]{jaeger1995two}%
  \BibitemOpen
  \bibfield  {author} {\bibinfo {author} {\bibfnamefont {G.}~\bibnamefont
  {Jaeger}}, \bibinfo {author} {\bibfnamefont {A.}~\bibnamefont {Shimony}},\
  and\ \bibinfo {author} {\bibfnamefont {L.}~\bibnamefont {Vaidman}},\
  }\bibfield  {title} {\bibinfo {title} {Two interferometric
  complementarities},\ }\href
  {https://journals.aps.org/pra/abstract/10.1103/PhysRevA.51.54} {\bibfield
  {journal} {\bibinfo  {journal} {Phys. Rev. A}\ }\textbf {\bibinfo {volume}
  {51}},\ \bibinfo {pages} {54} (\bibinfo {year} {1995})}\BibitemShut {NoStop}%
\bibitem [{\citenamefont {Englert}(1996)}]{englert1996fringe}%
  \BibitemOpen
  \bibfield  {author} {\bibinfo {author} {\bibfnamefont {B.-G.}\ \bibnamefont
  {Englert}},\ }\bibfield  {title} {\bibinfo {title} {Fringe visibility and
  which-way information: An inequality},\ }\href
  {https://journals.aps.org/prl/abstract/10.1103/PhysRevLett.77.2154}
  {\bibfield  {journal} {\bibinfo  {journal} {Phys. Rev. Lett.}\ }\textbf
  {\bibinfo {volume} {77}},\ \bibinfo {pages} {2154} (\bibinfo {year}
  {1996})}\BibitemShut {NoStop}%
\bibitem [{\citenamefont {Wootters}\ and\ \citenamefont
  {Zurek}(1979)}]{wootters1979complementarity}%
  \BibitemOpen
  \bibfield  {author} {\bibinfo {author} {\bibfnamefont {W.~K.}\ \bibnamefont
  {Wootters}}\ and\ \bibinfo {author} {\bibfnamefont {W.~H.}\ \bibnamefont
  {Zurek}},\ }\bibfield  {title} {\bibinfo {title} {Complementarity in the
  double-slit experiment: Quantum nonseparability and a quantitative statement
  of bohr's principle},\ }\href
  {https://journals.aps.org/prd/abstract/10.1103/PhysRevD.19.473} {\bibfield
  {journal} {\bibinfo  {journal} {Phys. Rev. D}\ }\textbf {\bibinfo {volume}
  {19}},\ \bibinfo {pages} {473} (\bibinfo {year} {1979})}\BibitemShut
  {NoStop}%
\bibitem [{\citenamefont {Greenberger}\ and\ \citenamefont
  {Yasin}(1988)}]{greenberger1988simultaneous}%
  \BibitemOpen
  \bibfield  {author} {\bibinfo {author} {\bibfnamefont {D.~M.}\ \bibnamefont
  {Greenberger}}\ and\ \bibinfo {author} {\bibfnamefont {A.}~\bibnamefont
  {Yasin}},\ }\bibfield  {title} {\bibinfo {title} {Simultaneous wave and
  particle knowledge in a neutron interferometer},\ }\href
  {https://www.sciencedirect.com/science/article/abs/pii/0375960188901144}
  {\bibfield  {journal} {\bibinfo  {journal} {Phys. Rev. A}\ }\textbf {\bibinfo
  {volume} {128}},\ \bibinfo {pages} {391} (\bibinfo {year}
  {1988})}\BibitemShut {NoStop}%
\bibitem [{\citenamefont {D{\"u}rr}(2001)}]{durr2001quantitative}%
  \BibitemOpen
  \bibfield  {author} {\bibinfo {author} {\bibfnamefont {S.}~\bibnamefont
  {D{\"u}rr}},\ }\bibfield  {title} {\bibinfo {title} {Quantitative
  wave-particle duality in multibeam interferometers},\ }\href
  {https://journals.aps.org/pra/abstract/10.1103/PhysRevA.64.042113} {\bibfield
   {journal} {\bibinfo  {journal} {Phys. Rev. A}\ }\textbf {\bibinfo {volume}
  {64}},\ \bibinfo {pages} {042113} (\bibinfo {year} {2001})}\BibitemShut
  {NoStop}%
\bibitem [{\citenamefont {Bimonte}\ and\ \citenamefont
  {Musto}(2003)}]{bimonte2003comment}%
  \BibitemOpen
  \bibfield  {author} {\bibinfo {author} {\bibfnamefont {G.}~\bibnamefont
  {Bimonte}}\ and\ \bibinfo {author} {\bibfnamefont {R.}~\bibnamefont
  {Musto}},\ }\bibfield  {title} {\bibinfo {title} {Comment on “quantitative
  wave-particle duality in multibeam interferometers”},\ }\href
  {https://journals.aps.org/pra/abstract/10.1103/PhysRevA.67.066101} {\bibfield
   {journal} {\bibinfo  {journal} {Phys. Rev. A}\ }\textbf {\bibinfo {volume}
  {67}},\ \bibinfo {pages} {066101} (\bibinfo {year} {2003})}\BibitemShut
  {NoStop}%
\bibitem [{\citenamefont {Bera}\ \emph {et~al.}(2015)\citenamefont {Bera},
  \citenamefont {Qureshi}, \citenamefont {Siddiqui},\ and\ \citenamefont
  {Pati}}]{Bera_2015}%
  \BibitemOpen
  \bibfield  {author} {\bibinfo {author} {\bibfnamefont {M.~N.}\ \bibnamefont
  {Bera}}, \bibinfo {author} {\bibfnamefont {T.}~\bibnamefont {Qureshi}},
  \bibinfo {author} {\bibfnamefont {M.~A.}\ \bibnamefont {Siddiqui}},\ and\
  \bibinfo {author} {\bibfnamefont {A.~K.}\ \bibnamefont {Pati}},\ }\bibfield
  {title} {\bibinfo {title} {Duality of quantum coherence and path
  distinguishability},\ }\href
  {https://journals.aps.org/pra/abstract/10.1103/PhysRevA.92.012118} {\bibfield
   {journal} {\bibinfo  {journal} {Phys. Rev. A}\ }\textbf {\bibinfo {volume}
  {92}} (\bibinfo {year} {2015})}\BibitemShut {NoStop}%
\bibitem [{\citenamefont {Bagan}\ \emph {et~al.}(2016)\citenamefont {Bagan},
  \citenamefont {Bergou}, \citenamefont {Cottrell},\ and\ \citenamefont
  {Hillery}}]{Bagan_2016}%
  \BibitemOpen
  \bibfield  {author} {\bibinfo {author} {\bibfnamefont {E.}~\bibnamefont
  {Bagan}}, \bibinfo {author} {\bibfnamefont {J.~A.}\ \bibnamefont {Bergou}},
  \bibinfo {author} {\bibfnamefont {S.~S.}\ \bibnamefont {Cottrell}},\ and\
  \bibinfo {author} {\bibfnamefont {M.}~\bibnamefont {Hillery}},\ }\bibfield
  {title} {\bibinfo {title} {Relations between coherence and path
  information},\ }\href {http://dx.doi.org/10.1103/PhysRevLett.116.160406}
  {\bibfield  {journal} {\bibinfo  {journal} {Phys. Rev. Lett.}\ }\textbf
  {\bibinfo {volume} {116}} (\bibinfo {year} {2016})}\BibitemShut {NoStop}%
\bibitem [{\citenamefont {Qureshi}\ and\ \citenamefont
  {Siddiqui}(2017)}]{qureshi2017wave}%
  \BibitemOpen
  \bibfield  {author} {\bibinfo {author} {\bibfnamefont {T.}~\bibnamefont
  {Qureshi}}\ and\ \bibinfo {author} {\bibfnamefont {M.~A.}\ \bibnamefont
  {Siddiqui}},\ }\bibfield  {title} {\bibinfo {title} {Wave--particle duality
  in n-path interference},\ }\href
  {http://dx.doi.org/10.1016/j.aop.2017.08.015} {\bibfield  {journal} {\bibinfo
   {journal} {Ann. Phys.}\ }\textbf {\bibinfo {volume} {385}},\ \bibinfo
  {pages} {598} (\bibinfo {year} {2017})}\BibitemShut {NoStop}%
\bibitem [{\citenamefont {Menon}\ and\ \citenamefont
  {Qureshi}(2018)}]{Menon_2018}%
  \BibitemOpen
  \bibfield  {author} {\bibinfo {author} {\bibfnamefont {K.~K.}\ \bibnamefont
  {Menon}}\ and\ \bibinfo {author} {\bibfnamefont {T.}~\bibnamefont
  {Qureshi}},\ }\bibfield  {title} {\bibinfo {title} {Wave-particle duality in
  asymmetric beam interference},\ }\href
  {http://dx.doi.org/10.1103/PhysRevA.98.022130} {\bibfield  {journal}
  {\bibinfo  {journal} {Phys. Rev. A}\ }\textbf {\bibinfo {volume} {98}}
  (\bibinfo {year} {2018})}\BibitemShut {NoStop}%
\bibitem [{\citenamefont {Bagan}\ \emph {et~al.}(2020)\citenamefont {Bagan},
  \citenamefont {Bergou},\ and\ \citenamefont {Hillery}}]{bagan2020wave}%
  \BibitemOpen
  \bibfield  {author} {\bibinfo {author} {\bibfnamefont {E.}~\bibnamefont
  {Bagan}}, \bibinfo {author} {\bibfnamefont {J.~A.}\ \bibnamefont {Bergou}},\
  and\ \bibinfo {author} {\bibfnamefont {M.}~\bibnamefont {Hillery}},\
  }\bibfield  {title} {\bibinfo {title} {Wave-particle-duality relations based
  on entropic bounds for which-way information},\ }\href
  {https://journals.aps.org/pra/abstract/10.1103/PhysRevA.102.022224}
  {\bibfield  {journal} {\bibinfo  {journal} {Phys. Rev. A}\ }\textbf {\bibinfo
  {volume} {102}},\ \bibinfo {pages} {022224} (\bibinfo {year}
  {2020})}\BibitemShut {NoStop}%
\bibitem [{\citenamefont {Srivastava}\ \emph {et~al.}(2021)\citenamefont
  {Srivastava}, \citenamefont {Das},\ and\ \citenamefont
  {Sen}}]{srivastava2021resource}%
  \BibitemOpen
  \bibfield  {author} {\bibinfo {author} {\bibfnamefont {C.}~\bibnamefont
  {Srivastava}}, \bibinfo {author} {\bibfnamefont {S.}~\bibnamefont {Das}},\
  and\ \bibinfo {author} {\bibfnamefont {U.}~\bibnamefont {Sen}},\ }\bibfield
  {title} {\bibinfo {title} {Resource theory of quantum coherence with
  probabilistically nondistinguishable pointers and corresponding wave-particle
  duality},\ }\href
  {https://journals.aps.org/pra/abstract/10.1103/PhysRevA.103.022417}
  {\bibfield  {journal} {\bibinfo  {journal} {Phys. Rev. A}\ }\textbf {\bibinfo
  {volume} {103}},\ \bibinfo {pages} {022417} (\bibinfo {year}
  {2021})}\BibitemShut {NoStop}%
\bibitem [{\citenamefont {Winter}\ and\ \citenamefont
  {Yang}(2016)}]{Winter_2016}%
  \BibitemOpen
  \bibfield  {author} {\bibinfo {author} {\bibfnamefont {A.}~\bibnamefont
  {Winter}}\ and\ \bibinfo {author} {\bibfnamefont {D.}~\bibnamefont {Yang}},\
  }\bibfield  {title} {\bibinfo {title} {Operational resource theory of
  coherence},\ }\href {http://dx.doi.org/10.1103/PhysRevLett.116.120404}
  {\bibfield  {journal} {\bibinfo  {journal} {Phys. Rev. Lett.}\ }\textbf
  {\bibinfo {volume} {116}} (\bibinfo {year} {2016})}\BibitemShut {NoStop}%
\bibitem [{\citenamefont {Napoli}\ \emph {et~al.}(2016)\citenamefont {Napoli},
  \citenamefont {Bromley}, \citenamefont {Cianciaruso}, \citenamefont {Piani},
  \citenamefont {Johnston},\ and\ \citenamefont {Adesso}}]{Napoli_2016}%
  \BibitemOpen
  \bibfield  {author} {\bibinfo {author} {\bibfnamefont {C.}~\bibnamefont
  {Napoli}}, \bibinfo {author} {\bibfnamefont {T.~R.}\ \bibnamefont {Bromley}},
  \bibinfo {author} {\bibfnamefont {M.}~\bibnamefont {Cianciaruso}}, \bibinfo
  {author} {\bibfnamefont {M.}~\bibnamefont {Piani}}, \bibinfo {author}
  {\bibfnamefont {N.}~\bibnamefont {Johnston}},\ and\ \bibinfo {author}
  {\bibfnamefont {G.}~\bibnamefont {Adesso}},\ }\bibfield  {title} {\bibinfo
  {title} {Robustness of coherence: An operational and observable measure of
  quantum coherence},\ }\href
  {https://doi.org/10.1103%2Fphysrevlett.116.150502} {\bibfield  {journal}
  {\bibinfo  {journal} {Phys. Rev. Lett.}\ }\textbf {\bibinfo {volume} {116}}
  (\bibinfo {year} {2016})}\BibitemShut {NoStop}%
\bibitem [{\citenamefont {Regula}\ \emph {et~al.}(2018)\citenamefont {Regula},
  \citenamefont {Fang}, \citenamefont {Wang},\ and\ \citenamefont
  {Adesso}}]{Regula_2018}%
  \BibitemOpen
  \bibfield  {author} {\bibinfo {author} {\bibfnamefont {B.}~\bibnamefont
  {Regula}}, \bibinfo {author} {\bibfnamefont {K.}~\bibnamefont {Fang}},
  \bibinfo {author} {\bibfnamefont {X.}~\bibnamefont {Wang}},\ and\ \bibinfo
  {author} {\bibfnamefont {G.}~\bibnamefont {Adesso}},\ }\bibfield  {title}
  {\bibinfo {title} {One-shot coherence distillation},\ }\href
  {https://doi.org/10.1103%2Fphysrevlett.121.010401} {\bibfield  {journal}
  {\bibinfo  {journal} {Phys. Rev. Lett.}\ }\textbf {\bibinfo {volume} {121}}
  (\bibinfo {year} {2018})}\BibitemShut {NoStop}%
\bibitem [{\citenamefont {Fang}\ \emph {et~al.}(2018)\citenamefont {Fang},
  \citenamefont {Wang}, \citenamefont {Lami}, \citenamefont {Regula},\ and\
  \citenamefont {Adesso}}]{Fang_2018}%
  \BibitemOpen
  \bibfield  {author} {\bibinfo {author} {\bibfnamefont {K.}~\bibnamefont
  {Fang}}, \bibinfo {author} {\bibfnamefont {X.}~\bibnamefont {Wang}}, \bibinfo
  {author} {\bibfnamefont {L.}~\bibnamefont {Lami}}, \bibinfo {author}
  {\bibfnamefont {B.}~\bibnamefont {Regula}},\ and\ \bibinfo {author}
  {\bibfnamefont {G.}~\bibnamefont {Adesso}},\ }\bibfield  {title} {\bibinfo
  {title} {Probabilistic distillation of quantum coherence},\ }\href
  {https://doi.org/10.1103%2Fphysrevlett.121.070404} {\bibfield  {journal}
  {\bibinfo  {journal} {Phys. Rev. Lett.}\ }\textbf {\bibinfo {volume} {121}}
  (\bibinfo {year} {2018})}\BibitemShut {NoStop}%
\bibitem [{\citenamefont {Bae}\ and\ \citenamefont
  {Kwek}(2015)}]{bae2015quantum}%
  \BibitemOpen
  \bibfield  {author} {\bibinfo {author} {\bibfnamefont {J.}~\bibnamefont
  {Bae}}\ and\ \bibinfo {author} {\bibfnamefont {L.-C.}\ \bibnamefont {Kwek}},\
  }\bibfield  {title} {\bibinfo {title} {Quantum state discrimination and its
  applications},\ }\href
  {https://iopscience.iop.org/article/10.1088/1751-8113/48/8/083001/meta}
  {\bibfield  {journal} {\bibinfo  {journal} {J. Phys. A}\ }\textbf {\bibinfo
  {volume} {48}},\ \bibinfo {pages} {083001} (\bibinfo {year}
  {2015})}\BibitemShut {NoStop}%
\bibitem [{\citenamefont {Baumgratz}\ \emph {et~al.}(2014)\citenamefont
  {Baumgratz}, \citenamefont {Cramer},\ and\ \citenamefont
  {Plenio}}]{Baumgratz_2014}%
  \BibitemOpen
  \bibfield  {author} {\bibinfo {author} {\bibfnamefont {T.}~\bibnamefont
  {Baumgratz}}, \bibinfo {author} {\bibfnamefont {M.}~\bibnamefont {Cramer}},\
  and\ \bibinfo {author} {\bibfnamefont {M.}~\bibnamefont {Plenio}},\
  }\bibfield  {title} {\bibinfo {title} {Quantifying coherence},\ }\href
  {http://dx.doi.org/10.1103/PhysRevLett.113.140401} {\bibfield  {journal}
  {\bibinfo  {journal} {Phys. Rev. Lett.}\ }\textbf {\bibinfo {volume} {113}}
  (\bibinfo {year} {2014})}\BibitemShut {NoStop}%
\bibitem [{\citenamefont {Grover}(2000)}]{grover2000synthesis}%
  \BibitemOpen
  \bibfield  {author} {\bibinfo {author} {\bibfnamefont {L.~K.}\ \bibnamefont
  {Grover}},\ }\bibfield  {title} {\bibinfo {title} {Synthesis of quantum
  superpositions by quantum computation},\ }\href
  {https://journals.aps.org/prl/abstract/10.1103/PhysRevLett.85.1334}
  {\bibfield  {journal} {\bibinfo  {journal} {Phys. Rev. Lett.}\ }\textbf
  {\bibinfo {volume} {85}},\ \bibinfo {pages} {1334} (\bibinfo {year}
  {2000})}\BibitemShut {NoStop}%
\bibitem [{\citenamefont {Ahnefeld}\ \emph {et~al.}(2022)\citenamefont
  {Ahnefeld}, \citenamefont {Theurer}, \citenamefont {Egloff}, \citenamefont
  {Matera},\ and\ \citenamefont {Plenio}}]{Ahnefeld_2022}%
  \BibitemOpen
  \bibfield  {author} {\bibinfo {author} {\bibfnamefont {F.}~\bibnamefont
  {Ahnefeld}}, \bibinfo {author} {\bibfnamefont {T.}~\bibnamefont {Theurer}},
  \bibinfo {author} {\bibfnamefont {D.}~\bibnamefont {Egloff}}, \bibinfo
  {author} {\bibfnamefont {J.~M.}\ \bibnamefont {Matera}},\ and\ \bibinfo
  {author} {\bibfnamefont {M.~B.}\ \bibnamefont {Plenio}},\ }\bibfield  {title}
  {\bibinfo {title} {Coherence as a resource for shor’s algorithm},\ }\href
  {http://dx.doi.org/10.1103/PhysRevLett.129.120501} {\bibfield  {journal}
  {\bibinfo  {journal} {Phys. Rev. Lett.}\ }\textbf {\bibinfo {volume} {129}}
  (\bibinfo {year} {2022})}\BibitemShut {NoStop}%
\bibitem [{\citenamefont {Huttner}\ \emph {et~al.}(1995)\citenamefont
  {Huttner}, \citenamefont {Imoto}, \citenamefont {Gisin},\ and\ \citenamefont
  {Mor}}]{Huttner_1995}%
  \BibitemOpen
  \bibfield  {author} {\bibinfo {author} {\bibfnamefont {B.}~\bibnamefont
  {Huttner}}, \bibinfo {author} {\bibfnamefont {N.}~\bibnamefont {Imoto}},
  \bibinfo {author} {\bibfnamefont {N.}~\bibnamefont {Gisin}},\ and\ \bibinfo
  {author} {\bibfnamefont {T.}~\bibnamefont {Mor}},\ }\bibfield  {title}
  {\bibinfo {title} {Quantum cryptography with coherent states},\ }\href
  {http://dx.doi.org/10.1103/PhysRevA.51.1863} {\bibfield  {journal} {\bibinfo
  {journal} {Phys. Rev. A}\ }\textbf {\bibinfo {volume} {51}},\ \bibinfo
  {pages} {1863–1869} (\bibinfo {year} {1995})}\BibitemShut {NoStop}%
\bibitem [{\citenamefont {Yuan}\ \emph {et~al.}(2015)\citenamefont {Yuan},
  \citenamefont {Zhou}, \citenamefont {Cao},\ and\ \citenamefont
  {Ma}}]{Yuan_2015}%
  \BibitemOpen
  \bibfield  {author} {\bibinfo {author} {\bibfnamefont {X.}~\bibnamefont
  {Yuan}}, \bibinfo {author} {\bibfnamefont {H.}~\bibnamefont {Zhou}}, \bibinfo
  {author} {\bibfnamefont {Z.}~\bibnamefont {Cao}},\ and\ \bibinfo {author}
  {\bibfnamefont {X.}~\bibnamefont {Ma}},\ }\bibfield  {title} {\bibinfo
  {title} {Intrinsic randomness as a measure of quantum coherence},\ }\href
  {http://dx.doi.org/10.1103/PhysRevA.92.022124} {\bibfield  {journal}
  {\bibinfo  {journal} {Phys. Rev. A}\ }\textbf {\bibinfo {volume} {92}}
  (\bibinfo {year} {2015})}\BibitemShut {NoStop}%
\bibitem [{\citenamefont {Ma}\ \emph {et~al.}(2016)\citenamefont {Ma},
  \citenamefont {Yuan}, \citenamefont {Cao}, \citenamefont {Qi},\ and\
  \citenamefont {Zhang}}]{ma2016quantum}%
  \BibitemOpen
  \bibfield  {author} {\bibinfo {author} {\bibfnamefont {X.}~\bibnamefont
  {Ma}}, \bibinfo {author} {\bibfnamefont {X.}~\bibnamefont {Yuan}}, \bibinfo
  {author} {\bibfnamefont {Z.}~\bibnamefont {Cao}}, \bibinfo {author}
  {\bibfnamefont {B.}~\bibnamefont {Qi}},\ and\ \bibinfo {author}
  {\bibfnamefont {Z.}~\bibnamefont {Zhang}},\ }\bibfield  {title} {\bibinfo
  {title} {Quantum random number generation},\ }\href
  {https://www.nature.com/articles/npjqi201621} {\bibfield  {journal} {\bibinfo
   {journal} {npj Quantum Inf.}\ }\textbf {\bibinfo {volume} {2}},\ \bibinfo
  {pages} {16021} (\bibinfo {year} {2016})}\BibitemShut {NoStop}%
\bibitem [{\citenamefont {Jozsa}\ and\ \citenamefont
  {Schlienz}(2000)}]{Jozsa_2000}%
  \BibitemOpen
  \bibfield  {author} {\bibinfo {author} {\bibfnamefont {R.}~\bibnamefont
  {Jozsa}}\ and\ \bibinfo {author} {\bibfnamefont {J.}~\bibnamefont
  {Schlienz}},\ }\bibfield  {title} {\bibinfo {title} {Distinguishability of
  states and von {N}eumann entropy},\ }\href
  {http://dx.doi.org/10.1103/PhysRevA.62.012301} {\bibfield  {journal}
  {\bibinfo  {journal} {Phys. Rev. A}\ }\textbf {\bibinfo {volume} {62}}
  (\bibinfo {year} {2000})}\BibitemShut {NoStop}%
\bibitem [{\citenamefont {Hausladen}\ \emph {et~al.}(1996)\citenamefont
  {Hausladen}, \citenamefont {Jozsa}, \citenamefont {Schumacher}, \citenamefont
  {Westmoreland},\ and\ \citenamefont {Wootters}}]{hausladen1996classical}%
  \BibitemOpen
  \bibfield  {author} {\bibinfo {author} {\bibfnamefont {P.}~\bibnamefont
  {Hausladen}}, \bibinfo {author} {\bibfnamefont {R.}~\bibnamefont {Jozsa}},
  \bibinfo {author} {\bibfnamefont {B.}~\bibnamefont {Schumacher}}, \bibinfo
  {author} {\bibfnamefont {M.}~\bibnamefont {Westmoreland}},\ and\ \bibinfo
  {author} {\bibfnamefont {W.~K.}\ \bibnamefont {Wootters}},\ }\bibfield
  {title} {\bibinfo {title} {Classical information capacity of a quantum
  channel},\ }\href
  {https://journals.aps.org/pra/abstract/10.1103/PhysRevA.54.1869} {\bibfield
  {journal} {\bibinfo  {journal} {Phys. Rev. A}\ }\textbf {\bibinfo {volume}
  {54}},\ \bibinfo {pages} {1869} (\bibinfo {year} {1996})}\BibitemShut
  {NoStop}%
\bibitem [{\citenamefont {Terhal}\ \emph {et~al.}(2001)\citenamefont {Terhal},
  \citenamefont {DiVincenzo},\ and\ \citenamefont {Leung}}]{Terhal2001}%
  \BibitemOpen
  \bibfield  {author} {\bibinfo {author} {\bibfnamefont {B.~M.}\ \bibnamefont
  {Terhal}}, \bibinfo {author} {\bibfnamefont {D.~P.}\ \bibnamefont
  {DiVincenzo}},\ and\ \bibinfo {author} {\bibfnamefont {D.~W.}\ \bibnamefont
  {Leung}},\ }\bibfield  {title} {\bibinfo {title} {Hiding bits in bell
  states},\ }\href {https://link.aps.org/doi/10.1103/PhysRevLett.86.5807}
  {\bibfield  {journal} {\bibinfo  {journal} {Phys. Rev. Lett.}\ }\textbf
  {\bibinfo {volume} {86}},\ \bibinfo {pages} {5807} (\bibinfo {year}
  {2001})}\BibitemShut {NoStop}%
\bibitem [{\citenamefont {Wang}\ and\ \citenamefont
  {Wilde}(2019)}]{Wang_2019_dis}%
  \BibitemOpen
  \bibfield  {author} {\bibinfo {author} {\bibfnamefont {X.}~\bibnamefont
  {Wang}}\ and\ \bibinfo {author} {\bibfnamefont {M.~M.}\ \bibnamefont
  {Wilde}},\ }\bibfield  {title} {\bibinfo {title} {Resource theory of
  asymmetric distinguishability},\ }\href
  {http://dx.doi.org/10.1103/PhysRevResearch.1.033170} {\bibfield  {journal}
  {\bibinfo  {journal} {Phys. Rev. Research}\ }\textbf {\bibinfo {volume} {1}}
  (\bibinfo {year} {2019})}\BibitemShut {NoStop}%
\bibitem [{\citenamefont {Salzmann}\ \emph {et~al.}(2021)\citenamefont
  {Salzmann}, \citenamefont {Datta}, \citenamefont {Gour}, \citenamefont
  {Wang},\ and\ \citenamefont {Wilde}}]{salzmann2021symmetric}%
  \BibitemOpen
  \bibfield  {author} {\bibinfo {author} {\bibfnamefont {R.}~\bibnamefont
  {Salzmann}}, \bibinfo {author} {\bibfnamefont {N.}~\bibnamefont {Datta}},
  \bibinfo {author} {\bibfnamefont {G.}~\bibnamefont {Gour}}, \bibinfo {author}
  {\bibfnamefont {X.}~\bibnamefont {Wang}},\ and\ \bibinfo {author}
  {\bibfnamefont {M.~M.}\ \bibnamefont {Wilde}},\ }\bibfield  {title} {\bibinfo
  {title} {Symmetric distinguishability as a quantum resource},\ }\href
  {https://iopscience.iop.org/article/10.1088/1367-2630/ac14aa/meta} {\bibfield
   {journal} {\bibinfo  {journal} {New J. Phys.}\ }\textbf {\bibinfo {volume}
  {23}},\ \bibinfo {pages} {083016} (\bibinfo {year} {2021})}\BibitemShut
  {NoStop}%
\bibitem [{\citenamefont {Bennett}\ \emph
  {et~al.}(1999{\natexlab{a}})\citenamefont {Bennett}, \citenamefont
  {DiVincenzo}, \citenamefont {Fuchs}, \citenamefont {Mor}, \citenamefont
  {Rains}, \citenamefont {Shor}, \citenamefont {Smolin},\ and\ \citenamefont
  {Wootters}}]{Bennett1999b}%
  \BibitemOpen
  \bibfield  {author} {\bibinfo {author} {\bibfnamefont {C.~H.}\ \bibnamefont
  {Bennett}}, \bibinfo {author} {\bibfnamefont {D.~P.}\ \bibnamefont
  {DiVincenzo}}, \bibinfo {author} {\bibfnamefont {C.~A.}\ \bibnamefont
  {Fuchs}}, \bibinfo {author} {\bibfnamefont {T.}~\bibnamefont {Mor}}, \bibinfo
  {author} {\bibfnamefont {E.}~\bibnamefont {Rains}}, \bibinfo {author}
  {\bibfnamefont {P.~W.}\ \bibnamefont {Shor}}, \bibinfo {author}
  {\bibfnamefont {J.~A.}\ \bibnamefont {Smolin}},\ and\ \bibinfo {author}
  {\bibfnamefont {W.~K.}\ \bibnamefont {Wootters}},\ }\bibfield  {title}
  {\bibinfo {title} {{Quantum nonlocality without entanglement}},\ }\href
  {https://doi.org/10.1103/PhysRevA.59.1070} {\bibfield  {journal} {\bibinfo
  {journal} {Phys. Rev. A}\ }\textbf {\bibinfo {volume} {59}},\ \bibinfo
  {pages} {1070} (\bibinfo {year} {1999}{\natexlab{a}})}\BibitemShut {NoStop}%
\bibitem [{\citenamefont {Walgate}\ \emph {et~al.}(2000)\citenamefont
  {Walgate}, \citenamefont {Short}, \citenamefont {Hardy},\ and\ \citenamefont
  {Vedral}}]{Walgate2000}%
  \BibitemOpen
  \bibfield  {author} {\bibinfo {author} {\bibfnamefont {J.}~\bibnamefont
  {Walgate}}, \bibinfo {author} {\bibfnamefont {A.~J.}\ \bibnamefont {Short}},
  \bibinfo {author} {\bibfnamefont {L.}~\bibnamefont {Hardy}},\ and\ \bibinfo
  {author} {\bibfnamefont {V.}~\bibnamefont {Vedral}},\ }\bibfield  {title}
  {\bibinfo {title} {{Local distinguishability of multipartite orthogonal
  quantum states}},\ }\href
  {https://journals.aps.org/prl/pdf/10.1103/PhysRevLett.85.4972} {\bibfield
  {journal} {\bibinfo  {journal} {Phys. Rev. Lett.}\ }\textbf {\bibinfo
  {volume} {85}},\ \bibinfo {pages} {4972} (\bibinfo {year}
  {2000})}\BibitemShut {NoStop}%
\bibitem [{\citenamefont {Bandyopadhyay}\ \emph {et~al.}(2016)\citenamefont
  {Bandyopadhyay}, \citenamefont {Halder},\ and\ \citenamefont
  {Nathanson}}]{Bandyopadhyay2015}%
  \BibitemOpen
  \bibfield  {author} {\bibinfo {author} {\bibfnamefont {S.}~\bibnamefont
  {Bandyopadhyay}}, \bibinfo {author} {\bibfnamefont {S.}~\bibnamefont
  {Halder}},\ and\ \bibinfo {author} {\bibfnamefont {M.}~\bibnamefont
  {Nathanson}},\ }\bibfield  {title} {\bibinfo {title} {Entanglement as a
  resource for local state discrimination in multipartite systems},\ }\href
  {http://dx.doi.org/10.1103/PhysRevA.94.022311} {\bibfield  {journal}
  {\bibinfo  {journal} {Phys. Rev. A}\ }\textbf {\bibinfo {volume} {94}}
  (\bibinfo {year} {2016})}\BibitemShut {NoStop}%
\bibitem [{\citenamefont {Zhu}\ \emph {et~al.}(2024)\citenamefont {Zhu},
  \citenamefont {Liu}, \citenamefont {Zhu},\ and\ \citenamefont
  {Wang}}]{zhu2024limitations}%
  \BibitemOpen
  \bibfield  {author} {\bibinfo {author} {\bibfnamefont {C.}~\bibnamefont
  {Zhu}}, \bibinfo {author} {\bibfnamefont {Z.}~\bibnamefont {Liu}}, \bibinfo
  {author} {\bibfnamefont {C.}~\bibnamefont {Zhu}},\ and\ \bibinfo {author}
  {\bibfnamefont {X.}~\bibnamefont {Wang}},\ }\bibfield  {title} {\bibinfo
  {title} {Limitations of classically simulable measurements for quantum state
  discrimination},\ }\href
  {https://journals.aps.org/prl/abstract/10.1103/PhysRevLett.133.010202}
  {\bibfield  {journal} {\bibinfo  {journal} {Phys. Rev. Lett.}\ }\textbf
  {\bibinfo {volume} {133}},\ \bibinfo {pages} {010202} (\bibinfo {year}
  {2024})}\BibitemShut {NoStop}%
\bibitem [{\citenamefont {Zhu}\ \emph {et~al.}(2025)\citenamefont {Zhu},
  \citenamefont {Zhu}, \citenamefont {Liu},\ and\ \citenamefont
  {Wang}}]{zhu2025entanglement}%
  \BibitemOpen
  \bibfield  {author} {\bibinfo {author} {\bibfnamefont {C.}~\bibnamefont
  {Zhu}}, \bibinfo {author} {\bibfnamefont {C.}~\bibnamefont {Zhu}}, \bibinfo
  {author} {\bibfnamefont {Z.}~\bibnamefont {Liu}},\ and\ \bibinfo {author}
  {\bibfnamefont {X.}~\bibnamefont {Wang}},\ }\bibfield  {title} {\bibinfo
  {title} {Entanglement cost of discriminating quantum states under locality
  constraints},\ }\href {https://ieeexplore.ieee.org/document/10619706/}
  {\bibfield  {journal} {\bibinfo  {journal} {IEEE Trans. Inf. Theory}\
  }\textbf {\bibinfo {volume} {71}},\ \bibinfo {pages} {2826} (\bibinfo {year}
  {2025})}\BibitemShut {NoStop}%
\bibitem [{\citenamefont {Dittel}\ \emph {et~al.}(2021)\citenamefont {Dittel},
  \citenamefont {Dufour}, \citenamefont {Weihs},\ and\ \citenamefont
  {Buchleitner}}]{Dittel_2021}%
  \BibitemOpen
  \bibfield  {author} {\bibinfo {author} {\bibfnamefont {C.}~\bibnamefont
  {Dittel}}, \bibinfo {author} {\bibfnamefont {G.}~\bibnamefont {Dufour}},
  \bibinfo {author} {\bibfnamefont {G.}~\bibnamefont {Weihs}},\ and\ \bibinfo
  {author} {\bibfnamefont {A.}~\bibnamefont {Buchleitner}},\ }\bibfield
  {title} {\bibinfo {title} {Wave-particle duality of many-body quantum
  states},\ }\href {http://dx.doi.org/10.1103/PhysRevX.11.031041} {\bibfield
  {journal} {\bibinfo  {journal} {Phys. Rev. X}\ }\textbf {\bibinfo {volume}
  {11}} (\bibinfo {year} {2021})}\BibitemShut {NoStop}%
\bibitem [{\citenamefont {Menssen}\ \emph {et~al.}(2017)\citenamefont
  {Menssen}, \citenamefont {Jones}, \citenamefont {Metcalf}, \citenamefont
  {Tichy}, \citenamefont {Barz}, \citenamefont {Kolthammer},\ and\
  \citenamefont {Walmsley}}]{Menssen_2017}%
  \BibitemOpen
  \bibfield  {author} {\bibinfo {author} {\bibfnamefont {A.~J.}\ \bibnamefont
  {Menssen}}, \bibinfo {author} {\bibfnamefont {A.~E.}\ \bibnamefont {Jones}},
  \bibinfo {author} {\bibfnamefont {B.~J.}\ \bibnamefont {Metcalf}}, \bibinfo
  {author} {\bibfnamefont {M.~C.}\ \bibnamefont {Tichy}}, \bibinfo {author}
  {\bibfnamefont {S.}~\bibnamefont {Barz}}, \bibinfo {author} {\bibfnamefont
  {W.~S.}\ \bibnamefont {Kolthammer}},\ and\ \bibinfo {author} {\bibfnamefont
  {I.~A.}\ \bibnamefont {Walmsley}},\ }\bibfield  {title} {\bibinfo {title}
  {Distinguishability and many-particle interference},\ }\href
  {http://dx.doi.org/10.1103/PhysRevLett.118.153603} {\bibfield  {journal}
  {\bibinfo  {journal} {Phys. Rev. Lett.}\ }\textbf {\bibinfo {volume} {118}}
  (\bibinfo {year} {2017})}\BibitemShut {NoStop}%
\bibitem [{\citenamefont {Brunner}\ \emph {et~al.}(2023)\citenamefont
  {Brunner}, \citenamefont {Pausch}, \citenamefont {Carnio}, \citenamefont
  {Dufour}, \citenamefont {Rodríguez},\ and\ \citenamefont
  {Buchleitner}}]{Brunner_2023}%
  \BibitemOpen
  \bibfield  {author} {\bibinfo {author} {\bibfnamefont {E.}~\bibnamefont
  {Brunner}}, \bibinfo {author} {\bibfnamefont {L.}~\bibnamefont {Pausch}},
  \bibinfo {author} {\bibfnamefont {E.~G.}\ \bibnamefont {Carnio}}, \bibinfo
  {author} {\bibfnamefont {G.}~\bibnamefont {Dufour}}, \bibinfo {author}
  {\bibfnamefont {A.}~\bibnamefont {Rodríguez}},\ and\ \bibinfo {author}
  {\bibfnamefont {A.}~\bibnamefont {Buchleitner}},\ }\bibfield  {title}
  {\bibinfo {title} {Many-body interference at the onset of chaos},\ }\href
  {http://dx.doi.org/10.1103/PhysRevLett.130.080401} {\bibfield  {journal}
  {\bibinfo  {journal} {Phys. Rev. Lett.}\ }\textbf {\bibinfo {volume} {130}}
  (\bibinfo {year} {2023})}\BibitemShut {NoStop}%
\bibitem [{\citenamefont {Bagan}\ \emph {et~al.}(2018)\citenamefont {Bagan},
  \citenamefont {Calsamiglia}, \citenamefont {Bergou},\ and\ \citenamefont
  {Hillery}}]{Bagan_2018}%
  \BibitemOpen
  \bibfield  {author} {\bibinfo {author} {\bibfnamefont {E.}~\bibnamefont
  {Bagan}}, \bibinfo {author} {\bibfnamefont {J.}~\bibnamefont {Calsamiglia}},
  \bibinfo {author} {\bibfnamefont {J.~A.}\ \bibnamefont {Bergou}},\ and\
  \bibinfo {author} {\bibfnamefont {M.}~\bibnamefont {Hillery}},\ }\bibfield
  {title} {\bibinfo {title} {Duality games and operational duality relations},\
  }\href {http://dx.doi.org/10.1103/PhysRevLett.120.050402} {\bibfield
  {journal} {\bibinfo  {journal} {Phys. Rev. Lett.}\ }\textbf {\bibinfo
  {volume} {120}} (\bibinfo {year} {2018})}\BibitemShut {NoStop}%
\bibitem [{\citenamefont {Takagi}\ and\ \citenamefont
  {Regula}(2019)}]{Takagi_2019}%
  \BibitemOpen
  \bibfield  {author} {\bibinfo {author} {\bibfnamefont {R.}~\bibnamefont
  {Takagi}}\ and\ \bibinfo {author} {\bibfnamefont {B.}~\bibnamefont
  {Regula}},\ }\bibfield  {title} {\bibinfo {title} {General resource theories
  in quantum mechanics and beyond: Operational characterization via
  discrimination tasks},\ }\href {https://doi.org/10.1103%2Fphysrevx.9.031053}
  {\bibfield  {journal} {\bibinfo  {journal} {Phys. Rev. X}\ }\textbf {\bibinfo
  {volume} {9}} (\bibinfo {year} {2019})}\BibitemShut {NoStop}%
\bibitem [{\citenamefont {Takagi}\ \emph {et~al.}(2019)\citenamefont {Takagi},
  \citenamefont {Regula}, \citenamefont {Bu}, \citenamefont {Liu},\ and\
  \citenamefont {Adesso}}]{takagi2019operational}%
  \BibitemOpen
  \bibfield  {author} {\bibinfo {author} {\bibfnamefont {R.}~\bibnamefont
  {Takagi}}, \bibinfo {author} {\bibfnamefont {B.}~\bibnamefont {Regula}},
  \bibinfo {author} {\bibfnamefont {K.}~\bibnamefont {Bu}}, \bibinfo {author}
  {\bibfnamefont {Z.-W.}\ \bibnamefont {Liu}},\ and\ \bibinfo {author}
  {\bibfnamefont {G.}~\bibnamefont {Adesso}},\ }\bibfield  {title} {\bibinfo
  {title} {Operational advantage of quantum resources in subchannel
  discrimination},\ }\href
  {https://link.aps.org/doi/10.1103/PhysRevLett.122.140402} {\bibfield
  {journal} {\bibinfo  {journal} {Phys. Rev. Lett.}\ }\textbf {\bibinfo
  {volume} {122}},\ \bibinfo {pages} {140402} (\bibinfo {year}
  {2019})}\BibitemShut {NoStop}%
\bibitem [{\citenamefont {Wu}\ \emph {et~al.}(2021{\natexlab{a}})\citenamefont
  {Wu}, \citenamefont {Kondra}, \citenamefont {Rana}, \citenamefont {Scandolo},
  \citenamefont {Xiang}, \citenamefont {Li}, \citenamefont {Guo},\ and\
  \citenamefont {Streltsov}}]{Wu_2021}%
  \BibitemOpen
  \bibfield  {author} {\bibinfo {author} {\bibfnamefont {K.-D.}\ \bibnamefont
  {Wu}}, \bibinfo {author} {\bibfnamefont {T.~V.}\ \bibnamefont {Kondra}},
  \bibinfo {author} {\bibfnamefont {S.}~\bibnamefont {Rana}}, \bibinfo {author}
  {\bibfnamefont {C.~M.}\ \bibnamefont {Scandolo}}, \bibinfo {author}
  {\bibfnamefont {G.-Y.}\ \bibnamefont {Xiang}}, \bibinfo {author}
  {\bibfnamefont {C.-F.}\ \bibnamefont {Li}}, \bibinfo {author} {\bibfnamefont
  {G.-C.}\ \bibnamefont {Guo}},\ and\ \bibinfo {author} {\bibfnamefont
  {A.}~\bibnamefont {Streltsov}},\ }\bibfield  {title} {\bibinfo {title}
  {Operational resource theory of imaginarity},\ }\href
  {http://dx.doi.org/10.1103/PhysRevLett.126.090401} {\bibfield  {journal}
  {\bibinfo  {journal} {Phys. Rev. Lett.}\ }\textbf {\bibinfo {volume} {126}}
  (\bibinfo {year} {2021}{\natexlab{a}})}\BibitemShut {NoStop}%
\bibitem [{\citenamefont {Regula}\ \emph {et~al.}(2021)\citenamefont {Regula},
  \citenamefont {Lami}, \citenamefont {Ferrari},\ and\ \citenamefont
  {Takagi}}]{Regula_2021}%
  \BibitemOpen
  \bibfield  {author} {\bibinfo {author} {\bibfnamefont {B.}~\bibnamefont
  {Regula}}, \bibinfo {author} {\bibfnamefont {L.}~\bibnamefont {Lami}},
  \bibinfo {author} {\bibfnamefont {G.}~\bibnamefont {Ferrari}},\ and\ \bibinfo
  {author} {\bibfnamefont {R.}~\bibnamefont {Takagi}},\ }\bibfield  {title}
  {\bibinfo {title} {Operational quantification of continuous-variable quantum
  resources},\ }\href {http://dx.doi.org/10.1103/PhysRevLett.126.110403}
  {\bibfield  {journal} {\bibinfo  {journal} {Phys. Rev. Lett.}\ }\textbf
  {\bibinfo {volume} {126}} (\bibinfo {year} {2021})}\BibitemShut {NoStop}%
\bibitem [{\citenamefont {Chitambar}\ \emph {et~al.}(2014)\citenamefont
  {Chitambar}, \citenamefont {Leung}, \citenamefont {Mančinska}, \citenamefont
  {Ozols},\ and\ \citenamefont {Winter}}]{Chitambar_2014}%
  \BibitemOpen
  \bibfield  {author} {\bibinfo {author} {\bibfnamefont {E.}~\bibnamefont
  {Chitambar}}, \bibinfo {author} {\bibfnamefont {D.}~\bibnamefont {Leung}},
  \bibinfo {author} {\bibfnamefont {L.}~\bibnamefont {Mančinska}}, \bibinfo
  {author} {\bibfnamefont {M.}~\bibnamefont {Ozols}},\ and\ \bibinfo {author}
  {\bibfnamefont {A.}~\bibnamefont {Winter}},\ }\bibfield  {title} {\bibinfo
  {title} {Everything you always wanted to know about {LOCC} (but were afraid
  to ask)},\ }\href {http://dx.doi.org/10.1007/s00220-014-1953-9} {\bibfield
  {journal} {\bibinfo  {journal} {Commun. Math. Phys.}\ }\textbf {\bibinfo
  {volume} {328}},\ \bibinfo {pages} {303–326} (\bibinfo {year}
  {2014})}\BibitemShut {NoStop}%
\bibitem [{\citenamefont {Veitch}\ \emph {et~al.}(2014)\citenamefont {Veitch},
  \citenamefont {Hamed~Mousavian}, \citenamefont {Gottesman},\ and\
  \citenamefont {Emerson}}]{Veitch_2014}%
  \BibitemOpen
  \bibfield  {author} {\bibinfo {author} {\bibfnamefont {V.}~\bibnamefont
  {Veitch}}, \bibinfo {author} {\bibfnamefont {S.~A.}\ \bibnamefont
  {Hamed~Mousavian}}, \bibinfo {author} {\bibfnamefont {D.}~\bibnamefont
  {Gottesman}},\ and\ \bibinfo {author} {\bibfnamefont {J.}~\bibnamefont
  {Emerson}},\ }\bibfield  {title} {\bibinfo {title} {The resource theory of
  stabilizer quantum computation},\ }\href
  {http://dx.doi.org/10.1088/1367-2630/16/1/013009} {\bibfield  {journal}
  {\bibinfo  {journal} {New J. Phys.}\ }\textbf {\bibinfo {volume} {16}},\
  \bibinfo {pages} {013009} (\bibinfo {year} {2014})}\BibitemShut {NoStop}%
\bibitem [{\citenamefont {Veitch}\ \emph {et~al.}(2012)\citenamefont {Veitch},
  \citenamefont {Ferrie}, \citenamefont {Gross},\ and\ \citenamefont
  {Emerson}}]{Veitch2012}%
  \BibitemOpen
  \bibfield  {author} {\bibinfo {author} {\bibfnamefont {V.}~\bibnamefont
  {Veitch}}, \bibinfo {author} {\bibfnamefont {C.}~\bibnamefont {Ferrie}},
  \bibinfo {author} {\bibfnamefont {D.}~\bibnamefont {Gross}},\ and\ \bibinfo
  {author} {\bibfnamefont {J.}~\bibnamefont {Emerson}},\ }\bibfield  {title}
  {\bibinfo {title} {Negative quasi-probability as a resource for quantum
  computation},\ }\href {http://dx.doi.org/10.1088/1367-2630/14/11/113011}
  {\bibfield  {journal} {\bibinfo  {journal} {New J. Phys.}\ }\textbf {\bibinfo
  {volume} {14}},\ \bibinfo {pages} {113011} (\bibinfo {year}
  {2012})}\BibitemShut {NoStop}%
\bibitem [{\citenamefont {Aberg}(2006)}]{aberg2006quantifying}%
  \BibitemOpen
  \bibfield  {author} {\bibinfo {author} {\bibfnamefont {J.}~\bibnamefont
  {Aberg}},\ }\href@noop {} {\bibinfo {title} {Quantifying superposition}}
  (\bibinfo {year} {2006}),\ \Eprint {https://arxiv.org/abs/quant-ph/0612146}
  {arXiv:quant-ph/0612146 [quant-ph]} \BibitemShut {NoStop}%
\bibitem [{\citenamefont {Chitambar}\ and\ \citenamefont
  {Gour}(2016)}]{chitambar2016critical}%
  \BibitemOpen
  \bibfield  {author} {\bibinfo {author} {\bibfnamefont {E.}~\bibnamefont
  {Chitambar}}\ and\ \bibinfo {author} {\bibfnamefont {G.}~\bibnamefont
  {Gour}},\ }\bibfield  {title} {\bibinfo {title} {Critical examination of
  incoherent operations and a physically consistent resource theory of quantum
  coherence},\ }\href
  {https://journals.aps.org/prl/abstract/10.1103/PhysRevLett.117.030401}
  {\bibfield  {journal} {\bibinfo  {journal} {Phys. Rev. Lett.}\ }\textbf
  {\bibinfo {volume} {117}},\ \bibinfo {pages} {030401} (\bibinfo {year}
  {2016})}\BibitemShut {NoStop}%
\bibitem [{\citenamefont {Marvian}\ and\ \citenamefont
  {Spekkens}(2016)}]{Marvian_2016}%
  \BibitemOpen
  \bibfield  {author} {\bibinfo {author} {\bibfnamefont {I.}~\bibnamefont
  {Marvian}}\ and\ \bibinfo {author} {\bibfnamefont {R.~W.}\ \bibnamefont
  {Spekkens}},\ }\bibfield  {title} {\bibinfo {title} {How to quantify
  coherence: Distinguishing speakable and unspeakable notions},\ }\href
  {http://dx.doi.org/10.1103/PhysRevA.94.052324} {\bibfield  {journal}
  {\bibinfo  {journal} {Phys. Rev. A}\ }\textbf {\bibinfo {volume} {94}}
  (\bibinfo {year} {2016})}\BibitemShut {NoStop}%
\bibitem [{\citenamefont {Childs}\ \emph {et~al.}(2013)\citenamefont {Childs},
  \citenamefont {Leung}, \citenamefont {Man{\v{c}}inska},\ and\ \citenamefont
  {Ozols}}]{Childs2013}%
  \BibitemOpen
  \bibfield  {author} {\bibinfo {author} {\bibfnamefont {A.~M.}\ \bibnamefont
  {Childs}}, \bibinfo {author} {\bibfnamefont {D.}~\bibnamefont {Leung}},
  \bibinfo {author} {\bibfnamefont {L.}~\bibnamefont {Man{\v{c}}inska}},\ and\
  \bibinfo {author} {\bibfnamefont {M.}~\bibnamefont {Ozols}},\ }\bibfield
  {title} {\bibinfo {title} {{A framework for bounding nonlocality of state
  discrimination}},\ }\href {https://doi.org/10.1007/s00220-013-1784-0}
  {\bibfield  {journal} {\bibinfo  {journal} {Commun. Math. Phys.}\ }\textbf
  {\bibinfo {volume} {323}},\ \bibinfo {pages} {1121} (\bibinfo {year}
  {2013})}\BibitemShut {NoStop}%
\bibitem [{\citenamefont {Bandyopadhyay}(2011)}]{Bandyopadhyay2011a}%
  \BibitemOpen
  \bibfield  {author} {\bibinfo {author} {\bibfnamefont {S.}~\bibnamefont
  {Bandyopadhyay}},\ }\bibfield  {title} {\bibinfo {title} {{More nonlocality
  with less purity}},\ }\href
  {https://journals.aps.org/prl/pdf/10.1103/PhysRevLett.106.210402} {\bibfield
  {journal} {\bibinfo  {journal} {Phys. Rev. Lett.}\ }\textbf {\bibinfo
  {volume} {106}},\ \bibinfo {pages} {210402} (\bibinfo {year}
  {2011})}\BibitemShut {NoStop}%
\bibitem [{\citenamefont {Chitambar}\ \emph {et~al.}(2013)\citenamefont
  {Chitambar}, \citenamefont {Duan},\ and\ \citenamefont
  {Hsieh}}]{chitambar2013local}%
  \BibitemOpen
  \bibfield  {author} {\bibinfo {author} {\bibfnamefont {E.}~\bibnamefont
  {Chitambar}}, \bibinfo {author} {\bibfnamefont {R.}~\bibnamefont {Duan}},\
  and\ \bibinfo {author} {\bibfnamefont {M.-H.}\ \bibnamefont {Hsieh}},\
  }\bibfield  {title} {\bibinfo {title} {When do local operations and classical
  communication suffice for two-qubit state discrimination?},\ }\href
  {https://ieeexplore.ieee.org/stamp/stamp.jsp?arnumber=6687245} {\bibfield
  {journal} {\bibinfo  {journal} {IEEE Trans. Inf. Theory}\ }\textbf {\bibinfo
  {volume} {60}},\ \bibinfo {pages} {1549} (\bibinfo {year}
  {2013})}\BibitemShut {NoStop}%
\bibitem [{\citenamefont {Chitambar}\ and\ \citenamefont
  {Hsieh}(2013)}]{chitambar2013revisiting}%
  \BibitemOpen
  \bibfield  {author} {\bibinfo {author} {\bibfnamefont {E.}~\bibnamefont
  {Chitambar}}\ and\ \bibinfo {author} {\bibfnamefont {M.-H.}\ \bibnamefont
  {Hsieh}},\ }\bibfield  {title} {\bibinfo {title} {Revisiting the optimal
  detection of quantum information},\ }\href
  {https://journals.aps.org/pra/abstract/10.1103/PhysRevA.88.020302} {\bibfield
   {journal} {\bibinfo  {journal} {Phys. Rev. A}\ }\textbf {\bibinfo {volume}
  {88}},\ \bibinfo {pages} {020302} (\bibinfo {year} {2013})}\BibitemShut
  {NoStop}%
\bibitem [{\citenamefont {Oszmaniec}\ and\ \citenamefont
  {Biswas}(2019)}]{Oszmaniec_2019}%
  \BibitemOpen
  \bibfield  {author} {\bibinfo {author} {\bibfnamefont {M.}~\bibnamefont
  {Oszmaniec}}\ and\ \bibinfo {author} {\bibfnamefont {T.}~\bibnamefont
  {Biswas}},\ }\bibfield  {title} {\bibinfo {title} {Operational relevance of
  resource theories of quantum measurements},\ }\href
  {https://doi.org/10.22331%2Fq-2019-04-26-133} {\bibfield  {journal} {\bibinfo
   {journal} {Quantum}\ }\textbf {\bibinfo {volume} {3}},\ \bibinfo {pages}
  {133} (\bibinfo {year} {2019})}\BibitemShut {NoStop}%
\bibitem [{\citenamefont {Ducuara}\ and\ \citenamefont
  {Skrzypczyk}(2020)}]{Ducuara_2020}%
  \BibitemOpen
  \bibfield  {author} {\bibinfo {author} {\bibfnamefont {A.~F.}\ \bibnamefont
  {Ducuara}}\ and\ \bibinfo {author} {\bibfnamefont {P.}~\bibnamefont
  {Skrzypczyk}},\ }\bibfield  {title} {\bibinfo {title} {Operational
  interpretation of weight-based resource quantifiers in convex quantum
  resource theories},\ }\href
  {http://dx.doi.org/10.1103/PhysRevLett.125.110401} {\bibfield  {journal}
  {\bibinfo  {journal} {Phys. Rev. Lett.}\ }\textbf {\bibinfo {volume} {125}}
  (\bibinfo {year} {2020})}\BibitemShut {NoStop}%
\bibitem [{\citenamefont {Bandyopadhyay}\ \emph {et~al.}(2009)\citenamefont
  {Bandyopadhyay}, \citenamefont {Brassard}, \citenamefont {Kimmel},\ and\
  \citenamefont {Wootters}}]{Bandyopadhyay2009}%
  \BibitemOpen
  \bibfield  {author} {\bibinfo {author} {\bibfnamefont {S.}~\bibnamefont
  {Bandyopadhyay}}, \bibinfo {author} {\bibfnamefont {G.}~\bibnamefont
  {Brassard}}, \bibinfo {author} {\bibfnamefont {S.}~\bibnamefont {Kimmel}},\
  and\ \bibinfo {author} {\bibfnamefont {W.~K.}\ \bibnamefont {Wootters}},\
  }\bibfield  {title} {\bibinfo {title} {Entanglement cost of nonlocal
  measurements},\ }\href {https://link.aps.org/doi/10.1103/PhysRevA.80.012313}
  {\bibfield  {journal} {\bibinfo  {journal} {Phys. Rev. A}\ }\textbf {\bibinfo
  {volume} {80}},\ \bibinfo {pages} {012313} (\bibinfo {year}
  {2009})}\BibitemShut {NoStop}%
\bibitem [{\citenamefont {Yu}\ \emph {et~al.}(2014)\citenamefont {Yu},
  \citenamefont {Duan},\ and\ \citenamefont {Ying}}]{Yu_2014}%
  \BibitemOpen
  \bibfield  {author} {\bibinfo {author} {\bibfnamefont {N.}~\bibnamefont
  {Yu}}, \bibinfo {author} {\bibfnamefont {R.}~\bibnamefont {Duan}},\ and\
  \bibinfo {author} {\bibfnamefont {M.}~\bibnamefont {Ying}},\ }\bibfield
  {title} {\bibinfo {title} {Distinguishability of quantum states by positive
  operator-valued measures with positive partial transpose},\ }\href
  {http://dx.doi.org/10.1109/TIT.2014.2307575} {\bibfield  {journal} {\bibinfo
  {journal} {IEEE Trans. Inf. Theory}\ }\textbf {\bibinfo {volume} {60}},\
  \bibinfo {pages} {2069–2079} (\bibinfo {year} {2014})}\BibitemShut
  {NoStop}%
\bibitem [{\citenamefont {Bandyopadhyay}\ \emph {et~al.}(2015)\citenamefont
  {Bandyopadhyay}, \citenamefont {Cosentino}, \citenamefont {Johnston},
  \citenamefont {Russo}, \citenamefont {Watrous},\ and\ \citenamefont
  {Yu}}]{bandyopadhyay2015limitations}%
  \BibitemOpen
  \bibfield  {author} {\bibinfo {author} {\bibfnamefont {S.}~\bibnamefont
  {Bandyopadhyay}}, \bibinfo {author} {\bibfnamefont {A.}~\bibnamefont
  {Cosentino}}, \bibinfo {author} {\bibfnamefont {N.}~\bibnamefont {Johnston}},
  \bibinfo {author} {\bibfnamefont {V.}~\bibnamefont {Russo}}, \bibinfo
  {author} {\bibfnamefont {J.}~\bibnamefont {Watrous}},\ and\ \bibinfo {author}
  {\bibfnamefont {N.}~\bibnamefont {Yu}},\ }\bibfield  {title} {\bibinfo
  {title} {Limitations on separable measurements by convex optimization},\
  }\href@noop {} {\bibfield  {journal} {\bibinfo  {journal} {IEEE Trans. Inf.
  Theory}\ }\textbf {\bibinfo {volume} {61}},\ \bibinfo {pages} {3593}
  (\bibinfo {year} {2015})}\BibitemShut {NoStop}%
\bibitem [{\citenamefont {Bandyopadhyay}\ and\ \citenamefont
  {Russo}(2021)}]{Bandyopadhyay_2021}%
  \BibitemOpen
  \bibfield  {author} {\bibinfo {author} {\bibfnamefont {S.}~\bibnamefont
  {Bandyopadhyay}}\ and\ \bibinfo {author} {\bibfnamefont {V.}~\bibnamefont
  {Russo}},\ }\bibfield  {title} {\bibinfo {title} {Entanglement cost of
  discriminating noisy bell states by local operations and classical
  communication},\ }\href {http://dx.doi.org/10.1103/PhysRevA.104.032429}
  {\bibfield  {journal} {\bibinfo  {journal} {Phys. Rev. A}\ }\textbf {\bibinfo
  {volume} {104}} (\bibinfo {year} {2021})}\BibitemShut {NoStop}%
\bibitem [{\citenamefont {Bennett}\ \emph
  {et~al.}(1999{\natexlab{b}})\citenamefont {Bennett}, \citenamefont
  {DiVincenzo}, \citenamefont {Mor}, \citenamefont {Shor}, \citenamefont
  {Smolin},\ and\ \citenamefont {Terhal}}]{Bennett_1999}%
  \BibitemOpen
  \bibfield  {author} {\bibinfo {author} {\bibfnamefont {C.~H.}\ \bibnamefont
  {Bennett}}, \bibinfo {author} {\bibfnamefont {D.~P.}\ \bibnamefont
  {DiVincenzo}}, \bibinfo {author} {\bibfnamefont {T.}~\bibnamefont {Mor}},
  \bibinfo {author} {\bibfnamefont {P.~W.}\ \bibnamefont {Shor}}, \bibinfo
  {author} {\bibfnamefont {J.~A.}\ \bibnamefont {Smolin}},\ and\ \bibinfo
  {author} {\bibfnamefont {B.~M.}\ \bibnamefont {Terhal}},\ }\bibfield  {title}
  {\bibinfo {title} {Unextendible product bases and bound entanglement},\
  }\href {http://dx.doi.org/10.1103/PhysRevLett.82.5385} {\bibfield  {journal}
  {\bibinfo  {journal} {Phys. Rev. Lett.}\ }\textbf {\bibinfo {volume} {82}},\
  \bibinfo {pages} {5385–5388} (\bibinfo {year}
  {1999}{\natexlab{b}})}\BibitemShut {NoStop}%
\bibitem [{\citenamefont {Ghosh}\ \emph {et~al.}(2001)\citenamefont {Ghosh},
  \citenamefont {Kar}, \citenamefont {Roy}, \citenamefont {Sen(De)},\ and\
  \citenamefont {Sen}}]{Ghosh_2001}%
  \BibitemOpen
  \bibfield  {author} {\bibinfo {author} {\bibfnamefont {S.}~\bibnamefont
  {Ghosh}}, \bibinfo {author} {\bibfnamefont {G.}~\bibnamefont {Kar}}, \bibinfo
  {author} {\bibfnamefont {A.}~\bibnamefont {Roy}}, \bibinfo {author}
  {\bibfnamefont {A.}~\bibnamefont {Sen(De)}},\ and\ \bibinfo {author}
  {\bibfnamefont {U.}~\bibnamefont {Sen}},\ }\bibfield  {title} {\bibinfo
  {title} {Distinguishability of bell states},\ }\href
  {http://dx.doi.org/10.1103/PhysRevLett.87.277902} {\bibfield  {journal}
  {\bibinfo  {journal} {Phys. Rev. Lett.}\ }\textbf {\bibinfo {volume} {87}}
  (\bibinfo {year} {2001})}\BibitemShut {NoStop}%
\bibitem [{\citenamefont {Halder}\ \emph
  {et~al.}(2019{\natexlab{a}})\citenamefont {Halder}, \citenamefont {Banik},
  \citenamefont {Agrawal},\ and\ \citenamefont {Bandyopadhyay}}]{Halder2019}%
  \BibitemOpen
  \bibfield  {author} {\bibinfo {author} {\bibfnamefont {S.}~\bibnamefont
  {Halder}}, \bibinfo {author} {\bibfnamefont {M.}~\bibnamefont {Banik}},
  \bibinfo {author} {\bibfnamefont {S.}~\bibnamefont {Agrawal}},\ and\ \bibinfo
  {author} {\bibfnamefont {S.}~\bibnamefont {Bandyopadhyay}},\ }\bibfield
  {title} {\bibinfo {title} {Strong quantum nonlocality without entanglement},\
  }\href {http://dx.doi.org/10.1103/PhysRevLett.122.040403} {\bibfield
  {journal} {\bibinfo  {journal} {Phys. Rev. Lett.}\ }\textbf {\bibinfo
  {volume} {122}} (\bibinfo {year} {2019}{\natexlab{a}})}\BibitemShut {NoStop}%
\bibitem [{\citenamefont {Bu}\ \emph {et~al.}(2017)\citenamefont {Bu},
  \citenamefont {Singh}, \citenamefont {Fei}, \citenamefont {Pati},\ and\
  \citenamefont {Wu}}]{Bu_2017}%
  \BibitemOpen
  \bibfield  {author} {\bibinfo {author} {\bibfnamefont {K.}~\bibnamefont
  {Bu}}, \bibinfo {author} {\bibfnamefont {U.}~\bibnamefont {Singh}}, \bibinfo
  {author} {\bibfnamefont {S.-M.}\ \bibnamefont {Fei}}, \bibinfo {author}
  {\bibfnamefont {A.~K.}\ \bibnamefont {Pati}},\ and\ \bibinfo {author}
  {\bibfnamefont {J.}~\bibnamefont {Wu}},\ }\bibfield  {title} {\bibinfo
  {title} {Maximum relative entropy of coherence: An operational coherence
  measure},\ }\href {http://dx.doi.org/10.1103/PhysRevLett.119.150405}
  {\bibfield  {journal} {\bibinfo  {journal} {Phys. Rev. Lett.}\ }\textbf
  {\bibinfo {volume} {119}} (\bibinfo {year} {2017})}\BibitemShut {NoStop}%
\bibitem [{\citenamefont {Zheng}\ \emph {et~al.}(2018)\citenamefont {Zheng},
  \citenamefont {Ma}, \citenamefont {Wang}, \citenamefont {Fei},\ and\
  \citenamefont {Peng}}]{zheng2018experimental}%
  \BibitemOpen
  \bibfield  {author} {\bibinfo {author} {\bibfnamefont {W.}~\bibnamefont
  {Zheng}}, \bibinfo {author} {\bibfnamefont {Z.}~\bibnamefont {Ma}}, \bibinfo
  {author} {\bibfnamefont {H.}~\bibnamefont {Wang}}, \bibinfo {author}
  {\bibfnamefont {S.-M.}\ \bibnamefont {Fei}},\ and\ \bibinfo {author}
  {\bibfnamefont {X.}~\bibnamefont {Peng}},\ }\bibfield  {title} {\bibinfo
  {title} {Experimental demonstration of observability and operability of
  robustness of coherence},\ }\href
  {https://journals.aps.org/prl/abstract/10.1103/PhysRevLett.120.230504}
  {\bibfield  {journal} {\bibinfo  {journal} {Phys. Rev. Lett.}\ }\textbf
  {\bibinfo {volume} {120}},\ \bibinfo {pages} {230504} (\bibinfo {year}
  {2018})}\BibitemShut {NoStop}%
\bibitem [{\citenamefont {Piani}\ \emph {et~al.}(2016)\citenamefont {Piani},
  \citenamefont {Cianciaruso}, \citenamefont {Bromley}, \citenamefont {Napoli},
  \citenamefont {Johnston},\ and\ \citenamefont {Adesso}}]{Piani_2016}%
  \BibitemOpen
  \bibfield  {author} {\bibinfo {author} {\bibfnamefont {M.}~\bibnamefont
  {Piani}}, \bibinfo {author} {\bibfnamefont {M.}~\bibnamefont {Cianciaruso}},
  \bibinfo {author} {\bibfnamefont {T.~R.}\ \bibnamefont {Bromley}}, \bibinfo
  {author} {\bibfnamefont {C.}~\bibnamefont {Napoli}}, \bibinfo {author}
  {\bibfnamefont {N.}~\bibnamefont {Johnston}},\ and\ \bibinfo {author}
  {\bibfnamefont {G.}~\bibnamefont {Adesso}},\ }\bibfield  {title} {\bibinfo
  {title} {Robustness of asymmetry and coherence of quantum states},\ }\href
  {http://dx.doi.org/10.1103/PhysRevA.93.042107} {\bibfield  {journal}
  {\bibinfo  {journal} {Phys. Rev. A}\ }\textbf {\bibinfo {volume} {93}}
  (\bibinfo {year} {2016})}\BibitemShut {NoStop}%
\bibitem [{\citenamefont {Chitambar}\ and\ \citenamefont
  {Hsieh}(2016)}]{Chitambar_2016}%
  \BibitemOpen
  \bibfield  {author} {\bibinfo {author} {\bibfnamefont {E.}~\bibnamefont
  {Chitambar}}\ and\ \bibinfo {author} {\bibfnamefont {M.-H.}\ \bibnamefont
  {Hsieh}},\ }\bibfield  {title} {\bibinfo {title} {Relating the resource
  theories of entanglement and quantum coherence},\ }\href
  {http://dx.doi.org/10.1103/PhysRevLett.117.020402} {\bibfield  {journal}
  {\bibinfo  {journal} {Phys. Rev. Lett.}\ }\textbf {\bibinfo {volume} {117}}
  (\bibinfo {year} {2016})}\BibitemShut {NoStop}%
\bibitem [{\citenamefont {Yu}\ \emph {et~al.}(2012)\citenamefont {Yu},
  \citenamefont {Duan},\ and\ \citenamefont {Ying}}]{yu2012four}%
  \BibitemOpen
  \bibfield  {author} {\bibinfo {author} {\bibfnamefont {N.}~\bibnamefont
  {Yu}}, \bibinfo {author} {\bibfnamefont {R.}~\bibnamefont {Duan}},\ and\
  \bibinfo {author} {\bibfnamefont {M.}~\bibnamefont {Ying}},\ }\bibfield
  {title} {\bibinfo {title} {Four locally indistinguishable ququad-ququad
  orthogonal maximally entangled states},\ }\href
  {https://journals.aps.org/prl/abstract/10.1103/PhysRevLett.109.020506}
  {\bibfield  {journal} {\bibinfo  {journal} {Phys. Rev. Lett.}\ }\textbf
  {\bibinfo {volume} {109}},\ \bibinfo {pages} {020506} (\bibinfo {year}
  {2012})}\BibitemShut {NoStop}%
\bibitem [{\citenamefont {Halder}\ \emph
  {et~al.}(2019{\natexlab{b}})\citenamefont {Halder}, \citenamefont {Banik},
  \citenamefont {Agrawal},\ and\ \citenamefont
  {Bandyopadhyay}}]{halder2019strong}%
  \BibitemOpen
  \bibfield  {author} {\bibinfo {author} {\bibfnamefont {S.}~\bibnamefont
  {Halder}}, \bibinfo {author} {\bibfnamefont {M.}~\bibnamefont {Banik}},
  \bibinfo {author} {\bibfnamefont {S.}~\bibnamefont {Agrawal}},\ and\ \bibinfo
  {author} {\bibfnamefont {S.}~\bibnamefont {Bandyopadhyay}},\ }\bibfield
  {title} {\bibinfo {title} {Strong quantum nonlocality without entanglement},\
  }\href {https://journals.aps.org/prl/abstract/10.1103/PhysRevLett.122.040403}
  {\bibfield  {journal} {\bibinfo  {journal} {Phys. Rev. Lett.}\ }\textbf
  {\bibinfo {volume} {122}},\ \bibinfo {pages} {040403} (\bibinfo {year}
  {2019}{\natexlab{b}})}\BibitemShut {NoStop}%
\bibitem [{\citenamefont {Banik}\ \emph {et~al.}(2021)\citenamefont {Banik},
  \citenamefont {Guha}, \citenamefont {Alimuddin}, \citenamefont {Kar},
  \citenamefont {Halder},\ and\ \citenamefont
  {Bhattacharya}}]{banik2021multicopy}%
  \BibitemOpen
  \bibfield  {author} {\bibinfo {author} {\bibfnamefont {M.}~\bibnamefont
  {Banik}}, \bibinfo {author} {\bibfnamefont {T.}~\bibnamefont {Guha}},
  \bibinfo {author} {\bibfnamefont {M.}~\bibnamefont {Alimuddin}}, \bibinfo
  {author} {\bibfnamefont {G.}~\bibnamefont {Kar}}, \bibinfo {author}
  {\bibfnamefont {S.}~\bibnamefont {Halder}},\ and\ \bibinfo {author}
  {\bibfnamefont {S.~S.}\ \bibnamefont {Bhattacharya}},\ }\bibfield  {title}
  {\bibinfo {title} {Multicopy adaptive local discrimination: Strongest
  possible two-qubit nonlocal bases},\ }\href
  {https://journals.aps.org/prl/abstract/10.1103/PhysRevLett.126.210505}
  {\bibfield  {journal} {\bibinfo  {journal} {Phys. Rev. Lett.}\ }\textbf
  {\bibinfo {volume} {126}},\ \bibinfo {pages} {210505} (\bibinfo {year}
  {2021})}\BibitemShut {NoStop}%
\bibitem [{\citenamefont {Konig}\ \emph {et~al.}(2009)\citenamefont {Konig},
  \citenamefont {Renner},\ and\ \citenamefont {Schaffner}}]{Konig_2009}%
  \BibitemOpen
  \bibfield  {author} {\bibinfo {author} {\bibfnamefont {R.}~\bibnamefont
  {Konig}}, \bibinfo {author} {\bibfnamefont {R.}~\bibnamefont {Renner}},\ and\
  \bibinfo {author} {\bibfnamefont {C.}~\bibnamefont {Schaffner}},\ }\bibfield
  {title} {\bibinfo {title} {The operational meaning of min- and max-entropy},\
  }\href {http://dx.doi.org/10.1109/TIT.2009.2025545} {\bibfield  {journal}
  {\bibinfo  {journal} {IEEE Trans. Inf. Theory}\ }\textbf {\bibinfo {volume}
  {55}},\ \bibinfo {pages} {4337–4347} (\bibinfo {year} {2009})}\BibitemShut
  {NoStop}%
\bibitem [{\citenamefont {Walls}(1983)}]{walls1983squeezed}%
  \BibitemOpen
  \bibfield  {author} {\bibinfo {author} {\bibfnamefont {D.~F.}\ \bibnamefont
  {Walls}},\ }\bibfield  {title} {\bibinfo {title} {Squeezed states of light},\
  }\href {https://www.nature.com/articles/306141a0} {\bibfield  {journal}
  {\bibinfo  {journal} {Nature}\ }\textbf {\bibinfo {volume} {306}},\ \bibinfo
  {pages} {141} (\bibinfo {year} {1983})}\BibitemShut {NoStop}%
\bibitem [{\citenamefont {Solís-Prosser}\ \emph {et~al.}(2017)\citenamefont
  {Solís-Prosser}, \citenamefont {Fernandes}, \citenamefont {Jiménez},
  \citenamefont {Delgado},\ and\ \citenamefont {Neves}}]{Sol_s_Prosser_2017}%
  \BibitemOpen
  \bibfield  {author} {\bibinfo {author} {\bibfnamefont {M.}~\bibnamefont
  {Solís-Prosser}}, \bibinfo {author} {\bibfnamefont {M.}~\bibnamefont
  {Fernandes}}, \bibinfo {author} {\bibfnamefont {O.}~\bibnamefont {Jiménez}},
  \bibinfo {author} {\bibfnamefont {A.}~\bibnamefont {Delgado}},\ and\ \bibinfo
  {author} {\bibfnamefont {L.}~\bibnamefont {Neves}},\ }\bibfield  {title}
  {\bibinfo {title} {Experimental minimum-error quantum-state discrimination in
  high dimensions},\ }\href {http://dx.doi.org/10.1103/PhysRevLett.118.100501}
  {\bibfield  {journal} {\bibinfo  {journal} {Phys. Rev. Lett.}\ }\textbf
  {\bibinfo {volume} {118}} (\bibinfo {year} {2017})}\BibitemShut {NoStop}%
\bibitem [{\citenamefont {Wu}\ \emph {et~al.}(2020)\citenamefont {Wu},
  \citenamefont {Theurer}, \citenamefont {Xiang}, \citenamefont {Li},
  \citenamefont {Guo}, \citenamefont {Plenio},\ and\ \citenamefont
  {Streltsov}}]{Wu_2020}%
  \BibitemOpen
  \bibfield  {author} {\bibinfo {author} {\bibfnamefont {K.-D.}\ \bibnamefont
  {Wu}}, \bibinfo {author} {\bibfnamefont {T.}~\bibnamefont {Theurer}},
  \bibinfo {author} {\bibfnamefont {G.-Y.}\ \bibnamefont {Xiang}}, \bibinfo
  {author} {\bibfnamefont {C.-F.}\ \bibnamefont {Li}}, \bibinfo {author}
  {\bibfnamefont {G.-C.}\ \bibnamefont {Guo}}, \bibinfo {author} {\bibfnamefont
  {M.~B.}\ \bibnamefont {Plenio}},\ and\ \bibinfo {author} {\bibfnamefont
  {A.}~\bibnamefont {Streltsov}},\ }\bibfield  {title} {\bibinfo {title}
  {Quantum coherence and state conversion: theory and experiment},\ }\href
  {http://dx.doi.org/10.1038/s41534-020-0250-z} {\bibfield  {journal} {\bibinfo
   {journal} {npj Quantum Inf.}\ }\textbf {\bibinfo {volume} {6}} (\bibinfo
  {year} {2020})}\BibitemShut {NoStop}%
\bibitem [{\citenamefont {Wu}\ \emph {et~al.}(2021{\natexlab{b}})\citenamefont
  {Wu}, \citenamefont {Streltsov}, \citenamefont {Regula}, \citenamefont
  {Xiang}, \citenamefont {Li},\ and\ \citenamefont {Guo}}]{Wu_2021_exp}%
  \BibitemOpen
  \bibfield  {author} {\bibinfo {author} {\bibfnamefont {K.}~\bibnamefont
  {Wu}}, \bibinfo {author} {\bibfnamefont {A.}~\bibnamefont {Streltsov}},
  \bibinfo {author} {\bibfnamefont {B.}~\bibnamefont {Regula}}, \bibinfo
  {author} {\bibfnamefont {G.}~\bibnamefont {Xiang}}, \bibinfo {author}
  {\bibfnamefont {C.}~\bibnamefont {Li}},\ and\ \bibinfo {author}
  {\bibfnamefont {G.}~\bibnamefont {Guo}},\ }\bibfield  {title} {\bibinfo
  {title} {Experimental progress on quantum coherence: Detection,
  quantification, and manipulation},\ }\href
  {http://dx.doi.org/10.1002/qute.202100040} {\bibfield  {journal} {\bibinfo
  {journal} {Advanced Quantum Technologies}\ }\textbf {\bibinfo {volume} {4}}
  (\bibinfo {year} {2021}{\natexlab{b}})}\BibitemShut {NoStop}%
\bibitem [{\citenamefont {Sun}\ \emph {et~al.}(2022)\citenamefont {Sun},
  \citenamefont {Liu}, \citenamefont {Wang}, \citenamefont {Hao}, \citenamefont
  {Xu}, \citenamefont {Xu}, \citenamefont {Li}, \citenamefont {Guo},
  \citenamefont {Castellini}, \citenamefont {Lami} \emph
  {et~al.}}]{sun2022activation}%
  \BibitemOpen
  \bibfield  {author} {\bibinfo {author} {\bibfnamefont {K.}~\bibnamefont
  {Sun}}, \bibinfo {author} {\bibfnamefont {Z.-H.}\ \bibnamefont {Liu}},
  \bibinfo {author} {\bibfnamefont {Y.}~\bibnamefont {Wang}}, \bibinfo {author}
  {\bibfnamefont {Z.-Y.}\ \bibnamefont {Hao}}, \bibinfo {author} {\bibfnamefont
  {X.-Y.}\ \bibnamefont {Xu}}, \bibinfo {author} {\bibfnamefont {J.-S.}\
  \bibnamefont {Xu}}, \bibinfo {author} {\bibfnamefont {C.-F.}\ \bibnamefont
  {Li}}, \bibinfo {author} {\bibfnamefont {G.-C.}\ \bibnamefont {Guo}},
  \bibinfo {author} {\bibfnamefont {A.}~\bibnamefont {Castellini}}, \bibinfo
  {author} {\bibfnamefont {L.}~\bibnamefont {Lami}}, \emph {et~al.},\
  }\bibfield  {title} {\bibinfo {title} {Activation of
  indistinguishability-based quantum coherence for enhanced metrological
  applications with particle statistics imprint},\ }\href
  {https://www.pnas.org/doi/abs/10.1073/pnas.2119765119} {\bibfield  {journal}
  {\bibinfo  {journal} {Proc. Natl. Acad. Sci. U.S.A.}\ }\textbf {\bibinfo
  {volume} {119}},\ \bibinfo {pages} {e2119765119} (\bibinfo {year}
  {2022})}\BibitemShut {NoStop}%
\bibitem [{\citenamefont {Pan}\ \emph {et~al.}(2012)\citenamefont {Pan},
  \citenamefont {Chen}, \citenamefont {Lu}, \citenamefont {Weinfurter},
  \citenamefont {Zeilinger},\ and\ \citenamefont
  {{\.Z}ukowski}}]{pan2012multiphoton}%
  \BibitemOpen
  \bibfield  {author} {\bibinfo {author} {\bibfnamefont {J.-W.}\ \bibnamefont
  {Pan}}, \bibinfo {author} {\bibfnamefont {Z.-B.}\ \bibnamefont {Chen}},
  \bibinfo {author} {\bibfnamefont {C.-Y.}\ \bibnamefont {Lu}}, \bibinfo
  {author} {\bibfnamefont {H.}~\bibnamefont {Weinfurter}}, \bibinfo {author}
  {\bibfnamefont {A.}~\bibnamefont {Zeilinger}},\ and\ \bibinfo {author}
  {\bibfnamefont {M.}~\bibnamefont {{\.Z}ukowski}},\ }\bibfield  {title}
  {\bibinfo {title} {Multiphoton entanglement and interferometry},\ }\href
  {https://journals.aps.org/rmp/abstract/10.1103/RevModPhys.84.777} {\bibfield
  {journal} {\bibinfo  {journal} {Rev. Mod. Phys.}\ }\textbf {\bibinfo {volume}
  {84}},\ \bibinfo {pages} {777} (\bibinfo {year} {2012})}\BibitemShut
  {NoStop}%
\bibitem [{\citenamefont {Gour}\ \emph {et~al.}(2015)\citenamefont {Gour},
  \citenamefont {Müller}, \citenamefont {Narasimhachar}, \citenamefont
  {Spekkens},\ and\ \citenamefont {Yunger~Halpern}}]{Gour2015nonequilibrium}%
  \BibitemOpen
  \bibfield  {author} {\bibinfo {author} {\bibfnamefont {G.}~\bibnamefont
  {Gour}}, \bibinfo {author} {\bibfnamefont {M.~P.}\ \bibnamefont {Müller}},
  \bibinfo {author} {\bibfnamefont {V.}~\bibnamefont {Narasimhachar}}, \bibinfo
  {author} {\bibfnamefont {R.~W.}\ \bibnamefont {Spekkens}},\ and\ \bibinfo
  {author} {\bibfnamefont {N.}~\bibnamefont {Yunger~Halpern}},\ }\bibfield
  {title} {\bibinfo {title} {The resource theory of informational
  nonequilibrium in thermodynamics},\ }\href
  {http://dx.doi.org/10.1016/j.physrep.2015.04.003} {\bibfield  {journal}
  {\bibinfo  {journal} {Phys. Rep.}\ }\textbf {\bibinfo {volume} {583}},\
  \bibinfo {pages} {1–58} (\bibinfo {year} {2015})}\BibitemShut {NoStop}%
\bibitem [{\citenamefont {Chiribella}\ \emph {et~al.}(2022)\citenamefont
  {Chiribella}, \citenamefont {Meng}, \citenamefont {Renner},\ and\
  \citenamefont {Yung}}]{chiribella2022nonequilibrium}%
  \BibitemOpen
  \bibfield  {author} {\bibinfo {author} {\bibfnamefont {G.}~\bibnamefont
  {Chiribella}}, \bibinfo {author} {\bibfnamefont {F.}~\bibnamefont {Meng}},
  \bibinfo {author} {\bibfnamefont {R.}~\bibnamefont {Renner}},\ and\ \bibinfo
  {author} {\bibfnamefont {M.-H.}\ \bibnamefont {Yung}},\ }\bibfield  {title}
  {\bibinfo {title} {The nonequilibrium cost of accurate information
  processing},\ }\href {https://www.nature.com/articles/s41467-022-34541-w}
  {\bibfield  {journal} {\bibinfo  {journal} {Nat. Commun.}\ }\textbf {\bibinfo
  {volume} {13}},\ \bibinfo {pages} {7155} (\bibinfo {year}
  {2022})}\BibitemShut {NoStop}%
\bibitem [{\citenamefont {Hsieh}(2025)}]{hsieh2025dynamical}%
  \BibitemOpen
  \bibfield  {author} {\bibinfo {author} {\bibfnamefont {C.-Y.}\ \bibnamefont
  {Hsieh}},\ }\bibfield  {title} {\bibinfo {title} {Dynamical landauer
  principle: Quantifying information transmission by thermodynamics},\ }\href
  {https://journals.aps.org/prl/abstract/10.1103/PhysRevLett.134.050404}
  {\bibfield  {journal} {\bibinfo  {journal} {Phys. Rev. Lett.}\ }\textbf
  {\bibinfo {volume} {134}},\ \bibinfo {pages} {050404} (\bibinfo {year}
  {2025})}\BibitemShut {NoStop}%
\bibitem [{\citenamefont {Streltsov}\ \emph {et~al.}(2017)\citenamefont
  {Streltsov}, \citenamefont {Adesso},\ and\ \citenamefont
  {Plenio}}]{streltsov2017colloquium}%
  \BibitemOpen
  \bibfield  {author} {\bibinfo {author} {\bibfnamefont {A.}~\bibnamefont
  {Streltsov}}, \bibinfo {author} {\bibfnamefont {G.}~\bibnamefont {Adesso}},\
  and\ \bibinfo {author} {\bibfnamefont {M.~B.}\ \bibnamefont {Plenio}},\
  }\bibfield  {title} {\bibinfo {title} {Colloquium: Quantum coherence as a
  resource},\ }\href
  {https://journals.aps.org/rmp/abstract/10.1103/RevModPhys.89.041003}
  {\bibfield  {journal} {\bibinfo  {journal} {Rev. Mod. Phys.}\ }\textbf
  {\bibinfo {volume} {89}},\ \bibinfo {pages} {041003} (\bibinfo {year}
  {2017})}\BibitemShut {NoStop}%
\bibitem [{\citenamefont {Zhao}\ \emph {et~al.}(2024)\citenamefont {Zhao},
  \citenamefont {Ito},\ and\ \citenamefont {Fujii}}]{zhao2024probabilistic}%
  \BibitemOpen
  \bibfield  {author} {\bibinfo {author} {\bibfnamefont {B.}~\bibnamefont
  {Zhao}}, \bibinfo {author} {\bibfnamefont {K.}~\bibnamefont {Ito}},\ and\
  \bibinfo {author} {\bibfnamefont {K.}~\bibnamefont {Fujii}},\ }\bibfield
  {title} {\bibinfo {title} {Probabilistic channel simulation using
  coherence},\ }\href
  {https://journals.aps.org/prresearch/abstract/10.1103/PhysRevResearch.6.043316}
  {\bibfield  {journal} {\bibinfo  {journal} {Phys. Rev. Research}\ }\textbf
  {\bibinfo {volume} {6}},\ \bibinfo {pages} {043316} (\bibinfo {year}
  {2024})}\BibitemShut {NoStop}%
\bibitem [{\citenamefont {D{\'{\i}}az}\ \emph {et~al.}(2018)\citenamefont
  {D{\'{\i}}az}, \citenamefont {Fang}, \citenamefont {Wang}, \citenamefont
  {Rosati}, \citenamefont {Skotiniotis}, \citenamefont {Calsamiglia},\ and\
  \citenamefont {Winter}}]{D_az_2018}%
  \BibitemOpen
  \bibfield  {author} {\bibinfo {author} {\bibfnamefont {M.~G.}\ \bibnamefont
  {D{\'{\i}}az}}, \bibinfo {author} {\bibfnamefont {K.}~\bibnamefont {Fang}},
  \bibinfo {author} {\bibfnamefont {X.}~\bibnamefont {Wang}}, \bibinfo {author}
  {\bibfnamefont {M.}~\bibnamefont {Rosati}}, \bibinfo {author} {\bibfnamefont
  {M.}~\bibnamefont {Skotiniotis}}, \bibinfo {author} {\bibfnamefont
  {J.}~\bibnamefont {Calsamiglia}},\ and\ \bibinfo {author} {\bibfnamefont
  {A.}~\bibnamefont {Winter}},\ }\bibfield  {title} {\bibinfo {title} {Using
  and reusing coherence to realize quantum processes},\ }\href
  {https://doi.org/10.22331/q-2018-10-19-100} {\bibfield  {journal} {\bibinfo
  {journal} {Quantum}\ }\textbf {\bibinfo {volume} {2}},\ \bibinfo {pages}
  {100} (\bibinfo {year} {2018})}\BibitemShut {NoStop}%
\bibitem [{\citenamefont {Gour}(2024)}]{gour2024resources}%
  \BibitemOpen
  \bibfield  {author} {\bibinfo {author} {\bibfnamefont {G.}~\bibnamefont
  {Gour}},\ }\bibfield  {title} {\bibinfo {title} {Resources of the quantum
  world},\ }\href {https://arxiv.org/abs/2402.05474} {\bibfield  {journal}
  {\bibinfo  {journal} {arXiv preprint arXiv:2402.05474}\ } (\bibinfo {year}
  {2024})}\BibitemShut {NoStop}%
\bibitem [{\citenamefont {Peng}\ \emph {et~al.}(2016)\citenamefont {Peng},
  \citenamefont {Jiang},\ and\ \citenamefont {Fan}}]{Peng_2016}%
  \BibitemOpen
  \bibfield  {author} {\bibinfo {author} {\bibfnamefont {Y.}~\bibnamefont
  {Peng}}, \bibinfo {author} {\bibfnamefont {Y.}~\bibnamefont {Jiang}},\ and\
  \bibinfo {author} {\bibfnamefont {H.}~\bibnamefont {Fan}},\ }\bibfield
  {title} {\bibinfo {title} {Maximally coherent states and coherence-preserving
  operations},\ }\href {http://dx.doi.org/10.1103/PhysRevA.93.032326}
  {\bibfield  {journal} {\bibinfo  {journal} {Phys. Rev. A}\ }\textbf {\bibinfo
  {volume} {93}} (\bibinfo {year} {2016})}\BibitemShut {NoStop}%
\end{thebibliography}%

\appendix

\onecolumngrid
\vspace{2cm}

\section{Quantum state discrimination via free operations}
\label{appendix: qsd_free}
Given a quantum state ensemble $\Omega = \{(p_j,\rho_j)\}_{j=0}^{k-1}$ where $\sum_j p_j =1, \rho_j\in \cD(\cH_A)$, $d_A = d$ and $k\leq d$, and a free states set $\cF$ with corresponding free operations set $\cO$, we introduce the \textit{ quantum state discrimination via channel}
\begin{enumerate}
    \item Receive an unknown state $\rho_j$ with prior probability $p_j$.
    \item Apply a quantum operation $\cN_{A\rightarrow A'B}$ to $\rho_j$ yielding $\tau_{A'B}^{(j)} = \cN_{A\rightarrow A'B}(\rho_j)$ where $A'\cong A$ and $\dim\cH_B= k$.
    \item Measure $\tau_{A'B}^{(i)}$ on subsystem $B$ in basis $\{\ket{i}\}_{i=0}^{k-1}$. If the outcome is $i$, decide the received state is $\rho_i$. 
\end{enumerate}

\begin{equation*}\label{Eq:pri_dual_SDP}
\begin{aligned}
&\underline{\textbf{POVM}}\\
P_{\rm suc}(\Omega) = \max_{\{E_j\}} & \;\; \sum_{j=0}^{k-1}p_j \tr(\rho_j E_j)\\
     {\rm s.t.}  & \;\; \sum_{j=0}^{k-1} E_j = I,E_j \geq 0, \, \forall j,
\end{aligned}
\quad\quad\quad
\begin{aligned}
&\underline{\textbf{Quantum channel}}\\
\widetilde{P}_{\rm suc}(\Omega) = &\max \sum_{j=0}^{k-1} p_j \tr[\cN_{A\rightarrow BA'}(\rho_j)(\ketbra{j}{j}_B\ox I_{A'})]\\
    {\rm s.t.} &\;\; \cN_{A\rightarrow BA'} \in\CPTP.
\end{aligned}
\end{equation*}

We call $\cN_{A\rightarrow BA'}$ the \textit{discrimination channel}, where subsystem $A'$ is introduced to further investigate the resource left after the discrimination process. When $\cN_{A\rightarrow BA'}$ is chosen from a set of free operations $\cO$ of some quantum resource theory, we call this task \textit{quantum state discrimination via free operations}. And we denote the optimal average success probability via free operations with $\widetilde{P}_{\rm suc, \cO}(\Omega)$. In the following, we discard the subsystem $A'$ and consider $\cN_{A \rightarrow B} \in \cO$ without affecting the attainment of $\widetilde{P}_{\rm suc, \cO}(\Omega)$.

Besides, we denote a set of restricted resourceless POVMs with $\cM_{\cF}$, when $\mathbf{E} = \{E_j \}_j  \in \cM_{\cF}$ for some resource theory, we call the task \textit{ quantum state discrimination via free POVMs} and denote the optimal average success probability via free POVMs with $P_{\rm suc, \cM_{\cF}}(\Omega)$.

\section{Quantum resource theory of coherence}
\label{appendix: QRT_coherence}
We briefly introduce the quantum resource theory (QRT) of coherence~\cite{streltsov2017colloquium, zhao2024probabilistic}. Let $\cH_d$ denote a Hilbert space of dimension $d$. Let $\cL(\cH_d)$ be the space of linear operators mapping $\cH_d$ to itself, and $\cD(\cH_d)$ be the set of density operators acting on $\cH_d$. The dephasing operations $\Delta$ are defined as follows.
\begin{equation}
\Delta(\rho) = \sum_{i=0}^{d-1}\ket{i}\bra{i}\rho\ket{i}\bra{i}
\end{equation}
The set of free states in the QRT of coherence is defined as $\mathcal{I} \coloneqq \{ \rho \geq 0| \Delta(\rho) = \rho \}$. In the following, we introduce several free operations in the QRT of coherence. The maximally incoherent operations (MIO)
is the largest class of incoherent operations, which map $\cI$ onto itself. The incoherent operations (IO)~\cite{Baumgratz_2014} admit a set of Kraus operators $\{K_n\}_n$ such that:
\begin{equation}
    \frac{K_n\rho K_n^{\dagger}}{\tr[K_n\rho K_n^{\dagger}]} \in \mathcal{I},\quad \sum_n K_n^{\dagger}K_n = \mathbb{I}, \ \forall n, \rho \in \mathcal{I}
\end{equation}
Dephasing-covariant incoherent operations (DIO)~\cite{chitambar2016critical, Marvian_2016} are those quantum operations $\mathcal{E}$ which commute with the dephasing operations $\Delta$ for any quantum state $\rho$ such that: $[\Delta, \cE] = 0$. The strictly incoherent operations (SIO)~\cite{Winter_2016} fulfill $\bra{i}K_n\rho K_n^{\dagger}\ket{i} = \bra{i}K_n \Delta[\rho] K_n^{\dagger}\ket{i} \ \forall n,i$. There are hierarchies among these free operations: $\text{SIO} \subsetneq \textrm{IO} \subsetneq \textrm{MIO}$, $\text{SIO} \subsetneq \textrm{DIO} \subsetneq \textrm{MIO}$~\cite{chitambar2016critical}.
 In a $d$-dimensional Hilbert space $\mathcal{H}_d$, a $d$-dimensional maximally coherent state is characterized by its ability to deterministically generate all other $d$-dimensional quantum states through incoherent operations~\cite{Baumgratz_2014}. A $d$-dimensional maximally coherent state is given by:   
\begin{equation}
\left|\Psi_d\right\rangle=\frac{1}{\sqrt{d}} \sum_{i=0}^{d-1}|i\rangle. 
\end{equation}
We denote $\ketbra{\Psi_d}{\Psi_d}$ with $\Psi_d$ in the following.

A good coherence measure is expected to satisfy the following three conditions under MIO:
\begin{itemize}
    \item $C(\rho) = 0 \ \forall \rho \in \cI$;
    \item $C(\rho) \geq C(\cT(\rho))$ for all incoherent CPTP maps $\cT$;
    \item Convexity: $\sum_j p_j C(\rho_j) \geq C(\sum_j p_j \rho_j)$, which is not necessary. 
\end{itemize}

\begin{definition}(Robustness of coherence~\cite{Napoli_2016, D_az_2018})
The robustness of coherence of a quantum state $\rho \in \cD(\cH_d)$ is defined as:
\begin{align}
    C_{R}(\rho) = \min_{\tau \in \cD(\cH_d)} \left\{s \geq 0 \;\bigg|\; \frac{\rho +s\tau}{1+s} \coloneqq \delta \in \cI  \right\},
\end{align}
\end{definition}
which can be transformed into a simple semidefinite program (SDP)~\cite{Napoli_2016}:
\begin{align}
    \label{def: ROC_SDP}
    C_{R}(\rho) = \max &\; \tr(W\rho) \\
     {\rm s.t.} & \;\; \Delta(W) \leq 0,\\
     & \;\; W \geq -I,
\end{align} 
where the Hermitian operator $M = -W$ fulfills $M \geq 0$ if and only if $\tr(\rho M) = \tr(\rho \Delta(M))$ for all incoherent states $\rho \in \cI$. Such an observable $M$ serves as a coherence witness, where $\tr(\rho M) \leq 0$ indicates coherence in the state $\rho$. Robustness of coherence is multiplicative under the tensor product of states:
\begin{equation}
    \label{eq: ROC_multip}
    1 + C_{R} (\rho_1 \ox \rho_2) = ( 1 + C_{R} (\rho_1))( 1 + C_{R} (\rho_2))
\end{equation}
Another main advantage of the robustness of coherence is that it can be estimated in the laboratory as the expected value of observable $M$ with respect to $\rho$. An operational interpretation of the robustness of coherence is that: it quantifies the advantage enabled by a quantum state in a \textit{phase discrimination task}.

Another equivalent primal standard form of the above SDP~\cite{D_az_2018} is:
\begin{align}
   1 + C_{R}(\rho) = \min \big\{\lambda\, |\, \rho \leq \lambda \sigma, \sigma \in \cI\big\}
\end{align}
and the dual form is given by 
\begin{align}\label{sdp: ROC_e_dual}
   1 + C_{R}(\rho) = \max \big\{\tr(\rho S) \, |\, S \geq 0, S_{ii} = 1, \ \forall i\big\}
\end{align}

\begin{definition}(Maximum relative entropy of coherence)~\cite{Bu_2017}\label{def: entropy_co}
The maximum relative entropy of coherence of a state $\rho$ is defined as:
\begin{equation}
    C_{\max}(\rho) = \min_{\sigma \in \cI} D_{\max}(\rho || \sigma),
\end{equation}
where $\cI$ is the set of incoherent states in $\cD(\cH_d)$ and $D_{\max}(\rho || \sigma)$ denotes the maximum relative entropy of $\rho$ with respect to $\sigma$ and $D_{\max}(\rho || \sigma) \coloneqq \min \{ \lambda\, |\, \rho \leq 2^{\lambda}\sigma \}$.
\end{definition}
It is obvious that $D_{\max}(\rho || \sigma)$ is the upper bound of the maximum relative entropy of coherence. Note that $2^{C_{\max}(\rho)} = 1 + C_{R}(\rho)$.

\section{Coherence manipulation within QSD}
\label{appendix: coherence_qsd}

\renewcommand{\theproposition}{S\arabic{proposition}}
 \begin{lemma}
    \label{lem: QSD_QC}
    For a $d$-dimensional state ensemble $\Omega = \{(p_j,\rho_j)\}_{j=0}^{k-1}$ and a resource theory of states $(\cF, \cO)$, there exist 
    quantum-classical channels $\cN^{\text{qc}} \in \cO$ achieving the optimal discrimination probability via free operations $\widetilde{P}_{\rm suc, \cO}(\Omega)$. 
\end{lemma}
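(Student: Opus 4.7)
The plan is to start from any free discrimination channel $\cN_{A\to B}\in\cO$ that attains (or, for a non-attained supremum, $\varepsilon$-approximates) $\widetilde{P}_{\rm suc,\cO}(\Omega)$, and then post-compose it with the full dephasing $\Delta_B$ in the computational basis $\{\ket{j}\}_{j=0}^{k-1}$ on the output register $B$. Define
\begin{equation*}
\cN^{\rm qc}\;:=\;\Delta_B\circ\cN,\qquad \cN^{\rm qc}(\rho)=\sum_{j=0}^{k-1}\bra{j}\cN(\rho)\ket{j}\,\ketbra{j}{j}.
\end{equation*}
By construction, the output of $\cN^{\rm qc}$ is diagonal in $\{\ket{j}\}_j$, so $\cN^{\rm qc}$ is a quantum--classical channel (equivalently, $\cN^{\rm qc}(\rho)=\sum_j\tr[M_j\rho]\ketbra{j}{j}$ with POVM elements $M_j=\sum_n K_n^\dagger\ketbra{j}{j}K_n$ obtained from any Kraus decomposition $\{K_n\}$ of $\cN$).

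Next I would verify that no performance is lost. Since the success probability only involves the diagonal entries of $\cN(\rho_j)$ in the $B$-basis,
\begin{equation*}
\tr\bigl[\cN^{\rm qc}(\rho_j)\ketbra{j}{j}\bigr]=\bra{j}\cN(\rho_j)\ket{j}=\tr\bigl[\cN(\rho_j)\ketbra{j}{j}\bigr],
\end{equation*}
so $\sum_j p_j\tr[\cN^{\rm qc}(\rho_j)\ketbra{j}{j}]=\sum_j p_j\tr[\cN(\rho_j)\ketbra{j}{j}]$, which equals $\widetilde{P}_{\rm suc,\cO}(\Omega)$ in the attained case (or is within $\varepsilon$ otherwise). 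In the latter case I would invoke compactness of the channel set together with continuity of the linear objective, or equivalently of $\cM_\cF$ for the resource theories considered here, so that the optimum is in fact attained, giving a bona fide $\cN^{\rm qc}$ as in the statement.

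Finally I would check freeness: $\cN^{\rm qc}\in\cO$. Because the dephasing $\Delta_B$ belongs to all standard incoherent-operation classes (MIO, IO, DIO, SIO) --- its only Kraus operators $\{\ketbra{j}{j}\}$ are incoherent, and $\Delta$ trivially commutes with itself --- and each such $\cO$ is closed under composition, we obtain $\Delta_B\circ\cN\in\cO$. For a general resource theory $(\cF,\cO)$ the same conclusion follows whenever the output-basis dephasing is free and $\cO$ is composition-closed, which is the implicit standing assumption of the framework.

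The only non-routine point is the attainment issue: the supremum in the definition of $\widetilde{P}_{\rm suc,\cO}$ is over a set of CPTP maps constrained to $\cO$, which for the coherence classes considered is a closed (in fact compact) convex set, so the supremum is achieved and the construction above directly produces the desired $\cN^{\rm qc}$. If one wished to extend the lemma to non-closed $\cO$, the extra step would be an approximation argument combined with a compactness-based extraction of a limiting QC channel, which is where I would expect the main technical care to be needed.
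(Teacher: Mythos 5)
Your proposal is correct and follows essentially the same route as the paper: post-compose the (near-)optimal free channel with the output dephasing $\Delta_B$, note that the success probability depends only on the diagonal of the output so nothing is lost, and conclude that $\Delta_B\circ\cN$ is a free quantum--classical channel. The only cosmetic differences are that you exhibit the POVM via Kraus operators while the paper reads it off the Choi matrix, and you justify freeness via composition-closure (and treat the attainment of the supremum explicitly) where the paper argues directly from preservation of free states.
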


\begin{proof}
    We consider QSD via free operations in some quantum resource theory with $\cN \in \cO$.
    Suppose $\cN$ is the optimal channel achieving $\widetilde{P}_{\rm suc, \cO}(\Omega)$. It follows
    \begin{subequations}
        \begin{align}
         \label{eq: MED_channel_Dephase}
            \widetilde{P}_{\rm suc,\cO}(\Omega) &= \sum_{j=0}^{k-1} p_j \tr(\cN(\rho_j) \ketbra{j}{j}) \\
            & = \sum_{j=0}^{k-1} p_j \tr \big(\cN(\rho_j)\Delta (\ketbra{j}{j}) \big) \\
            &= \sum_{j=0}^{k-1} p_j \tr(\Delta\circ\cN(\rho_j)\ketbra{j}{j}),
        \end{align}
    \end{subequations}
where we use the fact that the adjoint map of a fully dephasing channel $\Delta(\cdot)$ is itself. Note that $\cN \in \cO$ is a free operation in some resource theory; it holds that $\cN(\rho) \in \cF ,\ \forall \rho \in \cF$, where $\cF$ denotes the set of free states. Obviously, we have $\Delta \circ \cN(\rho) \in \cF ,\ \forall \rho \in \cF$. Thus, it holds $\Delta \circ \cN  \in \cO$. Then we conclude that $\Delta \circ \cN$ is also an optimal free channel achieving $\widetilde{P}_{\rm suc, \cO}(\Omega)$. Then we show that $\Delta \circ \cN$ is a quantum-classical channel.
We express $J_{\cN} = \sum_{i,i'} \ketbra{i}{i'} \ox \cN(\ketbra{i}{i'})$ and deduce that  
\begin{subequations}
    \begin{align}
        J_{\Delta \circ \cN} &= \sum_{i,i'} \ketbra{i}{i'} \ox \Delta \circ \cN(\ketbra{i}{i'}) \\
        & = \sum_{i,i', q} p_{i,i', q} \ketbra{i}{i'} \ox \ketbra{q}{q}) \\
        & = \sum_{q} Q_q \ox \ketbra{q}{q}
    \end{align}
\end{subequations}
Then we express the Choi operator of $\Delta \circ \cN$ as $J_{\Delta \circ \cN} = \sum_{q} Q_q \ox \ketbra{q}{q}$, where $Q_q = \sum_{i,i'} p_{i,i',q} \ketbra{i}{i'}$, 
$\sum_{q} Q_q = I$ and $Q_q \geq 0$ due to $\Delta \circ \cN$ is a CPTP map. We can choose $Q_q = M_q^T$ and conclude $\{ M_q \}_{q = 0}^{k-1}$ is a POVM. Thus, $\Delta \circ \cN$ is a quantum-classical channel $\cN^{\text{qc}}$. Such conclusion also holds when $\cN \in \CPTP$ without considering any resource theory, and indicates that $\widetilde{P}_{\rm suc}(\Omega) \leq P_{\rm suc}(\Omega)$. 
\end{proof}

\renewcommand\theproposition{\textcolor{blue}{1}}
\setcounter{proposition}{\arabic{proposition}-1}

\begin{proposition}

For a $d$-dimensional state ensemble $\Omega = \{(p_j,\rho_j)\}_{j=0}^{k-1}$,
the optimal discrimination probability $P_{\rm suc}(\Omega)$ can be achieved by quantum state discrimination via incoherent operations. 
\end{proposition}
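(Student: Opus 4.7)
The plan is to establish the proposition by proving the chain of equalities $P_{\rm suc}(\Omega)=\widetilde{P}_{\rm suc,IO}(\Omega)=\widetilde{P}_{\rm suc,MIO}(\Omega)$. The inequalities $\widetilde{P}_{\rm suc,IO}(\Omega)\le \widetilde{P}_{\rm suc,MIO}(\Omega)\le \widetilde{P}_{\rm suc}(\Omega)$ are immediate from the hierarchy $\mathrm{IO}\subsetneq \mathrm{MIO}\subsetneq \mathrm{CPTP}$, and the equality $\widetilde{P}_{\rm suc}(\Omega)=P_{\rm suc}(\Omega)$ is the unrestricted case of Lemma~\ref{lem: QSD_QC}, which identifies optimal discrimination channels with POVMs via the Choi representation of quantum-classical channels. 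The only non-trivial piece is therefore the reverse direction $P_{\rm suc}(\Omega)\le \widetilde{P}_{\rm suc,IO}(\Omega)$: given any optimal POVM, I need to exhibit an incoherent channel that reproduces its statistics after a computational-basis measurement. Since partial trace over the reference system $A'$ is a free operation, it suffices to work with $\cN_{A\to B}\in \mathrm{IO}$.

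The construction I would use starts from an optimal POVM $\{E_j\}_{j=0}^{k-1}$ achieving $P_{\rm suc}(\Omega)$. Spectrally decompose each effect as $E_j=\sum_{\ell}\lambda_{j,\ell}\ketbra{v_{j,\ell}}{v_{j,\ell}}$ with $\lambda_{j,\ell}\ge 0$, and define Kraus operators
\begin{equation*}
K_{j,\ell}=\sqrt{\lambda_{j,\ell}}\,\ket{j}_{B}\bra{v_{j,\ell}}_{A}.
\end{equation*}
The trace-preservation condition $\sum_{j,\ell}K_{j,\ell}^{\dagger}K_{j,\ell}=\sum_{j}E_{j}=I$ is immediate, so these Kraus operators define a CPTP map $\cN_{A\to B}$.

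The key observation is that each $K_{j,\ell}$ has range equal to the one-dimensional subspace $\mathrm{span}\{\ket{j}\}$, so for any input $\rho$ we have $K_{j,\ell}\rho K_{j,\ell}^{\dagger}\propto \ketbra{j}{j}$, which is incoherent. Hence $\{K_{j,\ell}\}$ trivially satisfies the incoherent-Kraus condition $K_{j,\ell}\rho K_{j,\ell}^{\dagger}/\tr[K_{j,\ell}\rho K_{j,\ell}^{\dagger}]\in\cI$ for all $\rho\in\cI$ (in fact, for all $\rho\in\cD(\cH_{A})$), placing $\cN_{A\to B}\in \mathrm{IO}$. A direct calculation then verifies the statistics: $\tr[\cN(\rho)\ketbra{i}{i}_{B}]=\sum_{\ell}\lambda_{i,\ell}\bra{v_{i,\ell}}\rho\ket{v_{i,\ell}}=\tr[\rho E_{i}]$, so the average success probability obtained by applying $\cN$ and reading out in the computational basis equals $\sum_{j}p_{j}\tr[\rho_{j}E_{j}]=P_{\rm suc}(\Omega)$. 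This proves $\widetilde{P}_{\rm suc,IO}(\Omega)\ge P_{\rm suc}(\Omega)$ and closes the chain.

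There is essentially no major obstacle: the rank-one nature of the Kraus operators built from a POVM forces the post-measurement states to collapse onto incoherent basis vectors, which makes the incoherence requirement automatic. The only thing to keep track of is the bookkeeping between the framework with a reference system $A'$ and the reduced formulation $\cN_{A\to B}$, which is handled by the remark in the main text that discarding $A'$ is itself a free operation, so the two formulations give the same optimal value $\widetilde{P}_{\rm suc,\cO}(\Omega)$.
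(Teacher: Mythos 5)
Your proposal is correct and follows essentially the same route as the paper: the lower bound via rank-one Kraus operators $\sqrt{\lambda_{j,\ell}}\,\ketbra{j}{v_{j,\ell}}$ built from the spectral decomposition of an optimal POVM (whose outputs collapse onto $\ketbra{j}{j}$ and are hence automatically incoherent), and the upper bound via the quantum-classical-channel reduction of Lemma~\ref{lem: QSD_QC}. No substantive differences to report.
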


\begin{proof}
We prove this proposition by demonstrating that $P_{\rm suc}(\Omega) = \widetilde{P}_{\rm suc,IO}(\Omega)$. First, we will show $P_{\rm suc}(\Omega) \leq \widetilde{P}_{\rm suc,IO}(\Omega)$. Suppose the optimal POVM for $P_{\rm suc}(\Omega)$ is $\{E_j\}_{j=0}^{k-1}$. We can obtain a quantum-classical channel
\begin{equation}
    \cM_{A\rightarrow B}(\rho_A) = \sum_{j=0}^{k-1}\tr(\rho_A E_j)\ketbra{j}{j}.
\end{equation}
In the following, we show that quantum-classical channel $\cM_{A\rightarrow B}$ is an IO.
Suppose the spectral decomposition of $E_q$ is 
\begin{equation}
    E_q = \sum_{i = 1}^{r_q} \lambda_i^q \ketbra{\psi_i^q}{\psi_i^q},
\end{equation}
where $r_q$ is the rank of $E_q$. Then we can express the Kraus operators of the quantum-classical channel as $K_i^q = \sqrt{\lambda_i^q} \ketbra{q}{\psi_i^q}$.  We can check that for any $\ket{j}$:
\begin{align}
    K_i^q \ketbra{j}{j} (K_i^q)^{\dagger} = \lambda_i^q \ket{q}\langle\psi_i^q|j\rangle\langle{j}|\psi_i^q\rangle\langle q| = \lambda_i^q  |\langle j|\psi_i^q\rangle|^2 \ketbra{q}{q}.
\end{align}
$K_i^q \ketbra{j}{j} (K_i^q)^{\dagger} \in \cI$. Thus, we can conclude that $\cM_{A\rightarrow A'} \in \text{IO}$. 
Then we have
\begin{equation}
\widetilde{P}_{\rm suc, IO}(\Omega) \geq \sum_{j=0}^{k-1} p_j \tr(\cM(\rho_j)\ketbra{j}{j}) = \sum_{j=0}^{k-1} p_j \tr(\rho_j E_j) = P_{\rm suc}(\Omega).
\end{equation}
Second, we are going to show that $P_{\rm suc}(\Omega) \geq \widetilde{P}_{\rm suc, IO}(\Omega)$. Recall that $\widetilde{P}_{\rm suc, IO}(\Omega)$ can always be achieved by a quantum-classical channel $\cN'$ according to Lemma~\ref{lem: QSD_QC}. Note that we have shown that the quantum-classical channel is an IO. Suppose that $J_{N'} = \sum_{q} M_q^T \ox \ketbra{q}{q}$ and we can conclude that $\widetilde{P}_{\rm suc, IO}(\Omega)$ is equivalently achieved by a POVM $\{ M_q \}_{0}^{k-1}$. Thus, $P_{\rm suc}(\Omega) \geq \widetilde{P}_{\rm suc, IO}(\Omega)$. 

In conclusion, we have $P_{\rm suc}(\Omega) = \widetilde{P}_{\rm suc, IO}(\Omega)$, which means optimal discrimination probability via free operations can be achieved with $\text{IO}$. The result also holds for $\text{MIO}$ because $\text{IO} \subsetneq \text{MIO}$.
\end{proof}

\begin{definition} (Incoherent measurement~\cite{Oszmaniec_2019})
A $d$-dimentional POVM $\{ E_m \}_{m=0}^{k-1}$ is called an incoherent measurement if $\Delta(E_m) = E_m$ for all $m$.
\end{definition}
Incoherent measurements can be regarded as POVMs' analog of incoherent states, which are diagonal in the fixed basis and can be expressed as $E_m = \sum_{i = 0}^{d-1} p(m|i) \ketbra{i}{i}$. We denote the set of incoherent measurements as $\cM_{\cI}$.

\renewcommand{\theproposition}{S\arabic{proposition}}
 \begin{lemma}
    \label{lem: POVM_DIO_Channel}
    For a $d$-dimensional state ensemble $\Omega = \{(p_j,\rho_j)\}_{j=0}^{k-1}$, we have
    \begin{equation}
        P_{\rm suc, \cM_{\cI}}(\Omega) = \widetilde{P}_{\rm suc,SIO}(\Omega).
    \end{equation}
 \end{lemma}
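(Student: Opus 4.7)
The plan is to prove the two inequalities $P_{\rm suc,\cM_\cI}(\Omega)\le\widetilde{P}_{\rm suc,SIO}(\Omega)$ and $P_{\rm suc,\cM_\cI}(\Omega)\ge\widetilde{P}_{\rm suc,SIO}(\Omega)$ separately, exploiting the fact (established in Lemma~\ref{lem: QSD_QC}) that the optimum in $\widetilde{P}_{\rm suc,\cO}$ is always attained by a quantum--classical channel, together with the well-known structural characterization of SIO: a Kraus operator $K$ lies in an SIO decomposition precisely when both $K$ and $K^{\dagger}$ are incoherent, which is equivalent to $K$ being a monomial matrix (at most one nonzero entry in every row \emph{and} every column in the incoherent basis).

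For the direction $P_{\rm suc,\cM_\cI}(\Omega)\le\widetilde{P}_{\rm suc,SIO}(\Omega)$, I will take any incoherent POVM $\{E_m\}$ with $E_m=\sum_i p(m|i)\ketbra{i}{i}$ and build the explicit channel with Kraus operators $K_{m,i}=\sqrt{p(m|i)}\,\ketbra{m}{i}$. Each $K_{m,i}$ is rank-one with a single nonzero entry, hence both $K_{m,i}$ and $K_{m,i}^{\dagger}$ are incoherent, so the channel lies in SIO. A direct computation shows $\tr[\cN(\rho_j)\ketbra{j}{j}]=\sum_i p(j|i)\bra{i}\rho_j\ket{i}=\tr(E_j\rho_j)$, giving the inequality.

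For the reverse direction, by Lemma~\ref{lem: QSD_QC} I may take an optimal SIO channel $\cN$ and replace it by $\Delta\circ\cN$; since the Kraus operators of $\Delta$ are $\{\ketbra{i}{i}\}$ (trivially SIO) and SIO is closed under composition, $\Delta\circ\cN$ is still an SIO quantum--classical channel achieving $\widetilde{P}_{\rm suc,SIO}(\Omega)$, and its Choi operator takes the form $J=\sum_q M_q^T\otimes\ketbra{q}{q}$ for a POVM $\{M_q\}$ with $M_q=\cN^{*}(\ketbra{q}{q})=\sum_n K_n^{\dagger}\ketbra{q}{q}K_n$. Using the monomial characterization, $\ketbra{q}{q}K_n$ picks out at most one nonzero entry of the $q$-th row of $K_n$, so it equals $c_{n,q}\ketbra{q}{j_{n,q}}$ for some index $j_{n,q}$ and scalar $c_{n,q}$; hence $K_n^{\dagger}\ketbra{q}{q}K_n=|c_{n,q}|^2\ketbra{j_{n,q}}{j_{n,q}}$, and each $M_q$ is diagonal in the incoherent basis, i.e.\ $\{M_q\}\in\cM_\cI$. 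This POVM realizes the same success probability, giving $\widetilde{P}_{\rm suc,SIO}(\Omega)\le P_{\rm suc,\cM_\cI}(\Omega)$.

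The main obstacle is the second direction, where one must carefully exploit the defining SIO property on Kraus operators, which, unlike the POVM-to-channel construction, depends on the existence of a special Kraus decomposition rather than on any single representation. The key observation that makes the argument go through is the monomial-matrix characterization, which converts the abstract condition $\{K_n\}\cup\{K_n^{\dagger}\}\subset$ incoherent operators into a concrete support pattern that forces $\cN^{*}(\ketbra{q}{q})$ to be diagonal. Once this is in place, the rest of the argument is a clean Choi-matrix bookkeeping that mirrors the proof of Proposition~\ref{prop:QSD_IO}.
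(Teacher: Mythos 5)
Your proof is correct. The direction $P_{\rm suc,\cM_{\cI}}(\Omega)\le\widetilde{P}_{\rm suc,SIO}(\Omega)$ coincides with the paper's: both build the channel with Kraus operators $\sqrt{p(m|i)}\,\ketbra{m}{i}$ from an incoherent POVM and check it is SIO. The reverse direction is where you genuinely diverge. The paper routes through DIO: it takes an optimal quantum--classical channel $\cN(\cdot)=\sum_j\tr(E_j\cdot)\ketbra{j}{j}$ in DIO and uses the dephasing-covariance identity $\cN=\Delta\circ\cN=\cN\circ\Delta$ to replace $\{E_j\}$ by $\{\Delta(E_j)\}\in\cM_{\cI}$, which simultaneously establishes the stronger chain $\widetilde{P}_{\rm suc,SIO}\le\widetilde{P}_{\rm suc,DIO}\le P_{\rm suc,\cM_{\cI}}\le\widetilde{P}_{\rm suc,SIO}$ and hence the DIO version of the equality as well (which the paper reuses for Proposition 2). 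You instead stay inside SIO and exploit the monomial (at most one nonzero entry per row and per column) support pattern of SIO Kraus operators to show directly that $M_q=\sum_n K_n^\dagger\ketbra{q}{q}K_n$ is diagonal; in fact only the row condition ($K_n^\dagger$ incoherent) is used there. Your route is self-contained and arguably cleaner for the lemma as stated, and you are more careful than the paper on one point: the paper's Lemma S1 justifies $\Delta\circ\cN\in\cO$ only by free-state preservation, which is immediate for MIO but not for SIO, whereas you explicitly note that $\Delta$ is SIO and that SIO is closed under composition. What your argument does not deliver is the DIO statement $P_{\rm suc,\cM_{\cI}}(\Omega)=\widetilde{P}_{\rm suc,DIO}(\Omega)$, which the paper obtains for free from its detour and quotes elsewhere; since the lemma only asserts the SIO equality, this is not a gap in your proof of the stated result.
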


\begin{proof}
First, we show that $P_{\rm suc, \cM_{\cI}}(\Omega) \geq \widetilde{P}_{\rm suc,DIO}(\Omega)$. Recall that the discrimination channel $\cN \in \text{DIO}$ reaching $\widetilde{P}_{\rm suc,DIO}(\Omega)$ can be a quantum-classical channel. We denote the quantum-classical channel $\cN$ by $\cN (\cdot) = \sum_{j=0}^{k-1} \tr(E_j\cdot)\ketbra{j}{j}$, where $\{E_j\}_{j=0}^{k-1}$ is the corresponding POVM. Note that $\cN \in \text{DIO}$ and $\forall \rho$, we have 
\begin{equation}
    \cN(\rho)= \Delta \circ \cN(\rho) = \cN \circ \Delta (\rho) = \sum_{j=0}^{k-1} \tr(E_j\Delta(\rho))\ketbra{j}{j} = \sum_{j=0}^{k-1} \tr(\Delta(E_j)\rho)\ketbra{j}{j}.
\end{equation}
Obviously, $\{\Delta(E_j)\}_{j=0}^{k-1}$ is an incoherent measurement with $\sum_{j=0}^{k-1} \Delta(E_j) = I$ and $\Delta(E_j) \geq 0$. Thus, we can directly let $E_j = \sum_{i} p_{i, j} \ketbra{i}{i}$ and $\{E_j\}_{j=0}^{k-1} \in \cM_{\cI}$. Thus, we conclude that $P_{\rm suc, \cM_{\cI}}(\Omega) \geq \widetilde{P}_{\rm suc,DIO}(\Omega)$. 

Conversely, if we distinguish $\Omega$ via incoherent measurements, and the optimal discrimination probability under such constraints is achieved by an incoherent measurement $\{E_j\}_{j=0}^{d-1}$, where $E_j = \sum_{i} p_{i, j} \ketbra{i}{i}$. We can construct a channel $\widetilde{\cN} \in \text{DIO}$ with its Choi operator being $J_{\widetilde{\cN}} =  =\sum_{j} E_j \ox \ketbra{j}{j} = 
\sum_{i,j} p_{i, j} \ketbra{i}{i} \ox \ketbra{j}{j} $, which means $P_{\rm suc, \cM_{\cI}}(\Omega) \leq \widetilde{P}_{\rm suc,DIO}(\Omega)$.
We can further reduce the DIO to SIO. We have shown that the Choi operator of the quantum-classical channel of incoherent measurements can be expressed as $J_{\widetilde{\cN}} =  \sum_{i,j} p_{i, j} \ketbra{i}{i} \ox \ketbra{j}{j}$ with $E_j = \sum_i p_{i,j}\ketbra{i}{i}$. Then we can express the Kraus operators of $J_{\widetilde{\cN}}$ as $K_i^j = \sqrt{p_{i,j}} \ketbra{j}{i}$. We can check that 
\begin{subequations}
    \begin{align}
        K_i^j\ketbra{q}{q'}(K_i^j)^{\dagger}= p_{i,j} \ketbra{j}{i}q\rangle \langle q' \ketbra{i}{j}= p_{i,j} \delta_{i,q} \delta_{i,q'} \ketbra{j}{j}
    \end{align}
\end{subequations}
If $q = q'$, we obtain that $K_i^j\ketbra{q}{q}(K_i^j)^{\dagger} = p_{i,j} \delta_{i,q}\ketbra{j}{j}$; and if $q \neq q'$, we obtain that $K_i^j\ketbra{q}{q'}(K_i^j)^{\dagger} = 0$. Thus $\widetilde{\cN}$ discussed above belongs to $\text{SIO}$. Thus, we have $ P_{\rm suc, \cM_{\cI}}(\Omega) = \widetilde{P}_{\rm suc,SIO}(\Omega)$. It also holds for $ P_{\rm suc, \cM_{\cI}}(\Omega) = \widetilde{P}_{\rm suc,DIO}(\Omega)$.
\end{proof}

\renewcommand\theproposition{\textcolor{blue}{2}}
\setcounter{proposition}{\arabic{proposition}-1}
\begin{proposition}
    \label{Prop: POVM_DIO_Cost}
    For a $d$-dimensional state ensemble $\Omega = \{(p_j,\rho_j)\}_{j=0}^{k-1}$, the optimal discrimination probability via DIO $\widetilde{P}_{\rm suc,DIO}(\Omega)$  cannot be improved with the assistance of any coherent state.
\end{proposition}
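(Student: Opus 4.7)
The plan is to reduce the proposition to a statement about incoherent POVMs using Lemma S2, and then show that the diagonal of any ancillary state $\tau$ can be absorbed into a convex combination of incoherent POVMs acting only on the original system. By Lemma S2, $\widetilde{P}_{\rm suc, DIO}(\Omega \otimes \tau) = P_{\rm suc, \cM_{\cI}}(\Omega \otimes \tau)$, so it suffices to prove that the optimal discrimination probability using an incoherent POVM on the joint system $A \otimes B$ cannot exceed $P_{\rm suc, \cM_{\cI}}(\Omega)$ for any state $\tau$ on $B$.

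First, I would write an arbitrary incoherent POVM on the joint system in its most general form. Since incoherent measurements are diagonal in the computational basis, each POVM element can be expressed as $E_j = \sum_{i,l} q_{i,l,j}\,\ketbra{i}{i} \otimes \ketbra{l}{l}$ with $q_{i,l,j}\geq 0$, and the completeness $\sum_j E_j = I_A \otimes I_B$ forces $\sum_j q_{i,l,j}=1$ for every pair $(i,l)$. Expanding the success probability yields
\begin{equation*}
    \sum_{j} p_j \tr\!\bigl((\rho_j \otimes \tau) E_j\bigr)
    = \sum_l \tau_{ll} \sum_j p_j \tr(\rho_j F_j^{(l)}),
\end{equation*}
where $\tau_{ll} = \bra{l}\tau\ket{l}$ and $F_j^{(l)} := \sum_i q_{i,l,j}\,\ketbra{i}{i}$.

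Second, I would argue that for each fixed $l$, the family $\{F_j^{(l)}\}_j$ is itself an incoherent POVM on system $A$: each $F_j^{(l)}$ is diagonal and positive, while $\sum_j F_j^{(l)} = \sum_i (\sum_j q_{i,l,j})\ketbra{i}{i} = I_A$. Therefore $\sum_j p_j \tr(\rho_j F_j^{(l)}) \leq P_{\rm suc, \cM_{\cI}}(\Omega)$ for every $l$. Combining this with $\tau_{ll}\geq 0$ and $\sum_l \tau_{ll}=1$, the total success probability is a convex combination of quantities each bounded by $P_{\rm suc, \cM_{\cI}}(\Omega)$, so
\begin{equation*}
    P_{\rm suc, \cM_{\cI}}(\Omega \otimes \tau) \leq P_{\rm suc, \cM_{\cI}}(\Omega).
\end{equation*}
The reverse inequality is immediate by ignoring the ancilla, and the chain of equalities from Lemma S2 then translates the bound back to $\widetilde{P}_{\rm suc, DIO}(\Omega \otimes \tau)\leq \widetilde{P}_{\rm suc, DIO}(\Omega)$.

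I do not expect a serious obstacle: the entire argument hinges on the observation that incoherent POVMs on a product system are necessarily diagonal in the product basis, which makes the ancillary state enter the success probability only through its diagonal entries $\tau_{ll}$, and these form a probability distribution that can be pulled outside as a convex combination. The only mildly subtle point is verifying that the ``sliced'' operators $F_j^{(l)}$ genuinely form a valid incoherent POVM for each $l$, which follows directly from the completeness relation of the joint POVM. Lemma S2 is doing the heavy lifting by bridging DIO (and SIO) channels to incoherent POVMs; once we are in the POVM picture the result is essentially a one-line calculation.
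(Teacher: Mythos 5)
Your proof is correct and rests on the same key mechanism as the paper's: a DIO discrimination channel is equivalent to an incoherent POVM (Lemma~\ref{lem: POVM_DIO_Channel}), and such a measurement accesses the ancilla $\tau$ only through its diagonal entries, so the joint success probability collapses to a convex combination of single-system incoherent-POVM strategies. The paper phrases this at the channel level by commuting $\Delta$ past $\cN$ to get $\cN\circ\Delta(\rho\otimes\sigma)$, while you work explicitly in the POVM picture with the sliced operators $F_j^{(l)}$; the content is the same, and your version is if anything slightly more explicit about why the resulting bound holds.
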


\begin{proof}
    Recall the proof of Lemma~\ref{lem: POVM_DIO_Channel}. Suppose that the optimal channel $\cN \in {\rm DIO}$ achieving $\widetilde{P}_{\rm suc, DIO}(\Omega)$ can be donated as $\cN (\cdot) = \sum_{j=0}^{k-1} \tr(E_j\cdot)\ketbra{j}{j}$, where $\{E_j\}_{j=0}^{k-1}$ is the corresponding incoherent POVM. Now, consider the input state $\rho$ with an ancillary coherent coherent state $\sigma$,
    we have 
    \begin{subequations}
    \begin{align}
        \cN(\rho \otimes \sigma) &= \Delta \circ \cN(\rho \otimes \sigma)\\
        & = \cN \circ \Delta (\rho \otimes \sigma)\\
        &= \sum_{j=0}^{k-1} \tr\Big(E_j\Delta(\rho \otimes \sigma)\Big)\ketbra{j}{j} \\
        &= \sum_{j=0}^{k-1} \tr\Big(E_j\Delta(\rho) \otimes \Delta (\sigma)\Big)\ketbra{j}{j} \\
        & =  \sum_{j=0}^{k-1} \tr\Big(E'_j\Delta(\rho)\Big)\ketbra{j}{j},
    \end{align}
\end{subequations}
where $\{E'_j\}_{j=0}^{k-1}$ is a new incoherent POVM, and $\cN(\cdot \otimes \sigma) $ remains a DIO channel. Thus, the assistance of coherent states can not improve $\widetilde{P}_{\rm suc,DIO}(\Omega)$. Furthermore, combined with Lemma~\ref{lem: POVM_DIO_Channel}, we can also conclude that the optimal discrimination probability via incoherent POVMs $P_{\rm suc, \cM_{\cI}}(\Omega)$ also cannot be improved with the assistance of coherent states.
\end{proof}

\begin{remark}
Until now, we have demonstrated the significant difference between MIO and DIO in QSD tasks. Using the paradigm of QSD via free operations, we can implement the optimal POVMs equivalently through MIO (IO), but only achieve the incoherent measurements equivalently through DIO (SIO). Thus, we can argue that MIO (IO) provides a maximal advantage over DIO (SIO) in the QSD task, as discussed in Ref.~\cite{Oszmaniec_2019, Takagi_2019}. This advantage is fully identified by the quantifier called robustness for incoherent measurements $\cR_{\cM_{\cI}}$ as follows:
\begin{equation}
    \cR_{\cM_{\cI}}(\mathbf{M}) = \max_{\Omega_0} \frac{p_{\mathrm{suc}}(\Omega_0, \mathbf{M})}{\max_{\mathbf{N} \in \cM_{\cI}} p_{\mathrm{suc}} (\Omega_0, \mathbf{N})} - 1,
\end{equation}
where $p_{\mathrm{suc}}(\Omega_0, \mathbf{E}) = \sum_j p_j \tr(E_j\rho_j)$ with $\mathbf{E} = \{E_j \}_{j=0}^{k-1} $, $\mathbf{M}$ is a unrestricted POVM, $\mathbf{N}$ is an incoherent measurement, and $\Omega_0 = \{p_j, \rho_j \}_{j=0}^{k-1}$ denotes a quantum state ensemble. Note that
$\cR_{\cM_{\cI}}(\mathbf{M}) \geq 0$ and when $\mathbf{M} \nsubseteq \cM_{\cI}$, $\exists \ \Omega_0$ let $\cR_{\cM_{\cI}}(\mathbf{M}) > 0$. We can conclude 
that QSD via free operation in the QRT of coherence can identify a strict hierarchy between MIO and DIO on the optimal discrimination probability.
\end{remark}

\section{Coherence-distinguishability duality relation}
\label{appendix: trade_off}

\renewcommand{\theproposition}{S\arabic{proposition}}
 \begin{lemma}
    \label{lem: roc_ensemble}
    For a mutually orthogonal $d$-dimensional state ensemble $\Omega = \{(p_j,\rho_j)\}_{j=0}^{k-1}$, if the post-discrimination coherence under maximally incoherent operations $\mathbf{C}_{\rm MIO}(\Omega)$ is achieved by some MIO $\cN$, then $\mathbf{C}_{\rm MIO}(\Omega) = C_{\text{\rm max}}(\cN(\hat{\omega}))$ with $\hat{\omega} = \sum_{j=0}^{k-1} p_j \rho_j$.
   
\end{lemma}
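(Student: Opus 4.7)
The plan is to extract the explicit output structure of any feasible MIO $\cN$ and then compute $C_R$ of a block-diagonal state using the dual SDP for robustness. First I would use the fact that a mutually orthogonal ensemble admits perfect discrimination, so by Proposition~1 the feasibility condition $\sum_j p_j \tr[\cN(\rho_j)(\ketbra{j}{j}_B\ox I_{A'})]=P_{\rm suc}(\Omega)=1$ is saturated. Saturation forces $\tr[\cN(\rho_j)(\ketbra{j}{j}_B\ox I_{A'})]=1$ for every $j$ (since each term is upper-bounded by $1$ and they have weights $p_j$ summing to one). Positivity of $\cN(\rho_j)$ then implies its support lies entirely in the $\ket{j}_B$-block, so $\cN(\rho_j)=\ketbra{j}{j}_B\ox\sigma_j$ with $\sigma_j=\tr_B\cN(\rho_j)$. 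Averaging, $\cN(\hat{\omega})=\sum_j p_j\ketbra{j}{j}_B\ox\sigma_j$.

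Next I would compute $C_R(\cN(\hat{\omega}))$ using the dual SDP
$1+C_R(\rho)=\max\{\tr(\rho S):S\ge 0,\ S_{ii}=1\}$.
The key reduction is that because $\cN(\hat{\omega})$ is block-diagonal with respect to the computational basis on $B$, the optimization may be restricted to witnesses $S$ that are block-diagonal on $B$ as well. Concretely, let $\Delta_B(S)=\sum_j(\ketbra{j}{j}_B\ox I)S(\ketbra{j}{j}_B\ox I)$. Then $\Delta_B$ is CP so $\Delta_B(S)\ge 0$; it preserves the diagonal entries (hence feasibility); and $\tr(\cN(\hat{\omega})S)=\tr(\cN(\hat{\omega})\Delta_B(S))$ since $\cN(\hat{\omega})$ is already $B$-block-diagonal. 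So the optimum is attained at $S=\sum_j\ketbra{j}{j}_B\ox S^{(j)}$ with $S^{(j)}\ge 0$ and $(S^{(j)})_{ii}=1$.

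With this reduction, the objective decouples:
\begin{equation*}
\tr(\cN(\hat{\omega})S)=\sum_{j=0}^{k-1}p_j\tr(\sigma_j S^{(j)}),
\end{equation*}
and each block is optimized independently. Applying the dual SDP to each block gives
$1+C_R(\cN(\hat{\omega}))=\sum_j p_j(1+C_R(\sigma_j))=1+\eta$.
Taking $\log_2$ yields $C_{\max}(\cN(\hat{\omega}))=\log_2(1+\eta)=\mathbf{C}_{\rm MIO}(\Omega)$, which is the claim.

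The only nontrivial step is the block-diagonal reduction of the witness $S$; everything else is bookkeeping once the product form $\cN(\rho_j)=\ketbra{j}{j}_B\ox\sigma_j$ is established. I do not need convexity of $C_R$ as a separate ingredient because the SDP dual delivers the equality $C_R\big(\sum_j p_j\ketbra{j}{j}\ox\sigma_j\big)=\sum_j p_j C_R(\sigma_j)$ directly; in fact, this identity is the technical heart of why the ``post-discrimination'' average of $C_R$'s is exactly captured by $C_{\max}$ evaluated on the global output, justifying the definition of $\mathbf{C}_{\rm MIO}$.
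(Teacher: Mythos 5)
Your proof is correct, and it reaches the same two-stage structure as the paper's (first the product form $\cN(\rho_j)=\ketbra{j}{j}_B\ox\sigma_j$, then the identity $1+C_R\big(\sum_j p_j\ketbra{j}{j}\ox\sigma_j\big)=1+\sum_j p_j C_R(\sigma_j)$), but you justify both stages differently. The paper simply takes the product form as part of the definition (asserted in the main text for orthogonal ensembles), whereas you derive it from saturation of the success-probability constraint plus positivity of $\cN(\rho_j)$ -- a small but genuine addition. For the key identity, the paper invokes two quoted properties of the robustness: multiplicativity under tensor products, Eq.~\eqref{eq: ROC_multip}, to write $C_R(\sigma_j)=C_R(\ketbra{j}{j}\ox\sigma_j)$, and convex linearity of $C_R$ on classical--quantum states (citing an external reference) to pull the sum inside. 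You instead prove the identity self-containedly from the dual SDP \eqref{sdp: ROC_e_dual}: the pinching $\Delta_B$ is positive, preserves the diagonal constraint $S_{ii}=1$, and leaves the objective unchanged on a $B$-block-diagonal state, so the witness may be taken block-diagonal and the program decouples into the $k$ blocks, each yielding $1+C_R(\sigma_j)$. This buys a fully elementary, reference-free argument (and in effect reproves the convex-linearity fact the paper cites); the paper's route is shorter given the quoted lemmas. Both are sound.
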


\begin{proof}
According to the definition, the MIO $\cN$ achieving $\mathbf{C}_{\rm MIO}(\Omega)$ satisfies $\cN(\rho_j) = \ketbra{j}{j}\ox \sigma_j, \forall \rho_j \in \Omega$. Then we have 
\begin{subequations}
\begin{align}
   \mathbf{C}_{\rm MIO}(\Omega) & = \log_2 [1+ \eta] \\
   & =  \log_2 [1+ \sum_{j=0}^{k-1} p_j C_{R}(\sigma_j)]\label{eq0} \\
   & =  \log_2 [1+ \sum_{j=0}^{k-1} p_j C_{R}( \ketbra{j}{j}\ox \sigma_j)] \label{eq1}\\
   & =  \log_2 \left[1+  C_{R}\Big(\sum_{j=0}^{k-1} p_j \ketbra{j}{j}\ox \sigma_j\Big)\right] \label{eq2}\\
   & =  \log_2 \left[1+  C_{R}\Big(\sum_{j=0}^{k-1} p_j \cN(\rho_j)\Big)\right] \\
    & =  \log_2 [1+  C_{R}\Big(\cN(\hat{\omega})\Big)]\\
    & =  C_{\text{\rm max}}(\cN(\hat{\omega})) ,
\end{align}
\end{subequations}
where $\hat{\omega} = \sum_{j=0}^{k-1} p_j \rho_j$. 
We drive Eq.~\eqref{eq1} from Eq.~\eqref{eq0} because the robustness of coherence is multiplicative under the tensor product of states. And Eq.~\eqref{eq2} is driven from Eq.~\eqref{eq1} because the robustness of coherence is convex linear on classical-quantum states~\cite{gour2024resources}.
\end{proof}

In this Lemma~\ref{lem: roc_ensemble}, we reveal the connection between post-discrimination coherence under maximally incoherent operations, $\mathbf{C}_{\rm MIO}(\cdot)$, and the maximum relative entropy of coherence, $C_{\text{\rm max}}(\cdot)$, thereby demonstrating the rationality of this definition.

\renewcommand{\theproposition}{S\arabic{proposition}}

\begin{theorem}
\label{theo:dual_min}
For a mutually orthogonal $d$-dimensional pure-state ensemble $\Omega = \{(p_i,\ket{\psi_i}) \}_{i=0}^{k-1}$,
\begin{equation}
    \mathbf{C}_{\text{MIO}}(\Omega) + \mathbf{S}_{\rm min}(\Omega) \leq \log_2 d.
\end{equation}
\end{theorem}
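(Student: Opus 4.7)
The plan is to reduce the inequality to a statement about the max-relative-entropy coherence of the ensemble's average state $\hat{\omega} = \sum_{j=0}^{k-1} p_j \ketbra{\psi_j}{\psi_j}$, and then bound that quantity using mutual orthogonality. Since $\Omega$ consists of mutually orthogonal pure states, perfect discrimination is achievable and by Proposition~\ref{prop:QSD_IO} an optimizer $\cN \in \mathrm{MIO}$ for the definition of $\mathbf{C}_{\mathrm{MIO}}(\Omega)$ exists with the classical-quantum output $\cN(\ketbra{\psi_j}{\psi_j}) = \ketbra{j}{j}\otimes\sigma_j$. Lemma~\ref{lem: roc_ensemble} then delivers the key identity $\mathbf{C}_{\mathrm{MIO}}(\Omega) = C_{\max}(\cN(\hat{\omega}))$, exactly as in the uniform case treated by Theorem~\ref{thm:roc_upper}.

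Next I would invoke monotonicity of $C_R$ (hence of $C_{\max}$) under MIO, which is immediate from the primal SDP $1 + C_R(\rho) = \min\{\lambda : \rho \leq \lambda\tau,\ \tau\in\cI\}$ combined with $\cN(\cI)\subseteq\cI$: any feasible pair $(\lambda,\tau)$ for $\rho$ yields a feasible $(\lambda,\cN(\tau))$ for $\cN(\rho)$. Consequently $C_{\max}(\cN(\hat{\omega})) \leq C_{\max}(\hat{\omega})$, and the task reduces to proving the sharper bound $C_{\max}(\hat{\omega}) \leq \log_2 d - \mathbf{S}_{\min}(\Omega) = \log_2(d\,p_{\max})$.

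The crux is a one-line operator estimate enabled by mutual orthogonality. Since $\{\ket{\psi_j}\}$ is orthonormal, the nonzero spectrum of $\hat{\omega}$ coincides with $\{p_j\}_{j=0}^{k-1}$, so $\|\hat{\omega}\|_\infty = p_{\max}$ and therefore
\begin{equation*}
\hat{\omega} \;\leq\; p_{\max}\, I_d \;=\; (d\,p_{\max})\cdot \frac{I_d}{d}.
\end{equation*}
Because $I_d/d$ is incoherent, the primal SDP immediately gives $1 + C_R(\hat{\omega}) \leq d\,p_{\max}$, i.e.\ $C_{\max}(\hat{\omega}) \leq \log_2(d\,p_{\max})$. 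Chaining the three inequalities yields $\mathbf{C}_{\mathrm{MIO}}(\Omega) + \mathbf{S}_{\min}(\Omega) \leq \log_2 d$, which is the claim.

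I do not anticipate a serious technical obstacle: each step is short and built from ingredients already in the paper. The only point that needs a little care is the first — verifying that the supremum in the definition of $\mathbf{C}_{\mathrm{MIO}}(\Omega)$ is attained by an $\cN\in\mathrm{MIO}$ with the prescribed classical-quantum structure, so that Lemma~\ref{lem: roc_ensemble} applies verbatim; if attainment were unavailable, one could simply run the same chain of inequalities on a maximizing sequence. As a side remark, the monotonicity step $C_{\max}(\cN(\hat{\omega})) \leq C_{\max}(\hat{\omega})$ may be strict for non-uniform $\{p_j\}$, consistent with the observation in the main text that this generalized bound need not be tight outside the uniform regime covered by Theorem~\ref{thm:roc_upper}.
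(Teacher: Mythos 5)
Your proof is correct, but it follows a genuinely different route from the paper's. The paper completes $\{\ket{\psi_i}\}_{i=0}^{k-1}$ to a full orthonormal basis, writes out the channel's action on the extra basis states as $\cN(\ketbra{\psi_j}{\psi_j}) = (\sum_n q_n^j\ketbra{n}{n})\ox\rho_j$, and extracts the bound $C_{R}(\sigma_i)\leq\sum_{j\geq k}q_i^j$ from the requirement that $\cN(I/d)$ be incoherent, arriving at $\eta\leq p_{\max}(d-k)$. You instead reduce everything to the average state via Lemma~S3, invoke MIO-monotonicity of $C_{\max}$, and bound $C_{\max}(\hat{\omega})\leq\log_2(d\,p_{\max})$ from the operator inequality $\hat{\omega}\leq p_{\max}I_d$ (valid precisely because orthogonality makes $\{p_j\}$ the spectrum of $\hat{\omega}$). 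Your version is shorter and makes the role of mutual orthogonality transparent; it also cleanly separates the channel-independent content ($C_{\max}(\hat{\omega})\leq\log_2(d\,p_{\max})$) from the monotonicity argument, and in fact mirrors the decomposition the paper only deploys later in Proposition~S4. What the paper's more hands-on computation buys is a strictly sharper intermediate bound, $\eta\leq p_{\max}(d-k)$ versus your $\eta\leq d\,p_{\max}-1$ (these coincide exactly when $p_{\max}=1/k$), which the authors then use to argue that non-uniform ensembles can never saturate the duality relation — a conclusion your chain does not immediately yield without locating where the slack enters. Your handling of the attainment issue (compactness of the feasible MIO set, or a maximizing sequence) is adequate, and the classical-quantum structure $\cN(\ketbra{\psi_j}{\psi_j})=\ketbra{j}{j}\ox\sigma_j$ needed for Lemma~S3 is indeed forced by perfect discrimination of orthogonal pure states, as the paper notes.
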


\begin{proof}
Denote by $\cE = \{ \ket{\psi_i} \}_{i=0}^{d-1}$ an orthogonal basis of $\cH_A$ with $d_A = d$. Let $\cN$ be the optimal channel achieving $\mathbf{C}_{\text{MIO}}(\Omega)$ with $\cN(\ketbra{\psi_i}{\psi_i} ) = \ketbra{i}{i}\ox\sigma_i$, where $ \ketbra{i}{i} \in \cD(\cH_B)$, $d_B = k$ and $ i=0, \cdots, k-1$.
For these states $\ket{\psi_j}, j = k, \cdots, d-1$, i.e., $\ket{\psi_j} \in \cE$ but $\ket{\psi_j} \notin \Omega$, we directly assume:
\begin{equation}\label{eq: out_ensemble}
    \cN(\ketbra{\psi_j}{\psi_j}) = \left(\sum_{n=0}^{k-1} q_n^j\ketbra{n}{n}\right) \ox \rho_j, \quad j=k,\cdots d-1, 
\end{equation}
where $\sum_{n=0}^{k-1} q_n^j = 1$, $\rho_j \in \cD(\cH_{A'})$ and $d_{A'} = d$,  and we use the conclusion that $(\Delta \ox I) \circ \cN$ is also the optimal channel achieving $\mathbf{C}_{\text{MIO}}(\Omega)$.
Then we have,
\begin{equation}
\begin{aligned}
    \cN\left(\frac{I}{d}\right) &= \cN\left(\frac{1}{d}\sum_{i=0}^{d-1}\ketbra{\psi_i}{\psi_i}\right) \\
    &=\frac{1}{d} \left(\sum_{i=0}^{k-1}\cN(\ketbra{\psi_i}{\psi_i}) + \sum_{j=k}^{d-1}\cN(\ketbra{\psi_j}{\psi_j}) \right) \\
     &= \frac{1}{d} \left(\sum_{i=0}^{k-1} \ketbra{i}{i}\ox \sigma_i + \sum_{j=k}^{d-1} \left(\sum_{n=0}^{k-1} q_n^j\ketbra{n}{n}\right) \ox \rho_j\right) \\
     & = \frac{1}{d} \left(\sum_{i=0}^{k-1} \ketbra{i}{i}\ox \left(\sigma_i + \sum_{j=k}^{d-1} q_i^j \rho_j\right)\right) 
\end{aligned}
\end{equation}
Notice that $\cN$ is an MIO which yields $\cN\left(\frac{I}{d}\right)$ is incoherent. Thus, we have $\sigma_i + \sum_{j=k}^{d-1} q_i^j \rho_j$ is an incoherent state (unnormalized) for $i = 0,\cdots, k-1$. We can conclude that $C_{R}(\sigma_i) \leq \sum_{j=k}^{d-1} q_i^j$. Therefore, 
\begin{equation}
\label{eq: upper_ROC}
    \eta =  \sum_{i=0}^{k-1}p_iC_{R}(\sigma_i)  \leq \max \sum_{i=0}^{k-1}p_i\sum_{j=k}^{d-1} q_i^j.
\end{equation}
If each $\ket{\psi_i}$ in $\Omega$ is given with probability $p_i$, we have 
\begin{equation}
\label{eq:bound}
    \sum_{i=0}^{k-1}\sum_{j=k}^{d-1} p_i q_i^j \overset{\text{(i)}}{\leq} \sum_{j=k}^{d-1}\sum_{i=0}^{k-1}p_{\text{\rm max}}q_i^j = p_{\text{\rm max}} \cdot\sum_{j=k}^{d-1}\sum_{i=0}^{k-1}q_i^j = p_{\text{\rm max}} \cdot (d-k),
\end{equation}
where $p_{\text{\rm max}} = \max (\{p_0,p_1,...,p_{k-1}\})$. Thus, we have $\mathbf{C}_{\text{MIO}}(\Omega) = \log_2(1 + \eta) \overset{\text{(ii)}}{\leq}  \log_2[p_{\text{\rm max}}\cdot (d-k)+1] \overset{\text{(iii)}}{\leq} \log_2 d + \log_2 (p_\text{\rm max})$ and conclude
\begin{equation}
\label{eq:coherence_dis_bound}
    \mathbf{C}_{\text{MIO}}(\Omega) + \mathbf{S}_{\rm min}(\Omega)\leq \log_2 d,
\end{equation}
where $\mathbf{S}_{\rm min}(\Omega)$ is defined as the min-entropy of the average state of $\Omega$ with $\hat{\omega} = \sum_i p_i\ketbra{\psi_i}{\psi_i}$ and $\mathbf{S}_{\rm min}(\Omega) = \mathbf{S}_{\rm min}(\hat{\omega} ) = - \log_2 (p_\text{\rm max})$. 
\end{proof}

In Theorem~\ref{theo:dual_min}, we derive a bound between coherence and distinguishability, where $\mathbf{S}_{\rm min}$ can be interpreted as the `c-bits' at least gained through perfect discrimination. We briefly discuss the tightness of this bound in terms of whether the inequality can be saturated. When the state ensemble $\Omega$ has a non-uniform distribution, condition (i) in Eq.~\eqref{eq:bound} is strictly less than, leading to condition (ii) also being strictly less than. As a result, we have $\mathbf{C}_{\text{MIO}}(\Omega) + \mathbf{S}_{\rm min}(\Omega) < \log_2 d$, indicating that the bound is not tight when $\Omega$ is non-uniform. Thus, a uniform ensemble is necessary for saturating Eq.~\eqref{eq:coherence_dis_bound}.

\renewcommand\theproposition{\textcolor{blue}{2}}
\setcounter{proposition}{\arabic{proposition}-1}

\begin{theorem}
\label{thm:roc_upper}
For a mutually orthogonal $d$-dimensional pure-state ensemble $\Omega = \{(1/k,\ket{\psi_i}) \}_{i=0}^{k-1}$,
\begin{equation}
    \mathbf{C}_{\text{MIO}}(\Omega) + \mathbf{S}(\Omega) \leq \log_2 d,
\end{equation}
\end{theorem}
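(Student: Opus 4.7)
The plan is to closely follow the strategy already used for the min-entropy version (Theorem~\ref{theo:dual_min}), but specializing to the uniform distribution, where the min-entropy and the von Neumann entropy of the average state coincide. Concretely, for the uniform ensemble $\Omega = \{(1/k, \ket{\psi_i})\}_{i=0}^{k-1}$ of mutually orthogonal pure states, the average state $\hat{\omega} = \frac{1}{k}\sum_i \ketbra{\psi_i}{\psi_i}$ has exactly $k$ nonzero eigenvalues, each equal to $1/k$, so $\mathbf{S}(\Omega) = S(\hat{\omega}) = \log_2 k = \mathbf{S}_{\min}(\Omega)$. Hence it suffices to establish $\mathbf{C}_{\text{MIO}}(\Omega) \leq \log_2(d/k)$.

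First I would fix an optimal MIO channel $\cN$ achieving $\mathbf{C}_{\text{MIO}}(\Omega)$, so that $\cN(\ketbra{\psi_i}{\psi_i}) = \ketbra{i}{i} \otimes \sigma_i$ for $i = 0,\ldots,k-1$ with $\sum_i \tfrac{1}{k} C_R(\sigma_i) = \eta$. Next, I would extend $\{\ket{\psi_i}\}_{i=0}^{k-1}$ to an orthonormal basis $\{\ket{\psi_i}\}_{i=0}^{d-1}$ of $\cH_A$, and, invoking the freedom to replace $\cN$ by $(\Delta\otimes I)\circ\cN$, write $\cN(\ketbra{\psi_j}{\psi_j}) = \bigl(\sum_{n=0}^{k-1} q_n^j \ketbra{n}{n}\bigr) \otimes \rho_j$ for $j = k,\ldots,d-1$.

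The central step is to exploit the MIO constraint on the maximally mixed state. Since $I/d = \tfrac{1}{d}\sum_{i=0}^{d-1} \ketbra{\psi_i}{\psi_i}$ is incoherent, $\cN(I/d)$ must be incoherent. Computing this image block-diagonally in the $B$-register yields that, for each $i = 0,\ldots,k-1$, the operator $\sigma_i + \sum_{j=k}^{d-1} q_i^j \rho_j$ must be incoherent (up to normalization). By the definition of robustness of coherence, this forces $C_R(\sigma_i) \leq \sum_{j=k}^{d-1} q_i^j$. Plugging this bound into $\eta$ and using $p_i = 1/k$ together with $\sum_{i=0}^{k-1} q_i^j = 1$ for each $j \geq k$,
\begin{equation*}
    \eta \leq \frac{1}{k}\sum_{i=0}^{k-1}\sum_{j=k}^{d-1} q_i^j = \frac{1}{k}\sum_{j=k}^{d-1} 1 = \frac{d-k}{k},
\end{equation*}
so $\mathbf{C}_{\text{MIO}}(\Omega) = \log_2(1+\eta) \leq \log_2(d/k)$. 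Combining with $\mathbf{S}(\Omega) = \log_2 k$ gives the claim.

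The main obstacle, as in the min-entropy case, is justifying the assumed block-diagonal form of $\cN$ on the completing basis vectors: one needs the observation that post-composition with a dephasing on the $B$-register preserves both the MIO property and the discrimination-optimality and post-discrimination coherence values (since the marginals on $A'$ are unchanged and the measurement on $B$ is in the incoherent basis). Once that normalization is in place, the uniform-probability case makes the chain of inequalities $(i)$--$(iii)$ in Theorem~\ref{theo:dual_min} collapse into equalities except the very last, which yields exactly the clean bound $\log_2 d$; this is also why one expects, and will later verify in Proposition~\ref{prop:max_exam}, that the inequality is tight for the uniform distribution but strict otherwise.
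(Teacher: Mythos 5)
Your proposal is correct and follows essentially the same route as the paper: the paper proves the min-entropy bound $\mathbf{C}_{\text{MIO}}(\Omega) + \mathbf{S}_{\min}(\Omega) \leq \log_2 d$ via exactly the argument you give (dephasing normalization, incoherence of $\cN(I/d)$, and the resulting bound $C_R(\sigma_i) \leq \sum_{j\geq k} q_i^j$), and then obtains the theorem by noting $\mathbf{S}(\Omega) = \mathbf{S}_{\min}(\Omega) = \log_2 k$ for the uniform ensemble. You merely inline that specialization rather than citing the min-entropy theorem as a black box.
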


\begin{proof}
Note that for the
$\Omega = \{(1/k,\ket{\psi_i}) \}_{i=0}^{k-1}$, $\mathbf{S}(\Omega) = \mathbf{S}_{\rm min}(\Omega) = \log_2 k $, we arrive the conclusion immediately, combined with the Theorem~\ref{theo:dual_min}.
\end{proof}

This bound can be achieved in the example in Proposition \textcolor{blue}{3}.

\renewcommand\theproposition{\textcolor{blue}{3}}
\setcounter{proposition}{\arabic{proposition}-1}
\begin{proposition}
Let $\ket{\phi_i} = HX^{i}\ket{0}$ where $H, X$ are the $d$-dimensional Hadamard gate and generalized Pauli $X$ gate. For $\Omega = \{(1/k,\ket{\phi_i})\}_{i=0}^{k-1}, \, k\leq d$,
\begin{equation}
\mathbf{C}_{\text{MIO}}(\Omega) + \mathbf{S}(\Omega) = \log_2 d.
\end{equation}
\end{proposition}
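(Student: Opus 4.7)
The upper bound $\mathbf{C}_{\text{MIO}}(\Omega) + \mathbf{S}(\Omega) \leq \log_2 d$ follows directly from Theorem~\ref{thm:roc_upper}. Since $\{\ket{\phi_i}\}_{i=0}^{k-1}$ are orthonormal, the average state $\hat\omega = \frac{1}{k}\sum_{i=0}^{k-1}\ketbra{\phi_i}{\phi_i}$ has exactly $k$ nonzero eigenvalues each equal to $1/k$, giving $\mathbf{S}(\Omega) = \log_2 k$. It therefore suffices to exhibit an MIO channel that perfectly discriminates $\Omega$ while leaving $\log_2(d/k)$ cobits on the ancilla $A'$, thereby showing $\mathbf{C}_{\text{MIO}}(\Omega) \geq \log_2(d/k)$ and saturating the inequality.

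For clarity I describe the construction in the divisible case $d = km$; the general case should follow by a similar argument in which $\sigma_i$ is allowed to be mixed of the same average robustness. Set $\omega_m = e^{i2\pi/m}$ and define the ``small-QFT'' states $\ket{\psi_a} = \frac{1}{\sqrt{m}}\sum_{c=0}^{m-1} \omega_m^{ac}\ket{c}_{A'}$ for $a = 0, \ldots, m-1$, supported on the first $m$ computational basis vectors of $\cH_{A'}$. Define the Stinespring isometry $V : \cH_A \to \cH_B \otimes \cH_{A'}$ on the Fourier basis of $\cH_A$ by
\begin{equation*}
V\ket{\phi_i} = \ket{i \bmod k}_B \otimes \ket{\psi_{\lfloor i/k \rfloor}}_{A'},
\end{equation*}
and set $\cN \coloneqq (\Delta_B \otimes \id_{A'}) \circ \mathcal V$, where $\mathcal V(\rho) = V\rho V^\dagger$ and $\Delta_B$ is the fully dephasing map on $\cH_B$. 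For $i = 0, \ldots, k-1$ this produces $\cN(\ketbra{\phi_i}{\phi_i}) = \ketbra{i}{i}_B \otimes \ketbra{\psi_0}{\psi_0}_{A'}$, so $\cN$ achieves perfect discrimination and each $\sigma_i = \ketbra{\psi_0}{\psi_0}$ is maximally coherent on an $m$-dimensional subspace with $C_{R}(\sigma_i) = m - 1 = d/k - 1$, yielding $\eta = d/k - 1$ and hence $\mathbf{C}_{\text{MIO}}(\Omega) \geq \log_2(d/k)$.

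The main technical obstacle is to verify that $\cN$ is MIO, i.e., that $\cN(\ketbra{j}{j}) \in \cI$ for every $j \in \{0, \ldots, d-1\}$. My plan here is to expand $\ket{j} = \frac{1}{\sqrt{d}}\sum_i \omega_d^{-ij}\ket{\phi_i}$ with $\omega_d = e^{i2\pi/d}$, apply $V$ after the index substitution $i = ak + b$ with $a \in \{0, \ldots, m-1\}$ and $b \in \{0, \ldots, k-1\}$, and use the Fourier-character identity $\sum_{a=0}^{m-1}\omega_m^{a(c-j)} = m\,\delta_{c \equiv j \,(\mathrm{mod}\,m)}$ to collapse the $A'$ sum onto the single basis vector $\ket{j \bmod m}_{A'}$. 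This yields $V\ket{j} = \frac{1}{\sqrt{k}}\sum_{b=0}^{k-1}\omega_d^{-bj}\ket{b}_B \otimes \ket{j \bmod m}_{A'}$, whose subsequent dephasing by $\Delta_B$ is $(I_B/k) \otimes \ketbra{j\bmod m}{j \bmod m}_{A'} \in \cI$. The nontrivial step to execute carefully is precisely this Fourier bookkeeping forcing the $A'$-register output of $V\ket{j}$ to be a single computational basis state; once that is in hand, the dephasing on $B$ alone suffices to guarantee MIO, and combined with Theorem~\ref{thm:roc_upper} the equality $\mathbf{C}_{\text{MIO}}(\Omega) + \mathbf{S}(\Omega) = \log_2 d$ follows.
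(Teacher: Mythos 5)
Your construction is genuinely different from the paper's and, in the divisible case $d=km$, it is correct and rather elegant: the Fourier bookkeeping does work out ($V\ket{j}=\frac{1}{\sqrt{k}}\sum_b \omega_d^{-bj}\ket{b}_B\ox\ket{j\bmod m}_{A'}$, so dephasing on $B$ alone makes $\cN$ MIO), the map $V$ is an isometry because the states $\ket{i\bmod k}_B\ox\ket{\psi_{\lfloor i/k\rfloor}}_{A'}$ are orthonormal, and $\sigma_i=\ketbra{\psi_0}{\psi_0}$ has $C_R=m-1=d/k-1$, which saturates the bound. By contrast, the paper builds a measure-and-prepare channel $\cN(\rho)=\sum_{i<k}\tr(\Psi_i\rho)\ketbra{i}{i}\ox\sigma+\sum_{j\geq k}\tr(\Psi_j\rho)\,\Pi\ox\sigma'$ with $\sigma=I/d+\sum_{m\neq n}\frac{d-k}{k(d-1)d}\ketbra{m}{n}$ and $\sigma'$ chosen so the off-diagonals cancel on average, verifying MIO by the single computation $\cN(\ketbra{m}{m})=\Pi\ox I/d$. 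Your route is more physically transparent (coherence is reversibly rerouted to $A'$ rather than destroyed and re-prepared), but the paper's is dimension-agnostic.

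That is where the gap lies: the proposition claims the result for all $k\leq d$, and your argument only covers $k\mid d$. The sentence ``the general case should follow by a similar argument in which $\sigma_i$ is allowed to be mixed of the same average robustness'' is not a proof, and the extension is not routine for your construction: the block decomposition $i=ak+b$ and the character sum $\sum_a\omega_m^{a(c-j)}$ both rely on partitioning the $d$ Fourier basis states into $k$ equal groups, which is impossible when $k\nmid d$; moreover the required value $C_R(\sigma_i)=d/k-1$ is then non-integral, so no isometry can output a pure maximally coherent state of the right robustness on $A'$, and one is forced to a genuinely non-isometric (mixing) channel. You would need to exhibit such a channel explicitly and re-verify the MIO condition for it --- which is essentially what the paper's $\sigma,\sigma'$ construction does. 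Either supply that construction for general $k\leq d$, or restrict your statement to $k\mid d$ and cite a separate argument for the remaining cases.
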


\begin{proof}
    We begin by verifying that the states $\{\ket{\phi_i})\}_{i=0}^{k-1}$ are mutually orthogonal maximally coherent states. 
    The $d$-dimensional Hadamard gate $H$, Pauli $X$ gate and  Pauli $Z$ gate are defined as: $H = \frac{1}{\sqrt{d}}\sum_{i,j=0}^{d-1}\omega^{kj}\ketbra{k}{j}$, $X = \sum_{j=0}^{d-1}\ketbra{j+1}{j}$ and $Z = \sum_{j=0}^{d-1}\omega^{j}\ketbra{j}{j}$, where $\{ \ket{j}\}_{j=0}^{d-1}$ denotes computational basis and $\omega = e^{i2\pi/d}$.
    Firstly, we have the maximally coherent state $\ket{\Psi_d} = H\ket{0}$, and each $\ket{\phi_i} = HX^{i}\ket{0} = Z^{i}H\ket{0} = Z^{i}\ket{\Psi_d}$.
     Each $\ket{\phi_i}$ is also a maximally coherent state because $Z^{i}$ is a diagonal unitary~\cite{Peng_2016}. Since $\langle \phi_i | \phi_j\rangle = 0\ \forall i \neq j, i,j \leq d-1$, these states are mutually orthogonal.
    Then we construct an MIO channel $\cN$ to show that equality can be achieved in Eq.~\eqref{eq: upper_ROC} for this ensemble containing $k$ mutually orthogonal maximally coherent states. Let $\cE = \{ \ket{\phi_i} \}_{i=0}^{d-1}$ be a complete orthogonal basis of $\cH_A$ including all $d$ mutually orthogonal maximally coherent states. Note that $\ket{\phi_j} \in \cE$ but $\ket{\phi_j} \notin \Omega, j = k, \cdots, d-1$.
    We denote $ \ketbra{\phi_i}{\phi_i} $ with $\Psi_i$ and construct an MIO channel $\cN$ as follows:
     \begin{equation}
        \label{eq: up_MIO}
        \cN(\rho) = \sum_{i=0}^{k-1} \tr(\Psi_i \rho)\ketbra{i}{i} \ox \sigma + \sum_{j=k}^{d-1} \tr(\Psi_j \rho) \Pi \ox \sigma', \quad \forall \rho,
    \end{equation}
    where $\Pi = 1/k\sum_{i=0}^{k-1}\ketbra{i}{i}$, $\sigma = I/d + \sum_{m\neq n} \frac{d-k}{k(d-1)d} \ketbra{m}{n}$ and $\sigma' = I/d - \sum_{m\neq n} \frac{1}{(d-1)d} \ketbra{m}{n}$.
    First, we can check that $\cN$ is an MIO because $\cN(\ketbra{m}{m}) = \Pi \ox \frac{I}{d}$ where $m = 0, \cdots, d-1$. Then we can check that:
    \begin{equation}
        \label{eq: up_MIO_result}
         \cN(\Psi_i) =  \ketbra{i}{i} \ox \sigma, \quad i=0,\cdots, k-1.
    \end{equation}
    Thus, $\mathbf{C}_{\text{MIO}}(\Omega) \geq \sum_{i=0}^{k-1} C_{R}(\sigma)/k = (d-k)/k$. Combined with the upper bound provided by the Theorem~\ref{thm:roc_upper}, we can conclude that $\mathbf{C}_{\text{MIO}}(\Omega) = (d-k)/k$ and
    $\mathbf{C}_{\text{MIO}}(\Omega)  =  \log_2 (d/k) = \log_2 d - \mathbf{S}(\Omega)$.
\end{proof}

\renewcommand\theproposition{\textcolor{blue}{4}}
\setcounter{proposition}{\arabic{proposition}-1}

\renewcommand{\theproposition}{S\arabic{proposition}}
 \begin{proposition}
    For a mutually orthogonal $d$-dimensional pure-state ensemble $\Omega = \{(1/k,\ket{\psi_i}) \}_{i=0}^{k-1}$ with $k <d$, the necessary condition for $\Omega$ to saturate the coherence-distinguishability duality relation is 
    \begin{equation}
        C_{\rm max}(\hat{\omega}) + S(\hat{\omega}) = \log_2 d,
    \end{equation}
      where $\hat{\omega} =  1/k\sum_{i=0}^{k-1} \ketbra{\psi_i}{\psi_i}$.
\end{proposition}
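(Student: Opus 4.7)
The plan is to combine three ingredients: Lemma~\ref{lem: roc_ensemble}, which rewrites $\mathbf{C}_{\rm MIO}(\Omega)$ as $C_{\max}(\cN(\hat\omega))$ for the optimal MIO channel $\cN$; the MIO-monotonicity of $C_{\max}$; and the universal bound $C_{\max}(\rho) + S_{\min}(\rho) \le \log_2 d$ implicit in the proof of Theorem~\ref{theo:dual_min}. The crucial structural observation is that for a uniform orthogonal pure-state ensemble, the average state $\hat\omega = (1/k)\sum_i \ketbra{\psi_i}{\psi_i}$ has exactly $k$ nonzero eigenvalues, all equal to $1/k$, so that $S(\hat\omega) = S_{\min}(\hat\omega) = \log_2 k$, which coincides with $\mathbf{S}(\Omega)$.

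For the lower bound on $C_{\max}(\hat\omega) + S(\hat\omega)$, I would use Lemma~\ref{lem: roc_ensemble} to rewrite the saturation hypothesis as $C_{\max}(\cN(\hat\omega)) + S(\hat\omega) = \log_2 d$. Since $C_{\max}$ is monotone under MIO---an immediate consequence of data processing for $D_{\max}$ combined with the fact that MIO preserves $\cI$---we have $C_{\max}(\cN(\hat\omega)) \le C_{\max}(\hat\omega)$, yielding $C_{\max}(\hat\omega) + S(\hat\omega) \ge \log_2 d$.

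For the matching upper bound, I would invoke the inequality $C_{\max}(\rho) + S_{\min}(\rho) \le \log_2 d$, which holds for every state $\rho$. This follows from the dual SDP~\eqref{sdp: ROC_e_dual}, $1 + C_R(\rho) = \max\{\tr(\rho W) : W \ge 0,\, W_{ii} = 1\}$, together with the elementary estimate $\tr(\rho W) \le \|\rho\|_\infty \tr W = d \cdot 2^{-S_{\min}(\rho)}$. Applied to $\hat\omega$ and combined with $S_{\min}(\hat\omega) = S(\hat\omega)$ from the flat-spectrum observation above, this gives $C_{\max}(\hat\omega) + S(\hat\omega) \le \log_2 d$, closing the sandwich and hence forcing equality.

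The delicate point is that $C_{\max}(\rho) + S(\rho) \le \log_2 d$ fails for generic states---for instance, $\rho = \tfrac{1}{2}(I + \ketbra{\Psi_2}{\Psi_2})$ on $\cH_2$ violates it---so the argument genuinely relies on the uniform-orthogonal structure of $\Omega$, which forces $\hat\omega$ to be a flat-spectrum state and thus collapses $S_{\min}$ onto $S$. No deeper inequality between $C_{\max}$ and the von Neumann entropy is needed; once this structural coincidence is recognized, the proof reduces to a clean two-sided bound.
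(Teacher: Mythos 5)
Your proof is correct and its overall architecture matches the paper's: both halves of the ``sandwich'' appear in the paper's own argument. For the lower bound $C_{\max}(\hat\omega)+S(\hat\omega)\ge\log_2 d$ you use exactly the paper's route, namely Lemma~\ref{lem: roc_ensemble} to write $\mathbf{C}_{\rm MIO}(\Omega)=C_{\max}(\cN(\hat\omega))$ followed by MIO-monotonicity of $C_{\max}$ (which you, unlike the paper, actually justify via data processing for $D_{\max}$). The only genuine divergence is in the upper bound: the paper completes $\{\ket{\psi_i}\}_{i=0}^{k-1}$ to an orthonormal basis and writes $\tfrac{I}{d}=\tfrac{k}{d}\bigl(\hat\omega+\tfrac{d-k}{k}\tau\bigr)$ to read off $C_R(\hat\omega)\le\tfrac{d-k}{k}$ from the definition of robustness, whereas you derive the universal inequality $1+C_R(\rho)\le d\,2^{-S_{\min}(\rho)}$ from the dual SDP~\eqref{sdp: ROC_e_dual} via H\"older and then use the flat spectrum of $\hat\omega$ to identify $S_{\min}(\hat\omega)=S(\hat\omega)=\log_2 k$. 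The two derivations are equivalent in substance --- both ultimately exploit $\hat\omega\le I/k$, i.e.\ $\|\hat\omega\|_\infty=1/k$ --- but yours isolates a cleaner, ensemble-independent statement ($C_{\max}+S_{\min}\le\log_2 d$ for all states) and makes explicit why the von Neumann entropy can be substituted only because the spectrum is flat; your cautionary remark that $C_{\max}+S\le\log_2 d$ fails for general states is a worthwhile observation the paper does not make (though note your counterexample state should be normalized, e.g.\ $\tfrac{1}{2}\bigl(\tfrac{I}{2}+\Psi_2\bigr)$).
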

\begin{proof}
    Saturating the coherence-distinguishability duality relation with a given $\Omega$ means $\mathbf{C}_{\text{MIO}}(\Omega) + \mathbf{S}(\Omega) = \log_2 d$. Note that we restrict $k < d$ and avoid the trivial case of extracting all $\log_2 d$ c-bits from $\Omega$, namely $\mathbf{S}(\Omega) = \log_2 d$.
    Suppose the MIO achieving $\mathbf{C}_{\text{MIO}}(\Omega)$ is $\cN$, combined with Lemma~\ref{lem: roc_ensemble}, if
   $ C_{\rm max}(\hat{\omega}) + S(\hat{\omega}) < \log_2 d$, we have
   \begin{align}
       \mathbf{C}_{\text{MIO}}(\Omega) + \mathbf{S}(\Omega) & = C_{\rm max}(\cN(\hat{\omega})) + \mathbf{S}(\Omega) \\
       & = C_{\rm max}(\cN(\hat{\omega})) + S(\hat{\omega}) \\
       & \leq C_{\rm max}(\hat{\omega}) + S(\hat{\omega}) \\
       & < \log_2 d.
   \end{align}
    Then it is impossible for $\Omega$ to saturate the coherence-distinguishability duality relation. Then we have $C_{\rm max}(\hat{\omega}) + S(\hat{\omega}) \geq \log_2 d$ is the necessary condition.
    Besides, we have:
    \begin{align}
       \frac{I}{d} & = \frac{1}{d} [\sum_{i=0}^{k-1} \ketbra{\psi_i}{\psi_i} + \sum_{j=k}^{d-1} \ketbra{\psi_j}{\psi_j}] \\
       & = \frac{k}{d} (\hat{\omega} + \frac{d-k}{k}\tau),
   \end{align}
    where $\tau = \frac{1}{d-k}\sum_{j=k}^{d-1} \ketbra{\psi_j}{\psi_j}$. We have $C_{\rm max}(\hat{\omega}) \leq \log_2 (1+ \frac{d-k}{k}) = \log_2 d - \log_2 k$. That is $C_{\rm max}(\hat{\omega}) + S(\hat{\omega}) \leq \log_2 d$. Then we conclude that $C_{\rm max}(\hat{\omega}) + S(\hat{\omega}) = \log_2 d$ is the necessary condition.
\end{proof}

\end{document}